\documentclass{article}
\usepackage{macros}

\newtheorem{question}{Question}

\title{Tight Information Complexity of the Coin Problem in the Broadcast Model}
\author{Hadi Kazemi and Varun Jog
\vspace{0.2cm}\\
Department of Pure Mathematics and Mathematical Statistics\\
University of Cambridge}
\date{}
\begin{document}

\maketitle

\begin{abstract}
    We study distributed testing of $\ber(\alpha)$ versus $\ber(\beta)$ in the broadcast, or shared-blackboard, model. For protocols with constant advantage, we characterise up to universal constant factors the information complexity under either hypothesis for every pair $\beta<\alpha$. The characterisation shows that the two information costs can be quite different and identifies three parameter regimes, with optimal protocols based respectively on clean samples, a noisy binary symmetric channel, and an asymmetric $Z$-channel. The lower bounds rely on a novel mixed Hellinger--Jensen--Shannon inequality that may be of independent interest. We also characterise the constant-advantage information complexity of testing arbitrary discrete distributions via an optimisation problem over channels, and show that binary-output channels suffice. We obtain bounds for bounded likelihood-ratio distributions, and give general upper bounds in terms of $\chi^2$ divergence. As applications, we recover the broadcast-model set-disjointness lower bound, and derive stronger lower bounds in the multi-pass streaming setting for some problems considered in prior work.
\end{abstract}

\section{Introduction}\label{sec:introduction}

The \emph{coin problem} is the basic binary hypothesis testing task of distinguishing samples from $\ber(\alpha)$ and $\ber(\beta)$. In the centralised setting, where all samples are available to a single algorithm, the likelihood-ratio test gives the optimal decision rule. In this paper, we study the same testing problem in a distributed setting, where the samples are held by separate agents. Since a single agent typically lacks enough information to identify the bias correctly, the agents must aggregate information about their samples by communicating with each other. There are various models of communication considered in the literature, and in this paper we work in the \emph{broadcast model}, also called the \emph{shared blackboard model}. 

In the broadcast model, there are $n$ agents, agent $i$ observes one sample $X_i$, and the agents communicate by writing messages (bits) on a shared blackboard visible to all. A \emph{blackboard protocol} specifies the order in which agents write messages to the blackboard and how they decide what message to write. The order may be adaptive; i.e., the identity of the next agent who writes on the blackboard may depend on what is already written there. The bit written by an agent may depend on the contents on the blackboard, their own sample, and any available public and private randomness. The resulting transcript is denoted by $\Pi$. The protocol succeeds with constant advantage if the distributions of $\Pi$ under the two hypotheses are sufficiently far apart (usually a fixed constant) in the total variation distance.

In this paper, our focus will be on the \emph{information complexity} of the coin problem. Stated simply, the information complexity of a protocol $\Pi$ is the amount of information revealed by the transcript $\Pi$ about the dataset $X$. As noted before, agents \emph{have to} share some information via the blackboard, since otherwise they have no hope of solving the problem. Under hypothesis 0, the amount of revealed information is $I_\alpha(X; \Pi)$, where the subscript $\alpha$ means $X \sim \ber(\alpha)^{\otimes n}$. Likewise, under hypothesis 1, the revealed information is $I_\beta(X; \Pi)$. The \emph{information complexity of the coin problem under $\ber(\alpha)$} is the infimum of $I_\alpha(X; \Pi)$ over all sample sizes and all protocols that solve the coin problem. Likewise, the \emph{information complexity of the coin problem under $\ber(\beta)$} is the infimum of $I_\beta(X; \Pi)$ over all sample sizes and all protocols that solve the coin problem.

While information complexity is a well-motivated and fundamental quantity to study in itself, it turns out that bounds on information complexity are useful in other information-constrained distributed settings. Consider, for example, the communication complexity of distributed algorithms. Let $W$ be the public coin, let $M$ be the sequence of messages written on the blackboard, and let $\Pi=(W,M)$ be the full transcript. If the protocol communicates at most $C$ bits, then
\begin{align*}
    I(X;\Pi)
    =
    I(X;M\mid W)
    \leq
    H(M\mid W)
    \leq
    C.
\end{align*}
Thus, information-complexity lower bounds imply communication lower bounds. Similar ideas also work in the context of distributed differentially private algorithms. Furthermore, as the broadcast model is the most adaptive communication model, lower bounds on the information complexity under the broadcast model also imply lower bounds under less adaptive models of communication, such as sequentially-adaptive or non-adaptive models. The main question we study in this paper is the following:

\begin{question}
\label{ques:1}
What is the information complexity of the coin problem in the broadcast model under each hypothesis?
\end{question}

This question received recent attention in the work of \cite{streaming2025}. That work proved tight bounds, up to universal constant factors, on the information complexity of $\ber(0)$ versus $\ber(1/2)$. They were also able to prove an $\Omega(\min\{1/2, 1-\alpha\})$ lower bound on the information complexity of $\ber(0)$ versus $\ber(\alpha)$ under the $\ber(\alpha)$ distribution.\footnote{The stated bound in \cite{streaming2025} has the weaker form $\Omega(\min\{\alpha, \bar \alpha\})$, but this can be improved to $\Omega(\min\{1/2, 1-\alpha\})$ using the same proof presented in the paper. See \Cref{subsec:coin-comparison} for details.} For small $\alpha$, this bound is $\Omega(1)$. However, intuitively, as $\alpha \to 0$, the coin problem becomes harder and harder. Hence, it seems natural that any blackboard protocol solving this problem must share more information, and therefore the $\Omega(1)$ lower bound ought to be loose. This observation motivated our search for a stronger lower bound in this special case, and led to the broader question posed in \Cref{ques:1}.

A main message of this paper is that the answer depends on where the two biases $\alpha$ and $\beta$ lie in $[0,1]$: in different regimes, the best protocol reveals clean samples, extremely noisy samples, or samples passed through an asymmetric channel. Thus, even for Bernoulli testing, characterising the tight information complexity is non-trivial and exhibits interesting phenomena. 

\subsection{Our contributions}

We summarise our main contributions below.

\paragraph{(i) Exact information complexity of the coin problem:}
We solve \Cref{ques:1} completely in \Cref{thm:info_comp_ber}, identifying the tight information complexity under both hypotheses for constant advantage. For the coin problem of testing $\ber(\alpha)$ versus $\ber(\beta)$, let $\icht_\alpha$ and $\icht_\beta$ denote the minimum information costs, measured under $\ber(\alpha)$ and $\ber(\beta)$ respectively, among all blackboard protocols with constant advantage. Assume without loss of generality that $\beta<\alpha$. The answer depends on where $\alpha$ and $\beta$ lie in $[0,1]$: near $0$, straddling $1/2$, or near $1$. For ease of exposition, we state the expressions for $\icht_\alpha$. Let $\ent(x):=-x\log x-\bar x\log \bar x$ be the binary entropy function and let $\hel(x,y)$ denote the Hellinger divergence between $\ber(x)$ and $\ber(y)$.

\emph{Regime 1 ($0 \le \beta < \alpha < 1/2$):} In this regime, the optimal $\icht_\alpha$ is given by
\begin{align*}
\icht_\alpha \asymp \frac{\ent(\alpha)}{\hel(\alpha, \beta)}.
\end{align*}
The matching protocol is the obvious one: write clean samples on the blackboard. Since $n^*\asymp 1/\hel(\alpha,\beta)$ samples suffice for centralised testing, broadcasting $X_i$ for $i\leq n^*$ gives information cost $I(X;\Pi)=H(\Pi)=n^*\ent(\alpha)$. Our lower bound shows that this clean-bits protocol is optimal up to constants.

\emph{Regime 2 ($\beta < 1/2 < \alpha$):} In this regime, the optimal $\icht_\alpha$ is given by
\begin{align*}
\icht_\alpha \asymp \frac{\bar \alpha}{\hel(\alpha, \beta)}.
\end{align*}
Here writing clean bits is not optimal, since $\ent(\alpha)\gg \bar\alpha$ when $\alpha$ is close to 1. The better protocol writes very noisy versions $Y_i$ of the samples, obtained by passing each $X_i$ through a binary symmetric channel with crossover probability $1/2-\eta$. This changes the Bernoulli parameters to $\alpha_\eta=1/2+\eta(2\alpha-1)$ and $\beta_\eta=1/2-\eta(1-2\beta)$. Although this increases the required sample size to $n^*\asymp 1/\hel(\alpha_\eta,\beta_\eta)$, each message reveals much less information. The total information cost becomes
\begin{align*}
I(X; \Pi) = \sum_{i=1}^{n^*} I(X_i; Y_i) \asymp \frac{\ent(\alpha_\eta)-\ent(1/2+\eta)}{\hel(\alpha_\eta, \beta_\eta)}.
\end{align*}
Letting $\eta\to0$ gives the claimed $\bar\alpha/\hel(\alpha,\beta)$ scaling, and our lower bound shows that this very-noisy-bits protocol is optimal up to constants.

\emph{Regime 3 ($1/2 \le \beta < \alpha$):} In this final regime, the optimal information complexity is given by
\begin{align*}
\icht_\alpha \asymp \frac{\bar \alpha}{\bar \beta} \frac{\ent(\beta)}{\hel(\alpha, \beta)}.
\end{align*}
In this regime, clean bits again lose: they give information complexity $\asymp \ent(\alpha)/\hel(\alpha,\beta)$. By concavity of entropy and $0<\bar\alpha<\bar\beta$,
\begin{align*}
    \ent(\alpha) = \ent(\bar \alpha) \geq \frac{\bar \alpha}{\bar \beta} \ent(\bar \beta) = \frac{\bar \alpha}{\bar \beta} \ent(\beta).
\end{align*}
This can be much larger than the target expression when $\alpha$ is close to $1$. The very-noisy-bits strategy from regime 2 is also suboptimal, giving $\asymp (\bar\alpha/\bar\beta)/\hel(\alpha,\beta)$, which can be too large when $\beta\to1$. The optimal protocol is intuitive: In this regime, 0s are rare and hence reveal a lot of information, but 1s are plentiful and are not as informative. A natural strategy is to write the 1s cleanly and 0s noisily, for instance by writing an independent $\ber(\gamma)$-sample corresponding to each 0 for some $\gamma \in [0,1]$. Such a channel is called a $Z$-channel in information theory. The $Z$-channel obtained by setting $\gamma = \beta$ turns out to be the correct choice. We defer the detailed calculation to the proof of \Cref{thm:info_comp_ber}.

Thus, even achievability already exhibits the main phenomenon of different strategies being optimal in different regimes. The harder part is proving that no interactive blackboard protocol can do better. For this we revisit the Hellinger cut-and-paste framework of \cite{streaming2025,jayram2009hellinger}. Our main technical contribution is tightening this analysis by replacing a weak inequality therein by a strong one. Specifically, we prove a new lower bound for a skewed Jensen--Shannon divergence in terms of a mixed Hellinger divergence, established in \Cref{thm:mixed-hellinger-js}. This $f$-divergence inequality is the technical core of the paper and may be of independent interest.

\paragraph{(ii) Information complexity of testing $p$ vs $q$:}
We also consider the general hypothesis testing problem of testing $p$ versus $q$ with common domain $[k] = \{1,2,\dots, k\}$. Let $\cC(\cA, \cB)$ be the set of all channels with input alphabet $\cA$ and output alphabet $\cB$. If $X \sim p$, let $Y \sim Cp$ be the output of the channel when the input is $X$. We observe that the tight information complexity (for constant advantage) under $p$ satisfies
\begin{align*}
\icht_p \asymp \inf_{C \in \cC([k], \mathbb N)} \frac{I(X;Y)}{\hel(Cp, Cq)}.
\end{align*}
This observation follows in a straightforward manner from the known lower-bounding techniques established in \cite{jayram2009hellinger}, but we did not find it explicitly written in prior work.

It is unlikely that the optimisation problem on the right-hand side has an analytical expression in general, but we show lower and upper bounds on this value. The lower bound is shown using the data-processing inequality and the analytical expressions for the information complexity of the coin problem derived in (i). The upper bound is shown by calculating the information complexity when using a ``very noisy channel'' in $\cC([k],[2])$ that is suitably tuned using the knowledge of $p$ and $q$. Finally, we show that to compute $\icht_p$, it is enough to consider the smaller class of channels $\cC([k], [2])$ instead of $\cC([k], \mathbb N)$, making the resulting optimisation problem finite dimensional for any given $p$ and $q$.

\paragraph{(iii) Applications:} Information-complexity tools were recently used to derive memory lower bounds for streaming algorithms. Brown, Bun, and Smith \cite{brown2022strong} derived such lower bounds using a core task, which is a hypothesis testing problem between a structured distribution and an unstructured distribution. In their construction, the background is $\ber(1/2)^{\otimes d}$, and each hidden coordinate is fixed across the stream to an independent $\ber(1/2)$ value. Motivated by applications to learning from sparse features, we consider the case where the background is $\ber(\nu)^{\otimes d}$ and each hidden value is drawn once from $\ber(\nu)$. When $\nu \ll 1$, such as $\log d/d$, the data have roughly $\log d$ non-zero entries. We show that the information-complexity lower bound for this version gains an extra $\log(1/\nu)$ factor. This lower bound may be used for further downstream applications as in \cite{brown2022strong}, although we do not pursue this in this paper.

A second application is to the communication complexity of the $k$-party set-disjointness problem studied in \cite{braverman2015information}, via information-complexity direct-sum arguments. Here, each of $k$ agents holds a set $X_i \subseteq \{1,2,\dots,n\}$, and they wish to determine whether $\bigcap_{i=1}^k X_i = \emptyset$ in the broadcast model. Equivalently, representing each set by a bit string in $\{0,1\}^n$, the agents wish to compute
\begin{align*}
\bigvee_{j \in [n]} \bigwedge_{i \in [k]} X_{ij}.
\end{align*}
The key technical step in \cite{braverman2015information} is a conditional information-complexity lower bound for computing a single $\mathrm{AND}$ function under the following hard distribution: pick $Z \in [k]$ uniformly at random, set $X_Z = 0$, and draw the remaining players' bits independently from $\ber(1-1/k)$. Through a careful analysis they show that any protocol computing $\mathrm{AND}$ with constant advantage must have $I(X;\Pi\mid Z)\gtrsim \log k$. Our results give a significantly simpler route to the same using the tight $\Theta(\log k)$ information complexity, measured under the $\ber(1/\sqrt k)$ distribution, for the coin problem of testing $\ber(0)$ versus $\ber(1/\sqrt{k})$. Combined with known direct-sum arguments, this leads to a $\Omega(n\log k)$ lower bound. 

\subsection{Related work}
As noted earlier, a special case of the main question in our paper was studied in~\cite{streaming2025}, although the main focus of that work was to study statistical inference in streaming models by using the coin problem (called the bit-bias problem in their work) as a primitive. The coin problem primitive for streaming settings also appeared in an earlier work~\cite{brown2022strong}, which established fundamental memory lower bounds for several natural learning problems using the basic building block of testing $\ber(0)$ versus $\ber(1/2)$. \cite{brown2022strong} relied on a notion of information complexity in the streaming model proposed in~\cite{braverman2020coin} (see also \cite{braverman2024newinfocomp}). A key insight of~\cite{streaming2025} was the observation that it is enough to prove lower bounds for the information complexity in the broadcast model, as these bounds continue to hold for the information complexity expression in the streaming model. Using this, \cite{streaming2025} proved tight bounds on the information complexity of testing $\ber(0)$ versus $\ber(1/2)$ in the broadcast model. The strategy of proving broadcast model lower bounds to establish streaming model lower bounds for different statistical problems also appears in \cite{dagan2018detecting,garg2026unified}. We also note that the coin problem in the streaming setting has been studied without recourse to information complexity in prior work~\cite{Cov69, Kop75, steinberger2013distinguishability, brody2010coin, braverman2022tight}. We refer the reader to~\cite{braverman2022tight} for a more complete discussion of the coin problem along these lines.

We next place this closest line of work in two broader contexts: information-complexity lower bounds in communication complexity, and distributed hypothesis testing under communication or information constraints.

More broadly, our focus on the information complexity of the coin problem---a statistical inference problem---has been studied in prior work mainly as a tool to analyse the communication complexity of inference problems~\cite{woodruff2012tight,bar2004information,braverman2013tight,braverman2015information,hadar2019communication}. Another reason for studying information complexity is that it is indispensable when proving direct-sum theorems, which are used to show lower bounds on the complexity of simultaneously solving multiple copies of a problem~\cite{bravermanICM}. This approach has been used to give sharp lower bounds on the communication complexity of the set-disjointness problem, which is an important problem in communication complexity theory, in both the two-party \cite{bar2004information,jayram2009hellinger} and broadcast models \cite{braverman2015information}.

A second related area is distributed simple binary hypothesis testing, i.e., testing $p$ versus $q$ when observations are spread across multiple agents. A classical formulation of this problem uses a fusion center, a central server that receives messages from the agents~\cite{Tsi93}. This problem has been studied extensively in recent years under information constraints such as privacy and communication. Analytical and algorithmic results were presented in~\cite{KOV16, PenEtal23, PenEtal24, PenEtal24b, KazEtal25, GNCI26}. Non-adaptive and sequentially adaptive protocols were shown to be roughly equally powerful in this setting~\cite{KazEtal25}, but whether blackboard protocols confer any advantages is as yet unclear. However, these results are only non-trivial in settings where $p$ and $q$ are supported on $\{1, 2, \dots, k\}$ for $k$ much larger than 2.

Finally, the coin problem also has a rich history in theoretical computer science owing to its connection to computing the majority and threshold functions. Early work appears in \cite{shaltiel2010hardness,aaronson2010bqp}, and the term ``coin problem'' was coined in a FOCS paper~\cite{brody2010coin}. Any computational model that can compute the majority of $n$ bits can also solve the coin problem of testing $\ber(1/2 + \eps)$ versus $\ber(1/2 - \eps)$ up to the information-theoretic threshold of $\eps \asymp 1/\sqrt n$. In resource-constrained models, however, it may be impossible to compute the majority function exactly or even approximately. In such models, complexity theory explores more refined questions such as identifying the smallest distinguishable bias $\eps$ or the smallest amount of resources, such as sample size $n$, circuit size, branching program width, or memory necessary to solve the problem at a fixed $\eps$. Several prior works have considered these questions in models such as $\mathsf{AC}^0$ circuits~\cite{CohEtal14}, $\mathsf{AC}^0[\oplus]$ circuits~\cite{LimEtal21}, product tests~\cite{LeeVio18}, and read-once branching programs (ROBPs)~\cite{brody2010coin, steinberger2013distinguishability, braverman2022tight}.

\paragraph{Notation:} For a positive integer $n$, write $[n]=\{1,\dots,n\}$. All distributions in the paper are supported on discrete spaces. We use $\cX$ for a generic input space, and $\cP,\cQ,\cR$ for distributions on that space. If $X_1,\dots,X_n$ are independent samples from $\cP$, we write $X\sim\cP^{\otimes n}$, and use $\ber(\alpha)$ for the Bernoulli distribution with mean $\alpha$. For $x\in[0,1]$, write $\bar x=1-x$; in particular, $\bar\alpha=1-\alpha$ and $\bar\beta=1-\beta$. We write $\ent(\cdot)$ for Shannon entropy, $\mi(\cdot;\cdot)$ for mutual information, $\dkl(\cdot\|\cdot)$ for Kullback--Leibler divergence, $\dtv(\cdot,\cdot)$ for total variation distance, $\hel(\cdot,\cdot)$ for Hellinger divergence, and $\chisq(\cdot,\cdot)$ for chi-squared divergence. The set of channels from an input alphabet $\cA$ to an output alphabet $\cB$ is denoted by $\cC(\cA,\cB)$; applying a channel $C$ to a distribution $\cP$ gives the output distribution $C\cP$. We use standard asymptotic notation, and write $f\lesssim g$, $f\gtrsim  g$, and $f\asymp  g$ when the implicit constants are universal. We use $\cC([k],\nats)$ as shorthand for $\bigcup_{m\in\nats}\cC([k],[m])$: every channel in this class has a finite output alphabet, but there is no uniform bound on its size.

\paragraph{Paper structure:} The rest of the paper is organised as follows. \Cref{sec:preliminaries} collects the background on binary hypothesis testing, divergences, blackboard protocols, information complexity, and the Hellinger cut-and-paste framework used in the lower bounds. \Cref{sec:coin-information} proves the main result of the paper: a tight characterisation of the information complexity of the coin problem under both hypotheses, including the three parameter regimes and the matching protocols. \Cref{sec:general-information} discusses the information complexity for testing between arbitrary distributions. \Cref{sec:applications} gives the applications to streaming settings and to broadcast-model set-disjointness. \Cref{sec:conclusion} concludes with open questions and directions for future work.

\section{Preliminaries}\label{sec:preliminaries}
This section collects preliminary results needed for the rest of the paper. We first define binary hypothesis testing and related $f$-divergences. We then define blackboard protocols and information complexity, and finally review the Hellinger-based lower-bound framework that underlies our proofs.

\subsection{Divergences in simple binary hypothesis testing}

The coin problem can be viewed as a special case of simple binary hypothesis testing, which we define as follows.

\begin{definition}\label{def:simple_bht}
Consider two distributions $\cP$ and $\cQ$ supported on a discrete set $\cX$. Let $\Theta$ be uniformly distributed on the two-point set $\{\cP,\cQ\}$. Conditioned on $\Theta=\cR$, the random vector $X=(X_1,\dots,X_n)$ is distributed as $\cR^{\otimes n}$. Let $\Pi$ be a possibly randomised map from $\cX^n$ to an arbitrary discrete set $\boldsymbol{\Pi}$. Let $\Pi_\cP$ denote the distribution of $\Pi(X)$ when $X\sim \cP^{\otimes n}$, and define $\Pi_\cQ$ analogously. We say that $\Pi$ solves the hypothesis testing problem with advantage $\delta\in(0,1)$ if $\dtv(\Pi_\cP,\Pi_\cQ)\geq \delta$.
\end{definition}

Observe that $\Pi$ can be used to test between the two hypotheses with error $\frac{1-\dtv(\Pi_\cP,\Pi_\cQ)}{2}$. Usually, the output space $\boldsymbol{\Pi}$ is simply $\{\cP,\cQ\}$, but in this paper we often view $\Pi$ as a protocol transcript and hence allow its range to be an arbitrary discrete set. We are often interested in constant advantage, such as $\delta=1/3$. In simple binary hypothesis testing, many fundamental quantities, including the optimal Bayes error, sample complexity, and asymptotic error rates, are characterised by appropriate $f$-divergences. An $f$-divergence is defined as follows.

\begin{definition}[$f$-divergence \cite{powu2025}]\label{def:f-div}
Let $f:[0,\infty)\to\real\cup\{\infty\}$ be a convex function with $f(1)=0$, and let $f'(\infty):=\lim_{t\to\infty} f(t)/t$. For two distributions $\cP$ and $\cQ$, we define
\begin{equation*}
D_f(\cP\parallel \cQ)
=
\E\left[
f\left(\frac{d\cP}{d\cQ}(Z)\right)
\mathbbm{1}_{{\frac{d\cP}{d\cQ}(Z)<\infty}}
\right]
+
f'(\infty)\cP\left(\left\{x:\frac{d\cP}{d\cQ}(x)=\infty\right\}\right),
\end{equation*}
where $Z\sim\cQ$.
\end{definition}

Of particular importance to this paper is the Hellinger divergence.

\begin{definition}[Hellinger divergence]\label{def:hellinger}
For two distributions $\cP$ and $\cQ$ over a discrete domain $\cX$, the Hellinger divergence is defined as $\hel(\cP,\cQ)\coloneqq \frac{1}{2}\sum_{x\in\cX}\left(\sqrt{\cP(x)}-\sqrt{\cQ(x)}\right)^2$. Equivalently, $\hel(\cP,\cQ)=1-\sum_{x\in\cX}\sqrt{\cP(x)\cQ(x)}$.
\end{definition}

\begin{definition}[Skewed Jensen--Shannon divergence]\label{def:skewed-js}
For $\alpha\in[0,1]$ and two distributions $\cP$ and $\cQ$ on the same discrete space, the $\alpha$-skewed Jensen--Shannon divergence is
\begin{equation*}
\js_\alpha(\cP,\cQ)
\coloneqq
\alpha\dkl(\cP\|\alpha\cP+\alb\cQ)
+
\alb\dkl(\cQ\|\alpha\cP+\alb\cQ).
\end{equation*}
\end{definition}

Equivalently, $\js_\alpha(\cP,\cQ)$ is the mutual information $\mi(U;Y)$ when $U\sim\ber(\alpha)$ and, conditionally on $U$, the output $Y$ has law $\cP$ if $U=1$ and law $\cQ$ if $U=0$.

The following bound between $\dtv$ and $\hel$ is standard.

\begin{fact}\label{fact:tv-hel-ineq}
For any two distributions $\cP$ and $\cQ$, we have
\begin{equation*}
\hel(\cP,\cQ)
\geq
\frac{1}{2}\dtv^2(\cP,\cQ).
\end{equation*}
\end{fact}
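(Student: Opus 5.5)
We prove $\hel(\cP,\cQ)\ge \tfrac12\dtv^2(\cP,\cQ)$ through the Bhattacharyya coefficient $\mathrm{BC}:=\sum_x\sqrt{\cP(x)\cQ(x)}$, so that $\hel=1-\mathrm{BC}$ by \Cref{def:hellinger}. The plan has three steps.

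\textbf{Step 1: bound $\dtv$ by Cauchy--Schwarz.} Write
\begin{equation*}
|\cP(x)-\cQ(x)| = \left|\sqrt{\cP(x)}-\sqrt{\cQ(x)}\right|\cdot\left(\sqrt{\cP(x)}+\sqrt{\cQ(x)}\right).
\end{equation*}
Summing over $x$ and applying Cauchy--Schwarz gives
\begin{equation*}
2\dtv \le \left(\sum_x\left(\sqrt{\cP(x)}-\sqrt{\cQ(x)}\right)^2\right)^{1/2}\left(\sum_x\left(\sqrt{\cP(x)}+\sqrt{\cQ(x)}\right)^2\right)^{1/2}.
\end{equation*}

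\textbf{Step 2: express both sums through $\mathrm{BC}$.} The first sum equals $2\hel = 2(1-\mathrm{BC})$, and the second equals $2+2\mathrm{BC}$. Substituting, we obtain
\begin{equation*}
4\dtv^2 \le 4(1-\mathrm{BC})(1+\mathrm{BC}), \qquad\text{i.e.}\qquad \dtv^2 \le 1-\mathrm{BC}^2 .
\end{equation*}

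\textbf{Step 3: convert to $\hel$.} Since $\mathrm{BC}=1-\hel$, we have
\begin{equation*}
1-\mathrm{BC}^2 = 1-(1-\hel)^2 = 2\hel-\hel^2 \le 2\hel .
\end{equation*}
Combining with Step 2 gives $\dtv^2\le 2\hel$, which is the claimed inequality.

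There is no real obstacle. The only care needed is with normalisation: Step 2 uses that the paper's $\hel$ carries the factor $\tfrac12$, so that $\sum_x(\sqrt{\cP(x)}-\sqrt{\cQ(x)})^2 = 2\hel$. Since all distributions are discrete, every sum is a countable sum of nonnegative terms and the manipulations above are justified.
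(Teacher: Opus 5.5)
Your proof is correct. The Cauchy--Schwarz step, the identities $\sum_x(\sqrt{\cP(x)}-\sqrt{\cQ(x)})^2=2\hel$ and $\sum_x(\sqrt{\cP(x)}+\sqrt{\cQ(x)})^2=2+2\mathrm{BC}$, and the final conversion all hold under the paper's normalisations, including $\dtv=\frac12\sum_x|\cP(x)-\cQ(x)|$ as used in the bounded-likelihood proof. The paper gives no proof and simply cites the fact as standard; your argument is the usual Le Cam argument, and it actually yields the slightly sharper bound $\dtv^2\le\hel(2-\hel)$.
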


The Hellinger divergence between Bernoulli distributions satisfies the following useful estimate.

\begin{fact}[\cite{PenEtal24b}]\label{fact:hel_ber}
For any $\alpha,\beta\in[0,1]$ such that $\min(\alpha,\beta)\leq 1/2$, we have
\begin{equation*}
\hel(\ber(\alpha),\ber(\beta))
\asymp
\frac{(\alpha-\beta)^2}{\max\{\alpha,\beta\}}.
\end{equation*}
\end{fact}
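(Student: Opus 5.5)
The plan is a direct computation: split the Hellinger divergence into its two coordinate contributions and estimate each by rationalising the difference of square roots. By \Cref{def:hellinger},
\begin{equation*}
\hel(\ber(\alpha),\ber(\beta))
=
\frac{1}{2}\Bigl[\bigl(\sqrt{\alpha}-\sqrt{\beta}\bigr)^2+\bigl(\sqrt{\bar\alpha}-\sqrt{\bar\beta}\bigr)^2\Bigr].
\end{equation*}
Since $\sqrt a-\sqrt b=(a-b)/(\sqrt a+\sqrt b)$ and $\bar\alpha-\bar\beta=-(\alpha-\beta)$, both terms have numerator $(\alpha-\beta)^2$:
\begin{equation*}
\hel(\ber(\alpha),\ber(\beta))
=
\frac{(\alpha-\beta)^2}{2}\left[\frac{1}{(\sqrt\alpha+\sqrt\beta)^2}+\frac{1}{(\sqrt{\bar\alpha}+\sqrt{\bar\beta})^2}\right].
\end{equation*}
The degenerate case $\alpha=\beta$ (including $\max\{\alpha,\beta\}=0$) is trivial, since both sides vanish. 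So assume $\alpha\neq\beta$; then $M:=\max\{\alpha,\beta\}>0$ and both denominators are positive.

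For the first bracketed term, $M\le(\sqrt\alpha+\sqrt\beta)^2\le 4M$. Hence its contribution lies between $(\alpha-\beta)^2/(4M)$ and $(\alpha-\beta)^2/M$.

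The second bracketed term is where the hypothesis $\min\{\alpha,\beta\}\le 1/2$ is used. It gives $\max\{\bar\alpha,\bar\beta\}\ge 1/2$, so $(\sqrt{\bar\alpha}+\sqrt{\bar\beta})^2\ge 1/2$. Therefore the second contribution is nonnegative and at most $2(\alpha-\beta)^2$. Since $M\le 1$, this is at most $2(\alpha-\beta)^2/M$.

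Combining the two estimates yields
\begin{equation*}
\frac{(\alpha-\beta)^2}{8\max\{\alpha,\beta\}}
\le
\hel(\ber(\alpha),\ber(\beta))
\le
\frac{3(\alpha-\beta)^2}{2\max\{\alpha,\beta\}},
\end{equation*}
which is the claimed $\asymp$ with universal constants. There is no real obstacle here. The only point needing care is recognising that the assumption $\min\{\alpha,\beta\}\le1/2$ is exactly what keeps the second denominator bounded away from zero. Without it, for example with $\alpha,\beta$ both near $1$, the second term dominates and the estimate fails.
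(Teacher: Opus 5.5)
Your argument is correct and complete, and it gives explicit constants ($1/8$ and $3/2$). The paper offers no proof of this fact; it simply imports it from \cite{PenEtal24b}. Your computation is therefore a self-contained replacement rather than a variant of an argument in the text.

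Beyond being self-contained, your route makes it clear exactly where the hypothesis enters. Apply the two-sided bound $\max\{a,b\}\le(\sqrt a+\sqrt b)^2\le 4\max\{a,b\}$ to \emph{both} denominators in your identity. For all $\alpha\neq\beta$ in $[0,1]$, with no further assumption, this gives
\[
\frac{(\alpha-\beta)^2}{8}\left(\frac{1}{\max\{\alpha,\beta\}}+\frac{1}{\max\{\alb,\betb\}}\right)
\le
\hel(\ber(\alpha),\ber(\beta))
\le
\frac{(\alpha-\beta)^2}{2}\left(\frac{1}{\max\{\alpha,\beta\}}+\frac{1}{\max\{\alb,\betb\}}\right).
\]
The condition $\min\{\alpha,\beta\}\le1/2$ is precisely what makes the second summand negligible.

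The same display yields the complementary form $\hel(\ber(\alpha),\ber(\beta))\asymp(\alpha-\beta)^2/\max\{\alb,\betb\}$ when $\max\{\alpha,\beta\}\ge1/2$. The paper uses that form in regime~3, both for $\hel(\alb,\betb)$ and for $\hel(\ber(\alpha\beta),\ber(\beta^2))$ in \Cref{subsec:coin-upper-bounds}. So one computation covers every use of the fact in the paper without case analysis.

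One cosmetic caveat: at $\alpha=\beta=0$ the right-hand side is $0/0$, so ``both sides vanish'' relies on the obvious convention.
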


In this paper, we occasionally require the sample complexity of simple binary hypothesis testing. We define it and recall its standard characterisation below.

\begin{definition}[Sample complexity]\label{def:hyp_sample_comp}
Consider the simple binary hypothesis testing problem from \Cref{def:simple_bht}. For $\delta\in(0,1)$, the sample complexity $n^*(\cP,\cQ,\delta)$ is the smallest number of i.i.d.\ samples for which the optimal likelihood-ratio test solves the testing problem with advantage at least $\delta$. When the arguments are clear from context, we write $n^*$ or $n^*(\delta)$.
\end{definition}

\begin{fact}[\cite{LeCam86} (see also \cite{PenEtal24b})]\label{fact:hyp_sample_comp_fact}
Suppose that $\hel(\cP,\cQ)\leq 1/2$. Then, for $\delta \in [1/3,1)$, the sample complexity of simple binary hypothesis testing between $\cP$ and $\cQ$ satisfies $n^*(\delta)\asymp \frac{\log \frac{1}{1-\delta}}{\hel(\cP,\cQ)}$. For $\delta \in (0, 1/3)$, the sample complexity satisfies the bounds $\frac{\delta^2}{\hel(\cP, \cQ)} \lesssim n^*(\delta) \lesssim \frac{\delta}{\hel(\cP, \cQ)}$. 
\end{fact}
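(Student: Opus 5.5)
The plan is to reduce everything to two standard facts: Hellinger affinity tensorises, and total variation is sandwiched between Hellinger quantities. Write $\rho=1-\hel(\cP,\cQ)=\sum_x\sqrt{\cP(x)\cQ(x)}$ for the Hellinger affinity. Since $\sqrt{\cP^{\otimes n}(x)\cQ^{\otimes n}(x)}$ factorises over coordinates, we get $\hel(\cP^{\otimes n},\cQ^{\otimes n})=1-\rho^n$. The optimal likelihood-ratio test on $n$ samples achieves advantage exactly $\dtv(\cP^{\otimes n},\cQ^{\otimes n})$, so $n^*(\delta)$ is the least $n$ with $\dtv_n:=\dtv(\cP^{\otimes n},\cQ^{\otimes n})\ge\delta$. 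Throughout I will use three inequalities, each valid for any pair of distributions:
\begin{equation*}
\hel\le\dtv,\qquad \dtv\le\sqrt{2\hel},\qquad 1-\dtv\ge\tfrac12\rho^2 .
\end{equation*}
The first and third follow from $\sum\min(p,q)\le\sum\sqrt{pq}$ and from $\bigl(\sum\sqrt{pq}\bigr)^2\le\sum\min(p,q)\sum\max(p,q)\le 2\sum\min(p,q)$ via Cauchy--Schwarz. The second is Cauchy--Schwarz on $|p-q|=|\sqrt p-\sqrt q|(\sqrt p+\sqrt q)$.

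\emph{Upper bounds.} From $\dtv_n\ge 1-\rho^n\ge 1-e^{-n\hel}$, taking $n=\lceil\log\frac{1}{1-\delta}/\hel\rceil$ gives $\dtv_n\ge\delta$. For $\delta\ge1/3$ the logarithm is at least a constant and $\hel\le1/2$, so the ceiling costs only a constant factor. For $\delta<1/3$, use $1-e^{-t}\ge t/2$ for $t\le1$. Then $n=\lceil 2\delta/\hel\rceil$ suffices when $2\delta/\hel\ge1$. When $2\delta/\hel<1$ we have $n=1$, and directly $\dtv(\cP,\cQ)\ge\hel>\delta$. Both cases give $n^*\lesssim\delta/\hel$.

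\emph{Lower bounds.} For any $\delta$, $\delta\le\dtv_n\le\sqrt{2(1-\rho^n)}\le\sqrt{2n\hel}$, which gives $n^*\ge\delta^2/(2\hel)$. This settles the $\delta<1/3$ case, and also the $\delta\ge1/3$ case whenever $\delta$ is bounded away from $1$ (say $\delta\le 3/4$), since then $\log\frac1{1-\delta}\asymp1$. For $\delta>3/4$, use $1-\delta\ge1-\dtv_n\ge\frac12\rho^{2n}$. This gives
\begin{equation*}
2n\log\tfrac1{1-\hel}\ge\log\tfrac{1}{2(1-\delta)} .
\end{equation*}
Since $\hel\le1/2$ we have $\log\frac1{1-\hel}\le 2\hel$, and since $1-\delta<1/4$ we have $\log\frac{1}{2(1-\delta)}\ge\frac12\log\frac1{1-\delta}$. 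Together these give $n^*\gtrsim\log\frac1{1-\delta}/\hel$.

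The only delicate point is this splicing of the lower bound across $\delta\in[1/3,1)$. The affinity bound alone degenerates near $\delta=1/3$, because $2(1-\delta)$ can exceed $1$ there. The $\sqrt{2\hel}$ bound alone loses the logarithm as $\delta\to1$. Splitting at a fixed constant such as $3/4$ and using each bound on its own range resolves this. Everything else is routine bookkeeping of constants and ceilings, for which the hypothesis $\hel\le1/2$ is exactly what is needed.
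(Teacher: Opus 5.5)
Your argument is the standard one behind the references the paper cites; the paper itself gives no proof of this Fact and only points to Le Cam and Pensia--Jog--Loh. You tensorise the affinity, bound $\dtv$ below by $\hel$ and above by both $\sqrt{2\hel}$ and $1-\rho^2/2$, and splice the two lower bounds at a constant such as $\delta=3/4$. All of this checks out. That includes the $\delta\ge 1/3$ upper bound, where $\hel\le 1/2$ gives $\log\frac{1}{1-\delta}/\hel\ge 2\log\frac32$, so the ceiling costs only a constant.

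The step that fails is the last sentence of your small-$\delta$ upper bound: ``Both cases give $n^*\lesssim\delta/\hel$.'' In the case $2\delta/\hel<1$ you correctly get $n^*=1$, but $\delta/\hel$ can be arbitrarily small there. For example, take $\hel(\cP,\cQ)=1/2$ and let $\delta\to 0$. Then $n^*(\delta)=1$ while $\delta/\hel\to 0$, so no universal constant $C$ gives $1\le C\delta/\hel$.

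This is not a missing idea on your part. The upper bound in the statement is false as literally written in this regime, since $n^*\ge 1$ always. What your argument actually proves is $n^*(\delta)\le\max\{1,4\delta/\hel(\cP,\cQ)\}$, i.e.\ $n^*\lesssim\max\{1,\delta/\hel\}$, and you should state it in that form. The $\delta\ge1/3$ case escapes this caveat only because $\log\frac{1}{1-\delta}/\hel$ is bounded below by a constant there.

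A minor point in the same paragraph: with $n=\lceil 2\delta/\hel\rceil$, the product $n\hel$ can exceed $1$. So apply $1-e^{-t}\ge t/2$ at $t=2\delta\le 2/3$, and then use that $t\mapsto 1-e^{-t}$ is increasing.
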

For small advantage, the sample complexity depends on factors other than just the Hellinger divergence; see \cite[Example 8.1]{PenEtal24b}. In this paper, we are primarily interested in constant advantage, but we state our bounds for general $\delta$.

Finally, we note the following variational form for $\chi^2$ divergence.

\begin{fact}\label{fact:chisq_var}
For any two distributions $\cP$ and $\cQ$ with $\cP\ll \cQ$, we have
\begin{equation*}
\chisq(\cP,\cQ)
=
\sup_{\substack{f:\cX\to[-1,1]\ \var_\cQ(f(X))>0}}
\frac{\left(\E_\cP f(X)-\E_\cQ f(X)\right)^2}{\var_\cQ(f(X))}.
\end{equation*}
\end{fact}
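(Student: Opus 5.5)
The plan is a Cauchy--Schwarz argument for the upper bound, followed by an explicit (truncated) likelihood-ratio test function for the matching lower bound. Write $L=\frac{d\cP}{d\cQ}$, which is well defined since $\cP\ll\cQ$. Then $\E_\cQ L=1$ and $\chisq(\cP,\cQ)=\E_\cQ(L-1)^2$. For any bounded $f$, the key identity is
\begin{equation*}
\E_\cP f(X)-\E_\cQ f(X)=\E_\cQ\bigl[(L-1)f\bigr]=\E_\cQ\bigl[(L-1)(f-\E_\cQ f)\bigr],
\end{equation*}
where the second equality uses $\E_\cQ(L-1)=0$. Boundedness of $f$ and $\E_\cQ L=1<\infty$ make all these expectations finite.

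\emph{Upper bound.} Apply Cauchy--Schwarz under $\cQ$ to the identity. This gives
\begin{equation*}
\bigl(\E_\cP f-\E_\cQ f\bigr)^2\le \E_\cQ(L-1)^2\,\var_\cQ(f)=\chisq(\cP,\cQ)\,\var_\cQ(f).
\end{equation*}
Dividing by $\var_\cQ(f)>0$ shows the supremum is at most $\chisq(\cP,\cQ)$. If $\chisq(\cP,\cQ)=\infty$, this direction is trivial.

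\emph{Lower bound.} Equality in Cauchy--Schwarz would hold for $f=L-1$, but $L$ may be unbounded when $\cX$ is infinite. I would therefore truncate. The ratio is invariant under $f\mapsto af$ for $a\neq0$, so the constraint $f\in[-1,1]$ only requires $f$ to be bounded before rescaling. For $M>1$, set $g_M=(L-1)\mathbbm{1}\{L\le M\}$, which takes values in $[-1,M-1]$, and use $g_M/(M-1)$ as the test function. The identity gives $\E_\cP g_M-\E_\cQ g_M=\E_\cQ[(L-1)g_M]=\E_\cQ g_M^2$. Together with $\var_\cQ(g_M)\le \E_\cQ g_M^2$, this yields
\begin{equation*}
\frac{(\E_\cP g_M-\E_\cQ g_M)^2}{\var_\cQ(g_M)}\ \ge\ \E_\cQ g_M^2 \xrightarrow[M\to\infty]{} \E_\cQ(L-1)^2=\chisq(\cP,\cQ)
\end{equation*}
by monotone convergence. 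This argument also covers the case $\chisq=\infty$.

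\emph{Degenerate cases.} These are the only real obstacle. One must check that $\var_\cQ(g_M)>0$ for large $M$. If $\cP\neq\cQ$, then $L-1$ is not $\cQ$-a.s.\ zero. Hence $g_M$ is non-constant on $\operatorname{supp}\cQ$ for large $M$, since it is nonzero at some point and $\E_\cQ(L-1)=0$ forces a sign change once $M$ exceeds the relevant values of $L$. If $\cP=\cQ$, both sides are $0$, provided some non-constant $f$ exists (i.e.\ $\cQ$ is not a point mass). I would note this convention explicitly, and otherwise the proof is complete.
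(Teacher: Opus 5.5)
The paper gives no proof of this statement: it is quoted in the preliminaries as a standard fact. It is used only in the proof of \Cref{thm:chisq-upper-bounds}, with the roles of $\cP$ and $\cQ$ interchanged and on the finite alphabet $[k]$. Your proof is correct, including when $\chisq(\cP,\cQ)=\infty$, and it is the standard argument. The upper bound comes from Cauchy--Schwarz applied to $\E_\cP f-\E_\cQ f=\E_\cQ[(L-1)(f-\E_\cQ f)]$. The lower bound comes from the rescaled truncation $g_M=(L-1)\mathbbm{1}\{L\le M\}$ together with monotone convergence.

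Three small sharpenings:

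\begin{itemize}
\item $g_M$ is non-constant for every $M>1$, not only for large $M$. Since $L\not\equiv1$ on $\operatorname{supp}\cQ$ and $\E_\cQ(L-1)=0$, there is a point of positive $\cQ$-mass with $L<1$, where $g_M<0$. There is also one with $L>1$, where $g_M\geq0$.
\item The only genuinely degenerate case is when $\cQ$ is a point mass. Absolute continuity then forces $\cP=\cQ$, and the supremum is over an empty set. Otherwise, as you note, both sides vanish when $\cP=\cQ$.
\item On a finite alphabet, as in the paper's application, $L$ is bounded. Then $f=(L-1)/\max_{x:\cQ(x)>0}|L(x)-1|$ attains the supremum outright, so the truncation is only needed for countably infinite $\cX$.
\end{itemize}
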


\subsection{Problem setup}\label{sec:problem_setup}

We now describe the distributed hypothesis testing framework and the blackboard communication protocol. Recall that there are $n$ agents, where agent $i$ observes a sample $X_i\in\cX$ for each $i\in[n]$. 

\begin{definition}[Blackboard communication protocol]\label{def:blackboard_protocol}
    A public-coin blackboard communication protocol $\Pi$ is one in which all agents observe a random variable $W$ that is independent of $X$. Conditioned on each realisation $W=w$, the protocol is specified by a finite rooted binary tree with vertex set $\cV(\Pi)$. Each non-leaf vertex $v\in\cV(\Pi)$ is labelled by an agent $\sigma(v)\in[n]$ and is associated with a binary channel $K_v:\cX\to\Delta(\{0,1\})$. Each edge is labelled either $0$ or $1$, and every non-leaf vertex has exactly two children, one connected by a $0$-labelled edge and the other by a $1$-labelled edge.

    The protocol starts at the root. At a non-leaf vertex $v\in\cV(\Pi)$, agent $\sigma(v)$ broadcasts a bit $B_v$ sampled from $K_v(\cdot\mid X_{\sigma(v)})$, and the protocol moves to the child along the edge labelled $B_v$. It ends upon reaching a leaf. Let $M(X)$ denote the leaf reached, equivalently the full sequence of messages written on the blackboard. We define the full transcript to be
    \begin{align*}
        \Pi(X)
        =
        (W,M(X)).
    \end{align*}
    As usual, we use $\Pi$ for both the protocol and its random transcript when the input is clear. A private-coin protocol is the special case in which $W$ is constant.
\end{definition}

We now define the information cost of a protocol when the input $X=(X_1,\dots,X_n)$ is drawn i.i.d.\ from some distribution $\cP$.

\begin{definition}[Information cost]\label{def:information_cost}
    The information cost of a public-coin blackboard protocol $\Pi=(W,M)$ under $\cP$ is
    \begin{align*}
        I_\cP(X;\Pi)
        \coloneqq
        I(X;W,M)
        =
        I(X;M\mid W),
    \end{align*}
    where the equality uses the independence of $W$ and $X$.
\end{definition}

For a distribution $\cR$ on $\cX$, write $\Pi_\cR$ for the law of the full transcript when $X\sim\cR^{\otimes n}$. A protocol solves the simple binary hypothesis testing problem between $\cP$ and $\cQ$ with advantage $\delta$ if
\begin{align*}
    \dtv(\Pi_\cP,\Pi_\cQ)
    \geq
    \delta.
\end{align*}
Equivalently, if $M^w_\cR$ denotes the conditional law of the message transcript given $W=w$ and $X\sim\cR^{\otimes n}$, then
\begin{align*}
    \dtv(\Pi_\cP,\Pi_\cQ)
    =
    \E_W\left[\dtv(M^W_\cP,M^W_\cQ)\right].
\end{align*}

\begin{definition}[Information complexity of hypothesis testing]\label{def:info_comp_def}
    Let $\cP$ and $\cQ$ be distributions on $\cX$, let $\delta\in(0,1)$, and let $\cR\in\{\cP,\cQ\}$. The information complexity of testing $\cP$ versus $\cQ$ with advantage $\delta$, measured under $\cR$, is
    \begin{align*}
        \icht_\cR(\cP,\cQ,\delta)
        \coloneqq
        \inf I_\cR(X;\Pi).
    \end{align*}
    Here $X=(X_1,\dots,X_n)\sim\cR^{\otimes n}$. The infimum is over all $n\in\mathbb N$ and all public-coin blackboard protocols on $n$ samples whose full transcript satisfies $\dtv(\Pi_\cP,\Pi_\cQ)\geq\delta$. When the arguments are clear from context, we write $\icht_\cR$ or $\icht_\cR(\delta)$.
\end{definition}

This definition optimises over the number of samples as well as the protocol. Equivalently, it agrees with the limit of the optimal $n$-sample information cost as $n\to\infty$, since a protocol using $n$ samples can be viewed as a protocol using any larger number of samples by ignoring the additional samples.

\subsection{Lower-bound framework}\label{subsection:lower-bound-framework}

Lower bounds on information complexity have appeared in the prior work, and there is a well-established framework to derive such bounds. This framework was introduced in \cite{bar2004information,jayram2009hellinger} and used in \cite{garg2026unified, streaming2025, dagan2018detecting} and several others. In the following lemma, we use the notation $\Pi_\cP(X\subt{i}\cQ)$ for the transcript of protocol $\Pi$ when $X_j\simiid\cP$ for all $j\neq i$, and $X_i\sim\cQ$. The following lemma follows from Jayram's Hellinger cut-and-paste argument \cite[Theorem~7]{jayram2009hellinger}; a closely related product-distribution variant is stated explicitly in \cite[Lemma~22]{streaming2025}.

\begin{lemma}[\cite{jayram2009hellinger}; see also \cite{streaming2025}]\label{lem:cut-paste}
    Let $\Pi$ be an $n$-party protocol and $\cP$ and $\cQ$ be two distributions on $\cX$. We have the following inequality
    \begin{equation*}
        \sum_{i\in[n]}\hel\lp\Pi_\cP(X),\Pi_\cP( X\subt{i}\cQ)\rp
        \geq
        \kappa.\hel\lp\Pi_\cP(X),\Pi_\cQ(X)\rp
    \end{equation*}
    where $\kappa=\prod_{\ell=1}^\infty(1-\frac{1}{2^\ell})\approx 0.288788...$ is known as the digital search tree constant.
\end{lemma}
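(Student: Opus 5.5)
The plan is to follow Jayram's Hellinger cut-and-paste argument. It combines three ingredients: the rectangle (product) structure of blackboard transcripts, the Euclidean geometry of square-root densities, and a dyadic induction over the coordinates, which is where the constant $\kappa$ comes from.

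First I reduce to private coins and establish the rectangle property. Conditioning on the public coin $W=w$, every Hellinger quantity in the statement becomes an average over $w$ of its conditional version. This is because $\Pi=(W,M)$ and $W$ has the same law under every input distribution, so
\begin{align*}
1-\hel(\Pi_{\cR},\Pi_{\cS})=\E_W\bigl[1-\hel(M^W_{\cR},M^W_{\cS})\bigr].
\end{align*}
Hence it suffices to prove the inequality for a fixed $w$. For a deterministic tree with private randomness, induction on the depth shows that each leaf $\ell$ satisfies $\Pr[M(x)=\ell]=\prod_{i}q_i(x_i,\ell)$, where $q_i(x_i,\ell)$ is the product of the $K_v(b_v\mid x_i)$ over the vertices $v$ on the path to $\ell$ with $\sigma(v)=i$. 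For $S\subseteq[n]$, let $\Pi_S$ be the transcript law when $X_i\sim\cQ$ for $i\in S$ and $X_i\sim\cP$ otherwise. Then $\Pi_S(\ell)=\prod_{i\notin S}a_i(\ell)\prod_{i\in S}b_i(\ell)$, with $a_i(\ell)=\E_{\cP}q_i(X,\ell)$ and $b_i(\ell)=\E_{\cQ}q_i(X,\ell)$. The consequence I need is the pointwise cut-and-paste identity: for disjoint $A,B$,
\begin{align*}
\sqrt{\Pi_\emptyset(\ell)\,\Pi_{A\cup B}(\ell)}=\sqrt{\Pi_A(\ell)\,\Pi_B(\ell)}\quad\text{for every leaf }\ell.
\end{align*}

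Next I pass to geometry. Set $u_S=\sqrt{\Pi_S}$, a unit vector in $\ell^2$ with nonnegative entries, so that $\hel(\Pi_S,\Pi_T)=\tfrac12\|u_S-u_T\|^2$. Define $g(S)=\hel(\Pi_\emptyset,\Pi_S)$. The target is $\sum_i g(\{i\})\ge\kappa\, g([n])$. The crude step uses the identity above with the parallelogram and triangle inequalities:
\begin{align*}
g(A\cup B)=1-\langle u_A,u_B\rangle=\hel(\Pi_A,\Pi_B)\le 2\bigl(g(A)+g(B)\bigr).
\end{align*}
Iterating this over a balanced binary split of $[n]$ loses a factor of $2$ per level, which is too weak. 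I therefore need a sharper two-set inequality in which the loss at scale $2^{-\ell}$ is only a factor $(1-2^{-\ell})^{-1}$ or so.

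For that sharper inequality I would work leafwise with the four nonnegative numbers $x=u_\emptyset(\ell)$, $y=u_A(\ell)$, $z=u_B(\ell)$, $t=u_{A\cup B}(\ell)$, which satisfy $xt=yz$. I would exploit the product form $y=x\,r_A$, $z=x\,r_B$, $t=x\,r_Ar_B$, where $r_A,r_B\ge0$ are the square-root likelihood ratios contributed by the coordinates in $A$ and in $B$. This gives
\begin{align*}
g(A\cup B)=g(A)+g(B)-\sum_\ell x^2(1-r_A)(1-r_B).
\end{align*}
The task is then to control the cross term. I would bound it using Cauchy--Schwarz together with the smallness of $g$ on the subsets. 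Arranged in Jayram's fashion, this yields a recursion of the form $g(A\cup B)\le \bigl(g(A)+g(B)\bigr)/(1-2^{-\ell})$ at depth $\ell$ of a dyadic decomposition. Multiplying these factors across levels gives $\sum_i g(\{i\})\ge \prod_{\ell\ge1}(1-2^{-\ell})\,g([n])=\kappa\,g([n])$. Padding $n$ to a power of two with dummy agents that never speak is harmless.

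I expect this last step to be the main obstacle: obtaining the right per-level loss from the cross term rather than the factor $2$. If the direct leafwise estimate does not close, my fallback is to reproduce Jayram's proof of \cite[Theorem~7]{jayram2009hellinger} essentially verbatim. That proof needs only the rectangle identity established above, and it already delivers exactly the constant $\kappa$.
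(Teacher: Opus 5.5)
Your preliminary steps are correct, but the self-contained route fails at its key step. You average over the public coin, which is legitimate because each Hellinger affinity is the $W$-average of the conditional affinities. Your product form $\Pi_S(\ell)=\prod_{i\notin S}a_i(\ell)\prod_{i\in S}b_i(\ell)$ is exactly the rectangle property of the bit-input protocol in which agent $i$, holding $z_i\in\{0,1\}$, privately samples $X_i$ from $\cP$ or $\cQ$. That reduction is what lets one invoke Jayram's Theorem~7, and the paper gives no proof of this lemma beyond that citation, so your fallback is essentially the paper's route.

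The failing step is the recursion you aim for: no two-set inequality $g(A\cup B)\le (g(A)+g(B))/(1-2^{-\ell})$ with $\ell\ge 2$ can hold, however small the $g$-values are. Take $\cX=\{0,1,2\}$, $\cP=(0,p,1-p)$ and $\cQ=(1-pR^2,pR^2,0)$. In the protocol, agent $i$ broadcasts whether $X_i=1$; only if the answer is yes does agent $j$ broadcast whether $X_j=2$; nobody else ever speaks. A direct computation gives
\begin{align*}
g(\{i\})&=1-\sqrt{(1-p)(1-pR^2)}-pR,\qquad g(\{j\})=p(1-\sqrt{p}),\\
g(\{i,j\})&=1-\sqrt{(1-p)(1-pR^2)}-pR\sqrt{p}.
\end{align*}
As $p\to0$, the ratio $g(\{i,j\})/(g(\{i\})+g(\{j\}))$ tends to $(1+R^2)/((R-1)^2+2)$. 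At $R=1+\sqrt{2}$ this limit equals $1+1/\sqrt{2}$, and already at $p=0.01$, $R=2.4$ the ratio is about $1.66$. Whichever level you give a factor $(1-2^{-\ell})^{-1}\le 4/3$, place $i$ and $j$ on opposite sides of a merge at that level. The recursion then fails there, with all $g$-values of order $p$.

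The mechanism is interaction. Agent $j$ speaks only on the rare branch where agent $i$'s square-root likelihood ratio equals $R$. There $\langle u_\emptyset-u_A,u_\emptyset-u_B\rangle\approx -p(R-1)$, which is within a factor $\sqrt{2}$ of the Cauchy--Schwarz limit $2\sqrt{g(A)g(B)}$. So neither Cauchy--Schwarz nor smallness of $g$ can sharpen the per-merge estimate beyond a factor of about $1+1/\sqrt{2}$.

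Every level of a dyadic decomposition can be forced to lose such a factor. Hence any level-by-level recursion can certify at most $(1+1/\sqrt{2})^{-\log_2 n}$, which decays polynomially in $n$. A dimension-free $\kappa$ needs an argument that is not a local two-set inequality. As written, your proposal is a proof only through the citation of Jayram's theorem, not through the leafwise estimate you sketch.
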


Combining \Cref{lem:cut-paste} and \Cref{fact:tv-hel-ineq}, any protocol with advantage at least $\delta\in(0,1)$ in testing between $\cP$ and $\cQ$ satisfies
\begin{equation}\label{ineq:sum-hel-lb}
    \sum_{i\in[n]}\hel\lp\Pi_\cP(X),\Pi_\cP( X\subt{i}\cQ)\rp
    \geq
    \frac{\kappa}{2}\delta^2.
\end{equation}
Separately, by independence of the samples and subadditivity of conditional entropy, the information cost satisfies
\begin{equation}\label{ineq:mi-sup-add}
    \mi_\cP(X;\Pi) \geq \sum_{i\in[n]}\mi_\cP(X_i;\Pi).
\end{equation}
Thus, if for every public-coin protocol $\Pi$ and every coordinate $i$, $\mi_\cP(X_i; \Pi)$ satisfies
\begin{align}\label{eq:eta_h}
    \mi_\cP(X_i; \Pi) \gtrsim \eta \hel\lp\Pi_\cP(X),\Pi_\cP( X\subt{i}\cQ)\rp
\end{align}
for some $\eta$ (that may depend on $\cP$ and $\cQ$), then we may conclude the lower bound
\begin{align*}
    \icht_\cP(\delta) \gtrsim \eta \delta^2.
\end{align*}
This is precisely the approach used in prior work. In prior work, the key inequality in~\cref{eq:eta_h} is established using straightforward convexity arguments \cite{bar2004information,jayram2009hellinger,streaming2025}. For example, \cite{streaming2025} prove that when $\cP = \ber(1/2)$ and $\cQ = \ber(0)$, $\eta$ may be taken to be a constant. 

Our key observation is that identifying the largest possible $\eta$ will yield the best possible lower bound on the information cost, and finding this $\eta$ for the coin problem is the main problem tackled in this paper. In~\Cref{sec:coin-information} we shall see that the convexity-based bounds from prior work can be substantially tightened for the coin problem. We remark that a closely-related approach to proving lower bounds avoids computing the best $\eta$ and instead relies on computing the strong post-data processing constant of a particular channel \cite{braverman2016communication}. However, in~\Cref{sec:general-information} we shall show that finding the best $\eta$ is the correct approach, in that it yields not just a lower bound but a tight characterisation in the constant advantage regime. In particular, we observe in \Cref{subsec:coin-comparison} that the SDPI-based lower bounds are loose for the coin problem.

\section{Information complexity of the coin problem}\label{sec:coin-information}

This section proves the main result for the coin problem. We begin by stating the theorem and explaining why it is enough to analyse the information cost under one of the two hypotheses. We then prove the key mixed Hellinger--Jensen--Shannon inequality for binary-input channels. With this result in hand, we prove the lower bounds using the cut-and-paste framework from \Cref{sec:preliminaries}, and then give matching protocols for the three parameter regimes. The section ends by comparing the resulting bounds with the post-SDPI method and with earlier streaming lower bounds.

\subsection{Main theorem and proof strategy}\label{subsec:coin-main-theorem-strategy}

In this section we determine the information complexity of testing $\ber(\alpha)$ versus $\ber(\beta)$ in the broadcast model. Throughout the section we assume, without loss of generality, that $\beta<\alpha$, and we write $\alpha$ and $\beta$ for the corresponding Bernoulli distributions when this causes no confusion. For $\cR\in\{\ber(\alpha),\ber(\beta)\}$, the quantity $\icht_\cR(\alpha,\beta,\delta)$ denotes the minimum information cost, measured under $\cR$, among all public-coin blackboard protocols that distinguish the two hypotheses with advantage at least $\delta$.

The main result gives matching lower and upper bounds in the constant-advantage regime. For arbitrary $\delta$, we keep the dependence on $\delta$ explicit: the lower bounds scale as $\delta^2$, while the upper bounds inherit the dependence on $\delta$ from the centralised sample complexity in \Cref{fact:hyp_sample_comp_fact}.

\begin{theorem}[Information complexity of the coin problem]\label{thm:coin-info-section3}\label{thm:info_comp_ber}
Let $\beta<\alpha\in[0,1]$. For every $\delta\in(0,1)$,
\begin{align*}
    \icht_\alpha(\alpha,\beta,\delta)
    \gtrsim
    \delta^2
    \begin{cases}
    \frac{\ent(\alpha)}{\hel(\alpha,\beta)},&
    \beta<\alpha\leq1/2,
    \vspace{0.1cm}\\
    \frac{\alb}{\hel(\alpha,\beta)},&
    \beta<1/2<\alpha,
    \vspace{0.1cm}\\
    \frac{\alb}{\betb}\frac{\ent(\beta)}{\hel(\alpha,\beta)},&
    1/2\leq\beta<\alpha,
    \end{cases}
    \quad
    \icht_\beta(\alpha,\beta,\delta)
    \gtrsim
    \delta^2
    \begin{cases}
    \frac{\beta}{\alpha}\frac{\ent(\alpha)}{\hel(\alpha,\beta)},&
    \beta<\alpha\leq1/2,
    \vspace{0.1cm}\\
    \frac{\beta}{\hel(\alpha,\beta)},&
    \beta<1/2<\alpha,
    \vspace{0.1cm}\\
    \frac{\ent(\beta)}{\hel(\alb,\betb)},&
    1/2\leq\beta<\alpha .
    \end{cases}
\end{align*}
Conversely, there are explicit blackboard protocols with
\begin{align*}
    \icht_\alpha(\alpha,\beta,\delta)
    \lesssim
    u(\delta)
    \begin{cases}
    \frac{\ent(\alpha)}{\hel(\alpha,\beta)},&
    \beta<\alpha\leq1/2,
    \vspace{0.1cm}\\
    \frac{\alb}{\hel(\alpha,\beta)},&
    \beta<1/2<\alpha,
    \vspace{0.1cm}\\
    \frac{\alb}{\betb}\frac{\ent(\beta)}{\hel(\alpha,\beta)},&
    1/2\leq\beta<\alpha,
    \end{cases}
    \quad
    \icht_\beta(\alpha,\beta,\delta)
    \lesssim
    u(\delta)
    \begin{cases}
    \frac{\beta}{\alpha}\frac{\ent(\alpha)}{\hel(\alpha,\beta)},&
    \beta<\alpha\leq1/2,
    \vspace{0.1cm}\\
    \frac{\beta}{\hel(\alpha,\beta)},&
    \beta<1/2<\alpha,
    \vspace{0.1cm}\\
    \frac{\ent(\beta)}{\hel(\alb,\betb)},&
    1/2\leq\beta<\alpha,
    \end{cases}
\end{align*}
where one may take
\begin{align*}
    u(\delta)
    =
    \begin{cases}
    \delta,&0<\delta<1/3,\\
    \log\frac{1}{1-\delta},&1/3\leq\delta<1.
    \end{cases}
\end{align*}
The implied constants in $\gtrsim$ and $\lesssim$ are universal. For any fixed constant advantage, such as $\delta=1/3$, these lower and upper bounds match up to universal constant factors.
\end{theorem}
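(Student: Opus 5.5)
We propose to prove the lower bounds through the cut-and-paste framework of \Cref{subsection:lower-bound-framework}, and the upper bounds through the three explicit channel-based protocols described in the introduction.

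\textbf{Reduction by symmetry.} Relabelling $0\leftrightarrow1$ maps the problem $(\alpha,\beta)$ to $(\bar\beta,\bar\alpha)$ with $\bar\beta>\bar\alpha$. It exchanges regimes 1 and 3, fixes regime 2, and swaps which hypothesis is called ``$\alpha$''. Under this map, $\icht_\beta(\alpha,\beta)$ becomes the $\alpha$-side complexity of the relabelled problem. One checks that the $\icht_\beta$ column is exactly the $\icht_\alpha$ column after this substitution; for instance, regime 3 of $\icht_\beta$ is regime 1 of $\icht_\alpha$ applied to $(\bar\beta,\bar\alpha)$. Hence it suffices to establish the $\icht_\alpha$ bounds in all three regimes.

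\textbf{Lower bound.} By \eqref{ineq:sum-hel-lb} and \eqref{ineq:mi-sup-add}, it is enough to prove, for each coordinate $i$, the per-coordinate inequality \eqref{eq:eta_h} with $\eta=\mathrm{target}\cdot\hel(\alpha,\beta)$. Fix $i$, the public coin, and all other inputs, and average over them. The map $x_i\mapsto\Pi$ is then a channel $C$ with binary input, so $\mi_\alpha(X_i;\Pi)=\js_\alpha(C_1,C_0)$, where $C_b$ is the transcript law given $X_i=b$. Moreover,
\[
\Pi_\cP(X\subt{i}\beta)=\beta C_1+\bar\beta C_0,\qquad \Pi_\alpha=\alpha C_1+\bar\alpha C_0 .
\]
The core claim, a mixed Hellinger--Jensen--Shannon inequality, is therefore
\[
\js_\alpha(P,Q)\gtrsim \eta(\alpha,\beta)\,\hel\big(\alpha P+\bar\alpha Q,\;\beta P+\bar\beta Q\big)
\]
for all pairs $(P,Q)$, where $\eta=\ent(\alpha)$, $\bar\alpha$, or $\tfrac{\bar\alpha}{\bar\beta}\ent(\beta)$ in the respective regimes. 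Both sides are sums over outcomes $y$ of functions homogeneous of degree one in $(P(y),Q(y))$, so the inequality reduces to a one-variable inequality in the ratio $t=P(y)/(P(y)+Q(y))\in[0,1]$ (with suitable weights). After this reduction, the comparison is between
\[
\alpha\, t\log\frac{t}{\alpha t+\bar\alpha\bar t}+\bar\alpha\,\bar t\log\frac{\bar t}{\alpha t+\bar\alpha \bar t}
\quad\text{and}\quad
\Big(\sqrt{\alpha t+\bar\alpha\bar t}-\sqrt{\beta t+\bar\beta\bar t}\Big)^2 .
\]
We would analyse this by cases on $t$. Near $t=1/2$ both sides are quadratic in $(t-1/2)$. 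At the extremes $t\to0,1$ the entropy-type left side behaves like $\bar\alpha\log(1/\bar\alpha)$ or $\alpha\log(1/\alpha)$, while the right side is controlled by \Cref{fact:hel_ber}. Sharpening the constant to $\eta$ rather than the crude convexity constant is the main obstacle. We expect the difficult regime to be 3, where the worst $t$ is an interior point calibrated to $\beta$, mirroring the $Z$-channel with $\gamma=\beta$. We would first try to dominate the ratio of the two sides by its values at $t\in\{0,1\}$ and $t=1/2$ via monotonicity or convexity in $t$, treating each regime separately.

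\textbf{Upper bounds.} Each agent $i\le n$ writes $Y_i=C(X_i)$ on the blackboard, and the protocol then runs the optimal centralised test. By \Cref{fact:hyp_sample_comp_fact}, $n\asymp u(\delta)/\hel(C\alpha,C\beta)$ suffices, and the information cost is $n\, \mi(X_1;Y_1)$. The channel $C$ is chosen per regime:
\begin{itemize}
\item \emph{Regime 1:} the identity channel, giving cost $\ent(\alpha)/\hel$.
\item \emph{Regime 2:} $\mathrm{BSC}(1/2-\eta)$. Here $\mi(X;Y)\asymp\eta^2\alpha\bar\alpha(\ldots)\asymp\eta^2\bar\alpha$, since $\beta<1/2<\alpha$ gives $\alpha\gtrsim1$. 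Also $\hel\asymp\eta^2(\alpha-\beta)^2$, and $(\alpha-\beta)^2\asymp\hel(\alpha,\beta)$ because the gap straddles $1/2$. The ratio is therefore $\bar\alpha/\hel(\alpha,\beta)$ as $\eta\to0$.
\item \emph{Regime 3:} the $Z$-channel that passes $1$ cleanly and replaces $0$ by $\ber(\beta)$. Its output laws are $\ber(1-\bar\alpha\bar\beta)$ versus $\ber(1-\bar\beta^2)$, so \Cref{fact:hel_ber} applied to the complements gives $\hel\asymp\bar\beta(\bar\alpha-\bar\beta)^2/\bar\beta\asymp\bar\beta\,\hel(\alpha,\beta)$. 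The mutual information equals $\ent(1-\bar\alpha\bar\beta)-\bar\alpha\,\ent(\beta)\asymp\bar\alpha\,\ent(\beta)$, which we verify by a short entropy estimate.
\end{itemize}
Dividing the mutual information by the Hellinger divergence gives the claimed bound in each regime.
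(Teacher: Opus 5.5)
Your skeleton matches the paper's. You complement bits to reduce to the $\alpha$-side, combine cut-and-paste with a per-coordinate comparison of $\js_\alpha$ against a mixed Hellinger divergence, and use identity/BSC/$Z$-channel protocols. Your reduction of the $f$-divergence inequality to a pointwise inequality in the likelihood ratio is a valid substitute for the Harremo\"es--Vajda step, since both integrands are non-negative. There are, however, two genuine gaps.

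\textbf{The regime-3 protocol uses the wrong $Z$-channel.} You use $1\mapsto1$, $0\mapsto\ber(\beta)$. This matches the informal description in the introduction, but it is not the channel used in the proof. With it, an output $0$ certifies $X=0$, so
\[
\mi_\alpha(X;Y)\ \geq\ \Pr(Y=0)\,\dkl\bigl(\ber(0)\,\|\,\ber(\alpha)\bigr)=\bar\alpha\bar\beta\log(1/\bar\alpha),
\]
while your Hellinger divergence is only $\asymp(\alpha-\beta)^2$. The cost is therefore at least of order $\bar\alpha\bar\beta\log(1/\bar\alpha)/(\alpha-\beta)^2$. This exceeds the target $\bar\alpha\bar\beta\log(1/\bar\beta)/(\alpha-\beta)^2$ by the unbounded factor $\log(1/\bar\alpha)/\log(1/\bar\beta)$. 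For example, at $\beta=1/2$ and $\alpha\to1$ one gets $\mi\approx\frac{\bar\alpha}{2}\log(1/\bar\alpha)$, not $\asymp\bar\alpha\ent(\beta)=\bar\alpha\log2$, so your claimed entropy estimate is false.

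The paper instead uses $Y=X\wedge B$ with $B\sim\ber(\beta)$, i.e.\ $0\mapsto0$ and $1\mapsto\ber(\beta)$. The output laws are $\ber(\alpha\beta)$ and $\ber(\beta^2)$, with $\hel\asymp(\alpha-\beta)^2/\bar\beta$, and the tangent line of $\ent$ at $\beta$ gives $\ent(\alpha\beta)-\alpha\ent(\beta)\leq\bar\alpha\log(1/\bar\beta)$. Because $Y=0$ is also produced from $X=1$ with probability $\bar\beta$, the posterior of $X=0$ is only about $\bar\alpha/\bar\beta$. Each informative output therefore costs $\log(1/\bar\beta)$ rather than $\log(1/\bar\alpha)$. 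Through your symmetry reduction, the same error also breaks the regime-1 upper bound for $\icht_\beta$.

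\textbf{The lower bound's core inequality is mis-scaled and unproved.} The framework gives $\icht\gtrsim\eta\delta^2$, so $\eta$ must be the target itself. For instance, in regimes 1--2 it must be $\ent(\alpha)/\hel(\alpha,\beta)\asymp\alpha^2\log(1/\alpha)/(\alpha-\beta)^2$, not the target times $\hel(\alpha,\beta)$. With $\eta=\ent(\alpha)$ your claim is true but yields only $\delta^2\ent(\alpha)$.

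More seriously, the inequality itself is not proved. In regimes 1--2 your endpoints $t\in\{0,1/2,1\}$ do identify the extremal channels, but ``monotonicity or convexity in $t$'' is not an argument. In regime 3 the plan fails outright, because the extremal likelihood ratio is $P(y)/Q(y)=\bar\beta$ (output $0$ of the paper's $Z$-channel). Take $\bar\beta\to0$ with $\log(1/\bar\alpha)\gg\log(1/\bar\beta)$. Then the ratio at $t=0$ exceeds the target by a factor of order $\log(1/\bar\alpha)/\log(1/\bar\beta)$. The ratios at $t=1/2$ and $t=1$ exceed it by a factor of order $1/(\bar\beta\log(1/\bar\beta))$.

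The missing idea is the paper's reduction of regime 3 to the case $\beta\leq1/2$. Since $t\mapsto\js_t(Q,P)$ is concave and vanishes at $0$, we have $\js_\alpha(P,Q)=\js_{\bar\alpha}(Q,P)\geq\frac{\bar\alpha}{\bar\beta}\js_{\bar\beta}(Q,P)$. Applying the $\beta\leq1/2$ inequality with prior $\bar\beta$ and alternative $\bar\alpha$ then produces exactly the required mixed Hellinger term.

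What remains is the single inequality
\[
\js_\alpha(P,Q)\geq\frac{\alpha^2\log(1/\alpha)}{(\alpha-\beta)^2}\hel(\alpha P+\bar\alpha Q,\beta P+\bar\beta Q),\qquad \beta\leq1/2 .
\]
The paper proves this not by analysis in $t$ but by showing that the Bernoulli-pair deficit $f(\alpha,\beta,p,q)$ is concave in $\beta$. It then checks the boundary cases $\beta=0$, $\beta=1/2$, and the limit $\beta\to\alpha$, the last via a sub-Gaussian bound for $\ber(\alpha)$. Your proposal offers no substitute for this step, which is the technical heart of the theorem.
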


\begin{proof}[Proof sketch]The lower bounds follow the framework from \Cref{subsec:coin-lower-bounds}. For a fixed coordinate $i$, the transcript distribution can be viewed as the output of a channel $C\in\cC(\{0,1\},\nats)$ applied to the input bit. Thus the key quantity is
\begin{align*}
    \inf_{C\in\cC(\{0,1\},\nats)}
    \frac{\mi_\alpha(Z;C(Z))}
    {\hel(C(\alpha),C(\beta))}
    =
    \inf_{C\in\cC(\{0,1\},\nats)}
    \frac{\js_\alpha(C(1),C(0))}
    {\hel\lp\alpha C(1)+\alb C(0),\beta C(1)+\betb C(0)\rp}.
\end{align*}
The mixed Hellinger--Jensen--Shannon inequality in \Cref{subsec:coin-mixed-hellinger-js} identifies the optimal ratio in the three regimes.
Combining this with the Hellinger cut-and-paste lemma from \Cref{lem:cut-paste} yields the desired lower bounds.

It is enough to prove the bounds for $\icht_\alpha$ in the three regimes above. The corresponding bounds for $\icht_\beta$ follow by relabelling the two outcomes: after flipping all bits, $\alpha$ and $\beta$ are replaced by $\alb$ and $\betb$, and the information cost under one hypothesis is converted into the information cost under the relabelled hypothesis. The upper bounds are constructive and are proved in \Cref{subsec:coin-upper-bounds}. In the regime $\beta<\alpha\leq1/2$, the optimal protocol is the identity protocol: enough parties simply broadcast their samples. When $\beta<1/2<\alpha$, the identity protocol can reveal too much information, and the matching protocol instead sends very noisy versions of the input bits through a binary symmetric channel. In the final regime $1/2\leq\beta<\alpha$, the matching protocol uses an asymmetric $Z$-channel. These protocols show that the lower bounds are tight and that the optimal channel depends on the location of $\alpha$ and $\beta$ in $[0,1]$.
\end{proof}

\begin{remark}
Note that in regimes 1 and 3, the channels that minimise $\icht_\alpha$ and $\icht_\beta$ are different. Thus, in general, one cannot minimise both information costs simultaneously. This is fine for communication complexity lower bounds, where we can simply use the information cost under whichever hypothesis gives the larger lower bound.
\end{remark}

\subsection{The mixed Hellinger--Jensen--Shannon inequality}\label{subsec:coin-mixed-hellinger-js}

We now state the main $f$-divergence inequality used for the lower bound argument. Fix a binary-input channel $C\in\cC(\{0,1\},\nats)$ and let $Z\sim\ber(\alpha)$. The information revealed by the channel is
\begin{align*}
    \mi_\alpha(Z;C(Z))=\js_\alpha(C(1),C(0)).
\end{align*}
On the other hand, changing the input law from $\ber(\alpha)$ to $\ber(\beta)$ changes the output distribution from $\alpha C(1)+\alb C(0)$ to $\beta C(1)+\betb C(0)$. Thus the lower bound reduces to comparing a skewed Jensen--Shannon divergence with a mixed Hellinger divergence.

\begin{theorem}[Mixed Hellinger--Jensen--Shannon inequality]\label{thm:mixed-hellinger-js}
For any $\alpha\in[0,1]$, any $\beta\in[0,1/2]$, $\beta \neq \alpha$, and any two distributions $\cP$ and $\cQ$ on a discrete space $\cX$,
\begin{align*}
    \js_\alpha(\cP,\cQ)
    \geq
    \frac{\alpha^2\log(1/\alpha)}{(\alpha-\beta)^2}
    \hel(\alpha \cP+\alb \cQ,\beta \cP+\betb \cQ).
\end{align*}
\end{theorem}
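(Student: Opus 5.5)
The plan is to reduce the inequality to a pointwise (per-letter) scalar inequality and then verify that scalar inequality by a case analysis in the likelihood ratio. Write $p=\cP(x)$, $q=\cQ(x)$, $m_\alpha=\alpha p+\alb q$ and $m_\beta=\beta p+\betb q$. Both sides of \Cref{thm:mixed-hellinger-js} are sums over $x\in\cX$ of functions that are positively $1$-homogeneous in $(p,q)$. The key algebraic identity is $m_\alpha-m_\beta=(\alpha-\beta)(p-q)$, which gives
\begin{align*}
    \hel(\alpha\cP+\alb\cQ,\beta\cP+\betb\cQ)
    =\frac{(\alpha-\beta)^2}{2}\sum_x\frac{(p-q)^2}{(\sqrt{m_\alpha}+\sqrt{m_\beta})^2}.
\end{align*}
The factor $(\alpha-\beta)^2$ therefore cancels against the constant in the statement. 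It suffices to show, for every $x$,
\begin{align*}
    \alpha p\log\frac{p}{m_\alpha}+\alb q\log\frac{q}{m_\alpha}
    \geq
    \frac{\alpha^2\log(1/\alpha)}{2}\,\frac{(p-q)^2}{(\sqrt{m_\alpha}+\sqrt{m_\beta})^2}.
\end{align*}
The left side is the per-letter term of the $f$-divergence representation of $\js_\alpha$, hence nonnegative. The degenerate cases $\alpha\in\{0,1\}$ are trivial because the right side vanishes.

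Next I remove the dependence on $\beta$. Using $\beta\le1/2$, I bound $(\sqrt{m_\alpha}+\sqrt{m_\beta})^2\geq m_\alpha+m_\beta\geq m_\alpha+\betb q\geq m_\alpha+q/2$. Setting $t=p/q$ for $q>0$ and dividing by $q$, it suffices to prove
\begin{align*}
    \varphi(t)\coloneqq \alpha t\log\frac{t}{\alpha t+\alb}+\alb\log\frac{1}{\alpha t+\alb}
    \;\geq\;
    \frac{\alpha^2\log(1/\alpha)}{2}\,\frac{(t-1)^2}{\alpha t+\alb+1/2}
    \qquad (t\ge0).
\end{align*}
The case $q=0$ is the limit $t\to\infty$. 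There the left side is $\alpha p\log(1/\alpha)$ and the right side is at most $\alpha^2\log(1/\alpha)p/(2\alpha)$, so it is fine.

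To verify the scalar inequality, I record that $\varphi$ is convex with $\varphi(1)=\varphi'(1)=0$ and $\varphi'(t)=\alpha\log(t/(\alpha t+\alb))$. I also note $\varphi''(t)=\alpha\alb/(t(\alpha t+\alb))$. I then split into three ranges of $t$.

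\emph{Near $t=1$ (the quadratic range).} I compare second-order behaviour: $\varphi''(1)=\alpha\alb$, while the right side has curvature about $\alpha^2\log(1/\alpha)\cdot\frac{2}{3}$. These are compatible because $\alpha\log(1/\alpha)\le\alb$, which is just $\log y\le y-1$ with $y=1/\alpha$. I expect to push this to a whole interval, say $t\le T$, by using the lower bound $\varphi(t)\ge\int_1^t(t-s)\varphi''(s)\,ds$. The point is that $\varphi''(s)\ge\alpha\alb/(s(\alpha s+\alb))$ decays no faster than the curvature of the right side there.

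\emph{Large $t$ (the linear range).} Here $\varphi(t)\approx\alpha t\log(1/\alpha)$ while the right side is about $\frac{\alpha}{2}\log(1/\alpha)\,t$, so there is a factor-$2$ slack. I will make this precise via $\varphi(t)\ge\alpha t\log\frac{t}{\alpha t+\alb}-\alb\log(\alpha t+\alb)$ together with monotonicity of $\varphi(t)/(t-1)$ for $t>1$.

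\emph{$t\in[0,1)$ (the low-ratio range).} The right side is at most $\frac{\alpha^2\log(1/\alpha)}{2(\alb+1/2)}(1-t)^2$. Convexity gives $\varphi(t)\ge(1-t)^2\min_{s\in[0,1]}\varphi''(s)/2$, and $\min_{s\in[0,1]}\varphi''(s)\ge\alpha\alb$. So this range reduces to the same inequality $\alpha\log(1/\alpha)\lesssim\alb$, now with room to spare.

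I expect the main obstacle to be the intermediate range $1<t\lesssim 1/\alpha$ when $\alpha$ is small. There the curvature $\varphi''$ transitions from order $\alpha$ to order $1/t$, and the quadratic lower bound must be handed over to the linear one without losing more than the available constant. My first attempt is to integrate $\varphi''$ explicitly: the integral $\int_1^t(t-s)\frac{\alpha\alb}{s(\alpha s+\alb)}\,ds$ has a closed form in logarithms. I would then compare it with the right side at the crossover $t\asymp1/\alpha$, where both sides are of order $\log(1/\alpha)$. If the exact constant $1$ is too tight with the crude denominator bound $m_\alpha+q/2$, I would fall back on the sharper $(\sqrt{m_\alpha}+\sqrt{m_\beta})^2\ge m_\alpha+m_\beta+2\sqrt{m_\alpha q/2}$ in that range.
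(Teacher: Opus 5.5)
Your preliminary reductions are correct, and they take a different route from the paper's. The paper passes to Bernoulli pairs via the Harremo\"es--Vajda joint-range theorem, shows the defect is concave in $\beta$, and then checks $\beta=0$, $\beta\to\alpha$ and $\beta=1/2$ separately. Your steps all hold: the identity $m_\alpha-m_\beta=(\alpha-\beta)(p-q)$, the per-letter reduction, and the bound $(\sqrt{m_\alpha}+\sqrt{m_\beta})^2\ge m_\alpha+q/2$. The per-letter reduction even loses nothing, since both generators vanish together with their first derivatives at $t=1$. You end up needing the scalar inequality $\varphi(t)\ge R(t)$ with $R(t)=\frac{\alpha^2\log(1/\alpha)}{2}\frac{(t-1)^2}{\alpha t+\bar\alpha+1/2}$. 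This inequality implies the theorem for every $\beta\in[0,1/2]$ at once, including both boundary cases for which the paper needs separate lemmas. So all of the difficulty now sits in it, and you do not prove it. It does appear to be true; in my numerical checks $R/\varphi$ stayed below roughly $2/3$. The gap is therefore in the execution, not in the strategy.

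Two of your sketched steps fail as written.

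(i) For $t<1$, pairing $\min_{[0,1]}\varphi''=\alpha\bar\alpha$ with the denominator bound $\bar\alpha+1/2$ requires $\bar\alpha(\bar\alpha+1/2)\ge\alpha\log(1/\alpha)$. This is false for $\alpha$ above about $0.68$; at $\alpha=0.9$ it reads $0.06\ge0.095$. Near $\alpha=1$ there is no room to spare, because $\alpha\log(1/\alpha)\approx\bar\alpha$. This range can be rescued by comparing $\varphi''(s)=\frac{\alpha\bar\alpha}{s(\alpha s+\bar\alpha)}$ with $R''(s)=\frac{9\alpha^2\log(1/\alpha)}{4(\alpha s+\bar\alpha+1/2)^3}$ pointwise. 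Using $\alpha\log(1/\alpha)\le\bar\alpha$ and $s\le m:=\alpha s+\bar\alpha\le 1$, the comparison reduces to $4(m+1/2)^3\ge 9m^2$, which holds for all $m\ge0$.

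(ii) For $t>1$, your claim that $\varphi''$ ``decays no faster'' than $R''$ is false when $\alpha$ is small. At $t=1/\alpha$ one has $R''/\varphi''=\frac{9(1+\bar\alpha)\log(1/\alpha)}{4\bar\alpha(5/2-\alpha)^3}\sim 0.29\log(1/\alpha)$. So no curvature comparison works on a window $t\asymp 1/\alpha$. There you must carry slack accumulated at smaller $t$ through an integrated estimate with explicit constants. That hand-off is exactly what you flag as the main obstacle and leave as an intended computation, so the core inequality remains unproved.

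One possible shortcut for this range: for $t\ge1$ you have $m_\beta\ge q$. Then, without any lossy bound, the per-letter inequality at $\beta$ follows from the per-letter inequality at $\beta=0$. The paper proves the $\beta=0$ case separately, by a concavity-in-$\alpha$ argument.
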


The proof has two main steps. First, the mixed Hellinger term on the right-hand side is itself an $f$-divergence between $\cP$ and $\cQ$. Since $\js_\alpha(\cP,\cQ)$ is also an $f$-divergence, the joint-range reduction for pairs of $f$-divergences from \cite{harremoes2011pairs} reduces the theorem to the case $\cP=\ber(p)$ and $\cQ=\ber(q)$. In this binary reduction, it remains to show that
\begin{align*}
    f(\alpha,\beta,p,q)
    \triangleq
    \js_\alpha(p,q)
    -
    \frac{\alpha^2\log(1/\alpha)}{(\alpha-\beta)^2}
    \hel(\alpha p+\alb q,\beta p+\betb q)
    \geq 0.
\end{align*}
For fixed $\alpha,p,q$, the key structural fact is that $f(\alpha,\beta,p,q)$ is concave in $\beta$. It is then enough to verify non-negativity at the relevant boundary points: the limiting point $\beta=\alpha$, and the endpoints $\beta=0$ and $\beta=1/2$. The proof details are deferred to \Cref{app:mixed-hellinger-js}.

\subsection{Proofs of \Cref{thm:coin-info-section3} lower bounds}\label{subsec:coin-lower-bounds}

We now prove the lower-bound half of \Cref{thm:coin-info-section3}. Let $\Pi$ be an arbitrary public-coin blackboard protocol using $n$ samples and distinguishing $\ber(\alpha)$ from $\ber(\beta)$ with advantage at least $\delta$.

We first prove the two regimes in which $\beta<1/2$. The independence of the samples and \Cref{ineq:mi-sup-add} give
\begin{align*}
    \mi_\alpha(X;\Pi)
    \geq
    \sum_{i\in[n]}\mi_\alpha(X_i;\Pi).
\end{align*}
For each coordinate $i$, the transcript is the output of a binary-input channel whose two input-conditioned laws are $\Pi_\alpha(X\subt{i}1)$ and $\Pi_\alpha(X\subt{i}0)$. Hence
\begin{align*}
    \mi_\alpha(X_i;\Pi)
    =
    \js_\alpha\lp\Pi_\alpha(X\subt{i}1),\Pi_\alpha(X\subt{i}0)\rp.
\end{align*}
Since $\beta<1/2$, \Cref{thm:mixed-hellinger-js} applies and yields
\begin{align*}
    \mi_\alpha(X_i;\Pi)
    \geq
    \frac{\alpha^2\log(1/\alpha)}{(\alpha-\beta)^2}
    \hel\lp\Pi_\alpha(X),\Pi_\alpha(X\subt{i}\beta)\rp.
\end{align*}
Summing over $i$, using the cut-and-paste inequality \Cref{lem:cut-paste}, and then using \Cref{fact:tv-hel-ineq} together with the advantage assumption gives
\begin{align}\label{eq:alpha-lower-bound-beta-less-half}
    \mi_\alpha(X;\Pi)
    \geq
    \frac{\alpha^2\log(1/\alpha)}{(\alpha-\beta)^2}
    \sum_{i\in[n]}\hel\lp\Pi_\alpha(X),\Pi_\alpha(X\subt{i}\beta)\rp
    \gtrsim
    \frac{\alpha^2\log(1/\alpha)}{(\alpha-\beta)^2}\delta^2.
\end{align}

When $\beta<\alpha\leq1/2$, \Cref{fact:hel_ber} gives $\hel(\alpha,\beta)\asymp(\alpha-\beta)^2/\alpha$, and $\ent(\alpha)\asymp\alpha\log(1/\alpha)$. Therefore \Cref{eq:alpha-lower-bound-beta-less-half} implies
\begin{align*}
    \mi_\alpha(X;\Pi)
    \gtrsim
    \delta^2\frac{\ent(\alpha)}{\hel(\alpha,\beta)}.
\end{align*}
When $\beta<1/2<\alpha$, the same bound \Cref{eq:alpha-lower-bound-beta-less-half} applies. In this regime $\hel(\alpha,\beta)\asymp(\alpha-\beta)^2$, while $\alpha^2\log(1/\alpha)\asymp \log(1/\alpha) \asymp \alb$ for $\alpha>1/2$. Hence
\begin{align*}
    \mi_\alpha(X;\Pi)
    \gtrsim
    \delta^2\frac{\alb}{\hel(\alpha,\beta)}.
\end{align*}

It remains to prove the lower bound for $\mi_\alpha(X;\Pi)$ in the third case, where $1/2\leq\beta<\alpha$. Let
\begin{align*}
    \alpha'=\alb,
    \qquad
    \beta'=\betb,
\end{align*}
so that $0\leq\alpha'<\beta'\leq1/2$. For each coordinate $i$, write
\begin{align*}
    \cP_i=\Pi_\alpha(X\subt{i}1),
    \qquad
    \cQ_i=\Pi_\alpha(X\subt{i}0).
\end{align*}
Using the symmetry of the skewed Jensen--Shannon divergence,
\begin{align*}
    \mi_\alpha(X_i;\Pi)
    =
    \js_\alpha(\cP_i,\cQ_i)
    =
    \js_{\alpha'}(\cQ_i,\cP_i).
\end{align*}
Since $\js_t(\cQ_i,\cP_i)$ is the mutual information of a fixed binary-input channel with input law $\ber(t)$, it is concave in $t$. Therefore,
\begin{align*}
    \js_{\alpha'}(\cQ_i,\cP_i)
    \geq
    \frac{\alpha'}{\beta'}\js_{\beta'}(\cQ_i,\cP_i).
\end{align*}
We now apply \Cref{thm:mixed-hellinger-js} with theorem parameters $\beta'$ and $\alpha'$, and with the two distributions $\cQ_i,\cP_i$. This gives
\begin{align*}
    \mi_\alpha(X_i;\Pi)
    &\geq
    \frac{\alpha'}{\beta'}
    \cdot
    \frac{(\beta')^2\log(1/\beta')}{(\beta'-\alpha')^2}
    \hel\lp
        \beta'\cQ_i+\bar{\beta'}\cP_i,
        \alpha'\cQ_i+\bar{\alpha'}\cP_i
    \rp
    \\
    &=
    \frac{\alb\betb\log(1/\betb)}{(\alpha-\beta)^2}
    \hel\lp
        \Pi_\alpha(X\subt{i}\beta),
        \Pi_\alpha(X)
    \rp .
\end{align*}
Summing over $i$ and using \Cref{lem:cut-paste,fact:tv-hel-ineq} yields
\begin{align*}
    \mi_\alpha(X;\Pi)
    \gtrsim
    \delta^2
    \frac{\alb\betb\log(1/\betb)}{(\alpha-\beta)^2}.
\end{align*}
Finally, $\ent(\beta)=\ent(\betb)\asymp\betb\log(1/\betb)$ and, by \Cref{fact:hel_ber},
\begin{align*}
    \hel(\alpha,\beta)
    =
    \hel(\alb,\betb)
    \asymp
    \frac{(\alpha-\beta)^2}{\betb}.
\end{align*}
Thus
\begin{align*}
    \mi_\alpha(X;\Pi)
    \gtrsim
    \delta^2
    \frac{\alb}{\betb}
    \frac{\ent(\beta)}{\hel(\alpha,\beta)},
\end{align*}
as required.

It remains only to explain the lower bounds for $\mi_\beta(X;\Pi)$. Relabel the outcomes by replacing each bit with its complement, and again write
\begin{align*}
    \alpha'=\alb,
    \qquad
    \beta'=\betb.
\end{align*}
The original information cost under $\ber(\beta)$ becomes the information cost under $\ber(\beta')$ for the flipped problem of testing $\ber(\beta')$ against $\ber(\alpha')$, where $\alpha'<\beta'$. Thus we apply the preceding $\mi_\alpha$ lower bounds to the ordered pair $(\beta',\alpha')$. Under this relabelling, the regimes transform as
\begin{align*}
    \beta<\alpha\leq1/2
    &\longmapsto
    1/2\leq\alpha'<\beta',
    \\
    \beta<1/2<\alpha
    &\longmapsto
    \alpha'<1/2<\beta',
    \\
    1/2\leq\beta<\alpha
    &\longmapsto
    \alpha'<\beta'\leq1/2.
\end{align*}
Therefore the first, second, and third $\mi_\beta$ bounds follow respectively from the third, second, and first $\mi_\alpha$ bounds applied to the flipped problem.

\subsection{Proofs of \Cref{thm:coin-info-section3} upper bounds}\label{subsec:coin-upper-bounds}

We now prove the upper bounds for $\mi_\alpha(X;\Pi)$ in the three regimes of \Cref{thm:coin-info-section3}. All protocols used here are non-interactive: each selected party applies the same binary channel $C$ to its sample and writes the resulting bit on the blackboard. If $C(\alpha)$ and $C(\beta)$ denote the output laws of this channel under $\ber(\alpha)$ and $\ber(\beta)$, we use
\begin{align*}
    n
    =
    n^*(C(\alpha),C(\beta),\delta)
\end{align*}
agents and then run the optimal centralised test on their messages. By \Cref{fact:hyp_sample_comp_fact}, this contributes the factor $u(\delta)/\hel(C(\alpha),C(\beta))$ appearing in the upper bounds.

First consider the regime $\beta<\alpha\leq1/2$. Use the identity channel. Then $C(\alpha)=\ber(\alpha)$, $C(\beta)=\ber(\beta)$, and each message reveals $\ent(\alpha)$ information under $\ber(\alpha)$. Hence
\begin{align*}
    \mi_\alpha(X;\Pi)
    \lesssim
    u(\delta)\frac{\ent(\alpha)}{\hel(\alpha,\beta)}.
\end{align*}

Next consider the regime $\beta<1/2<\alpha$. Let $C_\ve$ be the binary symmetric channel with crossover probability $1/2-\ve$, where $\ve>0$ is fixed later. Then
\begin{align*}
    C_\ve(\alpha)
    =
    \ber\lp1/2+(2\alpha-1)\ve\rp,
    \qquad
    C_\ve(\beta)
    =
    \ber\lp1/2+(2\beta-1)\ve\rp.
\end{align*}
The estimates in \Cref{app:coin-upper-bound-channel-calculations} give
\begin{align*}
    \hel(C_\ve(\alpha),C_\ve(\beta))
    \asymp
    \ve^2(\alpha-\beta)^2,
    \qquad
    \mi_\alpha(X_i;C_\ve(X_i))
    \asymp
    \ve^2\alb.
\end{align*}
Taking $\ve$ to be a sufficiently small absolute constant, and using $\hel(\alpha,\beta)\asymp(\alpha-\beta)^2$ in this regime, yields
\begin{align*}
    \mi_\alpha(X;\Pi)
    \lesssim
    u(\delta)\frac{\alb}{\hel(\alpha,\beta)}.
\end{align*}

Finally suppose $1/2\leq\beta<\alpha$. Here we use the asymmetric Z-channel $\cZ_\beta$ defined by
\begin{align*}
    \cZ_\beta(0)
    &\sim
    \ber(0),
    &
    \cZ_\beta(1)
    &\sim
    \ber(\beta).
\end{align*}
Thus, the channel outputs $1$ only when its input is $1$ and an independent $\ber(\beta)$ coin is also $1$. In particular, under an input distributed as $\ber(\gamma)$, its output law is
\begin{align*}
    \cZ_\beta(\gamma)
    =
    \ber(\beta\gamma).
\end{align*}
Consequently, the two hypotheses induce the output laws $\ber(\alpha\beta)$ and $\ber(\beta^2)$. Since $\alpha-\beta<\betb$, \Cref{fact:hel_ber} gives
\begin{align*}
    \hel(\cZ_\beta(\alpha),\cZ_\beta(\beta))
    \asymp
    \frac{(\alpha-\beta)^2}{\betb}.
\end{align*}
Moreover, one use of the channel reveals
\begin{align*}
    \mi_\alpha(X_i;\cZ_\beta(X_i))
    =
    \ent(\alpha\beta)-\alpha\ent(\beta)
    \lesssim
    \alb\log(1/\betb)
\end{align*}
information under $\ber(\alpha)$. This is calculated in \Cref{app:coin-upper-bound-channel-calculations}. We therefore use
\begin{align*}
    n
    \asymp
    u(\delta)\frac{\betb}{(\alpha-\beta)^2}
\end{align*}
agents, each applying $\cZ_\beta$ to its sample. The resulting protocol satisfies
\begin{align*}
    \mi_\alpha(X;\Pi)
    \lesssim
    u(\delta)\frac{\alb\betb\log(1/\betb)}{(\alpha-\beta)^2}
    \asymp
    u(\delta)
    \frac{\alb}{\betb}
    \frac{\ent(\beta)}{\hel(\alpha,\beta)}.
\end{align*}

The upper bounds for $\mi_\beta(X;\Pi)$ follow by the same relabelling symmetry used for the lower bounds. 

\subsection{Comparison with prior lower-bound methods}\label{subsec:coin-comparison}

We start by comparing our bounds to the coin problem lower bound presented in \cite{streaming2025} for the special case of $\beta = 0$. While the paper primarily studies information complexity lower bounds in the streaming setting, their analysis depends crucially on a lower bound for the broadcast model. We restate this bound using our notation below.

\begin{theorem}[Strengthening of Theorem~15 in \cite{streaming2025}]\label{thm:colt25}
    Let $0 = \beta < \alpha \le 1$. The information complexity of the coin problem of testing $\ber(\alpha)$ versus $\ber(0)$ under $\ber(\alpha)$ satisfies 
        \begin{align*}
        \icht_{\alpha}(\delta)
        \gtrsim \min\{1/2, \bar \alpha\}\delta^2.
    \end{align*}
\end{theorem}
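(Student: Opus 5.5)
The plan is to read the statement off the lower-bound half of \Cref{thm:info_comp_ber}, specialised to $\beta=0$. Since $\beta=0<1/2$, only the first two regimes of that theorem can occur, and in both of them the proof in \Cref{subsec:coin-lower-bounds} gives \Cref{eq:alpha-lower-bound-beta-less-half}:
\begin{align*}
    \icht_\alpha(\delta)\gtrsim \delta^2\frac{\alpha^2\log(1/\alpha)}{\alpha^2}=\delta^2\log(1/\alpha).
\end{align*}
Equivalently, I can use the regime-wise forms $\ent(\alpha)/\hel(\alpha,0)$ and $\alb/\hel(\alpha,0)$. What remains is to compare these expressions with $\min\{1/2,\alb\}$.

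\textbf{Case $0<\alpha\leq 1/2$.} Here I would check that $\log(1/\alpha)\geq\log 2$, which is a universal constant. This gives $\icht_\alpha(\delta)\gtrsim\delta^2\gtrsim \tfrac12\delta^2$, matching $\min\{1/2,\alb\}=1/2$ up to constants. As a consistency check through the regime-wise form, note that $\hel(\alpha,0)=1-\sqrt{1-\alpha}\asymp\alpha$ by \Cref{fact:hel_ber}, and $\ent(\alpha)\asymp\alpha\log(1/\alpha)$. So $\ent(\alpha)/\hel(\alpha,0)\asymp\log(1/\alpha)\gtrsim 1$, which agrees.

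\textbf{Case $1/2<\alpha\leq1$.} Here I would use $\log(1/\alpha)\geq 1-\alpha=\alb$, which follows from $\log x\leq x-1$ (assuming natural logarithms; other bases only change constants). Alternatively, use the second-regime bound $\alb/\hel(\alpha,0)$ together with $\hel(\alpha,0)\leq 1$. Either way $\icht_\alpha(\delta)\gtrsim\alb\,\delta^2=\min\{1/2,\alb\}\delta^2$. The endpoint $\alpha=1$ is trivial, since the right-hand side vanishes.

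No step here is a real obstacle, because all the work is done by \Cref{thm:mixed-hellinger-js} through \Cref{thm:info_comp_ber}. If one instead wants to follow the footnote's claim that the original convexity proof of \cite{streaming2025} already yields $\min\{1/2,\alb\}$, the delicate point is the per-coordinate inequality \Cref{eq:eta_h} with $\eta\asymp\min\{1,\alb\}$. For that I would first try the following two steps.
\begin{itemize}
\item Use concavity of $t\mapsto\js_t(\cP,\cQ)$ to get $\js_\alpha\geq\min\{\alpha,\alb\}\cdot 2\js_{1/2}$.
\item Bound the Jensen--Shannon term below by a Hellinger term via $\js_{1/2}(\cP,\cQ)\gtrsim\hel(\cP,\cQ)\geq\hel(\alpha\cP+\alb\cQ,\cQ)$, using joint convexity of the Hellinger divergence.
\end{itemize}
The loss of the factor $\alpha$ for small $\alpha$ in the first step is exactly what the original proof incurred. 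To recover $\min\{1/2,\alb\}$, the mixing step would have to use $\hel(\alpha\cP+\alb\cQ,\cQ)$ directly rather than through concavity at $1/2$. I would not pursue this, since the corollary route above already gives the statement.
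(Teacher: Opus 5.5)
Your proof is correct, but it takes a different route from the one the paper intends. Specialising \Cref{eq:alpha-lower-bound-beta-less-half} to $\beta=0$ does give $\icht_\alpha(\delta)\gtrsim\delta^2\log(1/\alpha)$. Since $\log(1/\alpha)\geq\bar\alpha\geq\min\{1/2,\bar\alpha\}$ for every $\alpha\in(0,1]$, the statement follows, and your case split is not even needed.

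The paper, however, states this theorem as a baseline for what the \emph{prior} technique yields. That way \Cref{subsec:coin-comparison} can attribute the extra $\log(1/\alpha)$ to \Cref{thm:mixed-hellinger-js}. Its justification therefore avoids the new inequality: it cites either a simple modification of the convexity argument of \cite{streaming2025} or a data-processing argument.

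One way to realise the data-processing version is to reduce to $\alpha=1/2$. For $\alpha\leq1/2$, each agent passes its sample from $\ber(1/2)$ versus $\ber(0)$ through the channel $0\mapsto0$, $1\mapsto\ber(2\alpha)$, and the agents then run $\Pi$. The advantage is unchanged. By data processing, the information cost under $\ber(1/2)$ is at most $\mi_\alpha(X;\Pi)$. The original $\min\{\alpha,\bar\alpha\}\delta^2$ bound at $\alpha=1/2$ then gives $\Omega(\delta^2)$. For $\alpha>1/2$, the original bound already reads $\bar\alpha\delta^2$.

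Your route is shorter given the paper's machinery and gives the sharper $\log(1/\alpha)$. As a proof of this particular statement, though, it obscures the point that the strengthening is available from old tools alone.

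Your closing aside misdiagnoses the convexity route. Joint convexity gives $\hel(\alpha\cP+\bar\alpha\cQ,\cQ)\leq\alpha\hel(\cP,\cQ)$, not merely $\leq\hel(\cP,\cQ)$. This factor $\alpha$ exactly cancels the one lost in the concavity step $\js_\alpha(\cP,\cQ)\geq2\min\{\alpha,\bar\alpha\}\js_{1/2}(\cP,\cQ)$. The result is $\js_\alpha(\cP,\cQ)\gtrsim\min\{1,\bar\alpha/\alpha\}\hel(\alpha\cP+\bar\alpha\cQ,\cQ)\asymp\min\{1/2,\bar\alpha\}\hel(\alpha\cP+\bar\alpha\cQ,\cQ)$. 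This is precisely the ``simple modification'' carried out in \Cref{subsec:coin-comparison}, so there is no need to avoid concavity at $1/2$.
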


\begin{remark}
The bound in \cite{streaming2025} is stated as $\min\{\alpha,\alb\}\delta^2$. A simple modification of their proof, or an application of the data-processing inequality, gives the stronger bound in \Cref{thm:colt25}.
\end{remark}

For comparison, fix $\delta=1/3$. When $\alpha\leq1/2$, \Cref{thm:colt25} gives only the constant lower bound $\icht_\alpha\gtrsim1$, whereas \Cref{thm:coin-info-section3} gives the sharp characterisation
\begin{align*}
    \icht_\alpha
    \asymp
    \log(1/\alpha).
\end{align*}
The additional $\log(1/\alpha)$ factor captures the increase in information complexity as the testing problem becomes harder when $\alpha\to0$. This dependence is important in the applications in \Cref{sec:applications}. When $1/2<\alpha$, the lower bound in \Cref{thm:colt25} has the correct order $\alb$, matching the sharp characterisation in \Cref{thm:coin-info-section3}.

The loss for $\alpha\leq1/2$ comes from the two general-purpose inequalities used in the earlier argument. First, concavity of mutual information in the input distribution gives
\begin{align*}
    \mi_\alpha(X_i;\Pi)
    \gtrsim
    \alpha\mi_{1/2}(X_i;\Pi)
    \gtrsim
    \alpha\hel\lp
        \Pi_\alpha(X\subt{i}0),
        \Pi_\alpha(X\subt{i}1)
    \rp.
\end{align*}
Second, joint convexity of squared Hellinger divergence gives
\begin{align*}
    \hel\lp
        \Pi_\alpha(X\subt{i}0),
        \Pi_\alpha(X\subt{i}1)
    \rp
    \gtrsim
    \frac{1}{\alpha}
    \hel\lp
        \Pi_\alpha(X),
        \Pi_\alpha(X\subt{i}0)
    \rp.
\end{align*}
Together these yield only
\begin{align*}
    \mi_\alpha(X_i;\Pi)
    \gtrsim
    \hel\lp
        \Pi_\alpha(X),
        \Pi_\alpha(X\subt{i}0)
    \rp.
\end{align*}
By contrast, applying \Cref{thm:mixed-hellinger-js} directly gives
\begin{align*}
    \mi_\alpha(X_i;\Pi)
    \gtrsim
    \log(1/\alpha)
    \hel\lp
        \Pi_\alpha(X),
        \Pi_\alpha(X\subt{i}0)
    \rp,
\end{align*}
which recovers the missing logarithmic factor.

Next, we compare our results to those obtained in \cite{braverman2016communication}. To state the lower bound from that paper, we need the following definition.
\begin{definition}[Post-SDPI]\label{def:post_sdpi}
    Let $P_{Y|X}$ be a channel, let $X \sim P_X$ be the channel input, and $Y$ be the channel output. The input-dependent post-data processing constant (post-SDPI constant) is defined as
    \begin{align*}
        \eta(P_{Y|X},P_X)=\sup_{Z: X \to Y \to Z}\frac{\mi(X;Z)}{\mi(Y;Z)},
    \end{align*}
    where the supremum is over all Markov chains $X \to Y \to Z$.
\end{definition}
Note that $\mi(X; Z) \leq \mi(Y;Z)$ by the usual data processing inequality, so the post-SDPI constant is at most 1. When it is strictly smaller, we have the strong data processing inequality $\mi(X;Z) \leq \eta \mi(Y;Z)$ where $\eta = \eta(P_{Y|X}, P_X)$. The information complexity lower bound from \cite{braverman2016communication} stated in our notation is as follows:

\begin{theorem}[Theorem 1.1, \cite{braverman2016communication}]\label{thm:braverman_sdpi}
Let $V \sim \ber(1/2)$ and let $X \sim \cP$ when $V=1$ and $X \sim \cQ$ when $V = 0$. Let $\eta = \eta(P_{X|V}, P_V)$ be the post-SDPI constant and let $L = \max_x \left\{\frac{\cP(x)}{\cQ(x)}, \frac{\cQ(x)}{\cP(x)} \right\}$, which is assumed to be finite. Then
\begin{align*}
\min\{\icht_\cP(\delta), \icht_\cQ(\delta)\} \gtrsim \frac{\delta^2}{L\eta}.
\end{align*}
\end{theorem}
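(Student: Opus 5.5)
We would prove this by the per-coordinate route of \Cref{subsection:lower-bound-framework}. The goal is to establish \eqref{eq:eta_h} with $\eta \gtrsim 1/(L\eta(P_{X|V},P_V))$, and then combine \eqref{ineq:mi-sup-add}, \Cref{lem:cut-paste} and \Cref{fact:tv-hel-ineq}. By the symmetry between $\cP$ and $\cQ$ in the hypotheses (both $L$ and the post-SDPI constant for a uniform prior are symmetric), it suffices to lower bound $\icht_\cP(\delta)$. For public-coin protocols we condition on $W=w$ throughout. Every step below is either an identity or an inequality preserved under averaging over $W$ (Hellinger is jointly convex, and mutual information given $W$ is an average), so we may treat $\Pi$ as private-coin.

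First we fix a coordinate $i$ and view the transcript as a channel $K_i$ from $X_i$. Its conditional law given $X_i=x$ is $\Pi_\cP(X\subt{i}x)$, with the other coordinates i.i.d.\ $\cP$. We introduce an auxiliary $V\sim\ber(1/2)$ and set $X_i\sim\cP$ if $V=1$ and $X_i\sim\cQ$ if $V=0$, so that $X_i\sim\cM:=(\cP+\cQ)/2$. Then $V\to X_i\to\Pi$ is a Markov chain, and the definition of the post-SDPI constant (\Cref{def:post_sdpi}) gives
\begin{align*}
\mi(V;\Pi)\leq \eta\, \mi_\cM(X_i;\Pi).
\end{align*}
The left side equals $\js_{1/2}\lp\Pi_\cP(X),\Pi_\cP(X\subt{i}\cQ)\rp$. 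This is bounded below by a constant times $\hel\lp\Pi_\cP(X),\Pi_\cP(X\subt{i}\cQ)\rp$, by the standard comparison of the Jensen--Shannon divergence with Hellinger (or \Cref{thm:mixed-hellinger-js}-type reasoning at $\alpha=1/2$).

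Second, we compare the information under the input law $\cM$ with the information under $\cP$, which is where $L$ enters. We use the variational formula $\mi_\cR(X_i;\Pi)=\min_{\cS}\E_{x\sim\cR}\dkl(K_i(\cdot\mid x)\|\cS)$. Choosing $\cS=\Pi_\cP(X)$, the output law under $\cP$, gives
\begin{align*}
\mi_\cM(X_i;\Pi)\leq \sum_x \cM(x)\,\dkl\lp K_i(\cdot\mid x)\,\|\,\Pi_\cP(X)\rp \leq \frac{1+L}{2}\sum_x\cP(x)\,\dkl\lp K_i(\cdot\mid x)\,\|\,\Pi_\cP(X)\rp,
\end{align*}
since $\cM(x)/\cP(x)\leq (1+L)/2\leq L$ and the KL terms are non-negative. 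The last sum is exactly $\mi_\cP(X_i;\Pi)$. Chaining the displays gives
\begin{align*}
\mi_\cP(X_i;\Pi)\gtrsim \frac{1}{L\eta}\,\hel\lp\Pi_\cP(X),\Pi_\cP(X\subt{i}\cQ)\rp.
\end{align*}

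Finally, we sum over $i$ using \eqref{ineq:mi-sup-add}, and apply \eqref{ineq:sum-hel-lb} (i.e.\ \Cref{lem:cut-paste} with \Cref{fact:tv-hel-ineq}) to get $\mi_\cP(X;\Pi)\gtrsim \delta^2/(L\eta)$. The only delicate points are the following. The post-SDPI constant must be applied to the specific chain $V\to X_i\to\Pi$ in which the channel from $X_i$ depends on the other coordinates being drawn from $\cP$; this is legitimate because $\eta$ is a supremum over all downstream channels. The change-of-measure step must use the fixed reference $\Pi_\cP(X)$ rather than the $\cM$-output law. We expect this measure comparison to be the main (though mild) obstacle, since a naive concavity argument would lose more than a factor $L$.
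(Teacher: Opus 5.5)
Your proposal is correct and follows essentially the paper's argument: the auxiliary $V\sim\ber(1/2)$ with the chain $V\to X_i\to\Pi$ (other coordinates drawn from $\cP$), the post-SDPI bound, the comparison of $\js_{1/2}$ with Hellinger, the likelihood-ratio change of measure, and then \Cref{lem:cut-paste} with \Cref{fact:tv-hel-ineq}. Your variational step with the fixed reference $\Pi_\cP(X)$, which gives $\mi(X_i;\Pi)\le\frac{1+L}{2}\,\mi_\cP(X_i;\Pi)$ when $X_i\sim(\cP+\cQ)/2$, cleanly justifies the bound $\mi(Z_j;\Pi)\lesssim L\,\mi_\cP(X_j;\Pi)$ that the paper only asserts.
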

The proof of this result uses the same framework as in \Cref{subsection:lower-bound-framework}, but with a slight modification. Let $V \sim \ber(1/2)$, and generate $Z_j \sim \cP$ if $V=1$ and $Z_j \sim \cQ$ if $V=0$. All other $Z_i$'s for $i \in [n]$ and $i \neq j$ are i.i.d.\ according to $\cP$. We have the Markov chain $V \to Z_j \to \Pi$, giving the inequality
\begin{align*}
\mi(V; \Pi) \leq \eta \mi(Z_j; \Pi).
\end{align*}
Since $V\sim\ber(1/2)$, mutual information is the ordinary Jensen--Shannon divergence between the two conditional transcript laws and is therefore comparable to their Hellinger divergence. Hence
\begin{align*}
    \mi(V;\Pi)
    \asymp
    \hel\lp\Pi_\cP(X),\Pi_\cP(X\subt{j}\cQ)\rp.
\end{align*}
The likelihood-ratio bound yields
\begin{align*}
    \mi(Z_j;\Pi)
    \lesssim
    L \, \mi_\cP(X_j;\Pi).
\end{align*}
Combining the two estimates gives
\begin{align*}
    \hel\lp\Pi_\cP(X),\Pi_\cP(X\subt{j}\cQ)\rp
    \lesssim
    \eta L  \,\mi_\cP(X_j;\Pi).
\end{align*}
As this is true for all $j \in [n]$, we can use \Cref{lem:cut-paste} and conclude the bound in \Cref{thm:braverman_sdpi} for $\icht_\cP(\delta)$. Interchanging the roles of $\cP$ and $\cQ$ in the above proof yields the same bound for $\icht_\cQ(\delta)$.

We now specialise the post-SDPI bound to Bernoulli testing in the following lemma, whose proof is deferred to \Cref{app:bernoulli-post-sdpi}.

\begin{lemma}[Post-SDPI constant for Bernoulli testing]\label{lem:psdpi_ber}
Let $0<\beta<\alpha<1$. Let $X\sim\ber(1/2)$, and conditionally let $Y\sim\ber(\alpha)$ when $X=1$ and $Y\sim\ber(\beta)$ when $X=0$. Then
\begin{align*}
    \eta(P_{Y|X},\ber(1/2))
    \asymp
    \frac{(\alpha-\beta)^2}
    {(\alpha+\beta)^2(\alb+\betb)^2
    \log\lp\dfrac{2}{\min\{\alpha+\beta,\alb+\betb\}}\rp}.
\end{align*}
\end{lemma}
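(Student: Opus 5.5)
The plan is to compute both sides of the ratio $\mi(X;Z)/\mi(Y;Z)$ over Markov chains $X\to Y\to Z$. Since $Y$ is binary, I will parametrise $Z$ by the channel $P_{Z|Y}$, and the first step is to reduce to binary $Z$. For this I will use the standard fact that the post-SDPI supremum for binary $Y$ is attained, or approached up to constants, by binary $Z$: the ratio is a ratio of two $f$-divergence-type functionals of the pair $(P_{Z|Y=1},P_{Z|Y=0})$, and the joint-range argument of \cite{harremoes2011pairs}, as used for \Cref{thm:mixed-hellinger-js}, lets us take $Z$ binary. So write $P_{Z|Y=1}=\ber(p)$ and $P_{Z|Y=0}=\ber(q)$. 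The marginal of $Y$ is $\ber(m)$ with $m=(\alpha+\beta)/2$. Then $\mi(Y;Z)=\js_m(\ber(p),\ber(q))$, while $\mi(X;Z)$ is the ordinary (unskewed) Jensen--Shannon divergence between $\ber(\alpha p+\alb q)$ and $\ber(\beta p+\betb q)$. This is $\asymp\hel$ of these two laws, and by \Cref{fact:hel_ber} it is $\asymp(\alpha-\beta)^2(p-q)^2/\max$-type denominators.

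For the upper bound on $\eta$, I will lower bound $\mi(Y;Z)$ in terms of that Hellinger quantity. This is precisely the content of \Cref{thm:mixed-hellinger-js}, applied with skew $m$ and second parameter $\alpha$ or $\beta$ (after relabelling so that the parameter is at most $1/2$). By symmetry under bit flipping, I may assume $m\le 1/2$, so that the minimum in the statement is $\alpha+\beta$.

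\Cref{thm:mixed-hellinger-js} compares $\js_m(\cP,\cQ)$ with $\hel(m\cP+\bar m\cQ,\beta\cP+\betb\cQ)$. I will use the triangle inequality for $\sqrt{\hel}$ to bound the Hellinger divergence between the $\alpha$- and $\beta$-mixtures by twice the sum of the divergences from the $m$-mixture to each of them. For each of these I have $(m-\gamma)^2=(\alpha-\beta)^2/4$ and constant $m^2\log(1/m)$. The distance to $\alpha$ may need $\alpha>1/2$; in that case I will use the flipped form together with the concavity trick from the lower-bound proof of \Cref{thm:coin-info-section3}. This gives
\[
\mi(X;Z)\lesssim \frac{(\alpha-\beta)^2}{m^2\log(1/m)}\,\mi(Y;Z),
\]
which matches the claimed bound when $m\le1/2$, because then $(\alb+\betb)^2\asymp1$.

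For the matching lower bound on $\eta$, I will exhibit a near-extremal binary $Z$, guided by the equality cases of \Cref{thm:mixed-hellinger-js}. In the regime $m\le 1/2$, I will take a Z-channel that reveals $Y=1$ cleanly and kills $Y=0$, namely $q=0$ and $p=1$, i.e.\ $Z=Y$ perturbed slightly toward that form. Explicit computation should then give $\mi(Y;Z)\asymp m\log(1/m)\cdot t$ and $\mi(X;Z)\asymp (\alpha-\beta)^2 t/m$ for the tuned parameter $t$. In the regime where $m$ is near $1/2$, the very-noisy BSC may instead be the extremiser. I will check both candidates and take the better one.

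The main obstacle is making the triangle-inequality step work when $\alpha>1/2$ while $m\le1/2$, and keeping the constants universal near the boundary. The binary-$Z$ reduction also needs care, since the post-SDPI ratio is not a single $f$-divergence. If the joint-range argument fails, I will instead bound the ratio letter by letter, using that any $Z$ is a mixture of binary-output tests and applying the quasi-convexity of ratios of sums.
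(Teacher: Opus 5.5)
Your proposal is correct, but it argues differently from the paper.

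\textbf{The paper's route.} The paper does not use \Cref{thm:mixed-hellinger-js} here. It writes $\mi(X;Z)$ and $\mi(Y;Z)$ as averages over the posterior $t(Z)=\Pr(Y=1\mid Z)$ and notes that $\Pr(X=1\mid Z)$ is an affine function of $t(Z)$. This reduces $\eta$ exactly to the scalar problem $\sup_{t}\dkl(\ber(\frac{1}{2}+s(t-\psi))\|\ber(\frac{1}{2}))/\dkl(\ber(t)\|\ber(\psi))$ with $\psi=(\alpha+\beta)/2$. It then bounds this using \Cref{fact:dkl_half} and a case analysis of binary relative entropy. The lower bound comes from the point-mass posterior $t=1$.

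\textbf{Your route.} Write $M_\gamma=\gamma\cP+\bar\gamma\cQ$ with $\cP=P_{Z\mid Y=1}$ and $\cQ=P_{Z\mid Y=0}$. You get the upper bound as a corollary of the main inequality, applied at skew $m$ with second parameters $\beta$ and $\alpha$. You combine this with the triangle inequality for $\sqrt{\hel}$ and $\js_{1/2}\asymp\hel$. The lower bound comes from an explicit channel, and your Z-channel is exactly the paper's extremal posterior $t=1$.

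\textbf{Comparison.} Your route is short once \Cref{thm:mixed-hellinger-js} is available. It also shows that the post-SDPI bound is itself a consequence of that inequality, which explains why it recovers only the smaller information cost in \Cref{subsec:coin-comparison}. The paper's route is self-contained and elementary, gives explicit constants, and identifies the exact one-dimensional variational formula.

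\textbf{Unnecessary steps you can drop.}
\begin{itemize}
\item The binary-$Z$ reduction is not needed. The mixed inequality, the Hellinger triangle inequality and $\js_{1/2}\asymp\hel$ all hold for arbitrary discrete $\cP,\cQ$, and the lower bound needs only one witness.
\item No concavity step is needed when $\alpha>1/2\ge m$. In that case $m>1/4$ automatically, so $m^2\log(1/m)\asymp1$. Writing $\js_m(\cP,\cQ)=\js_{\bar m}(\cQ,\cP)$ and applying the theorem with skew $\bar m\in[1/2,3/4)$ and parameter $\alb<1/2$ gives $\hel(M_m,M_\alpha)\lesssim(\alpha-\beta)^2\js_m(\cP,\cQ)$ directly.
\item There is no need to compare against a BSC. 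Taking $Z=Y$ already gives $\mi(X;Y)/\ent(Y)\gtrsim\frac{(\alpha-\beta)^2/\alpha}{m\log(1/m)+m}\gtrsim\frac{(\alpha-\beta)^2}{m^2\log(1/m)}$ uniformly for $m\le1/2$. This uses \Cref{fact:hel_ber} (valid since $\beta<m\le1/2$), together with $\alpha<2m$ and $\ent(m)\le m\log(1/m)+m$.
\end{itemize}
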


Assume first that $0<\beta<\alpha<1$, so that the likelihood-ratio constant is finite. For $\cP=\ber(\alpha)$ and $\cQ=\ber(\beta)$, it is
\begin{align*}
    L
    &=
    \max\left\{
        \frac{\alpha}{\beta},
        \frac{\betb}{\alb}
    \right\}
    =
    \begin{cases}
        \alpha/\beta,&\alpha+\beta\leq1,\\
        \betb/\alb,&\alpha+\beta\geq1.
    \end{cases}
\end{align*}
Indeed, comparing the two candidate ratios reduces to the sign of $(\alpha-\beta)(1-\alpha-\beta)$. Substituting this value of $L$ and the estimate from \Cref{lem:psdpi_ber} into \Cref{thm:braverman_sdpi} gives the following comparison in the three regimes of \Cref{thm:coin-info-section3}.

\paragraph{Regime 1 ($0 < \beta < \alpha \leq 1/2$).}
Here, $\alpha+\beta\asymp\alpha$, $\alb+\betb\asymp1$, and $L=\alpha/\beta$. Therefore,
\begin{align*}
    \frac{1}{L\eta}
    \asymp
    \frac{\alpha\beta\log(1/\alpha)}{(\alpha-\beta)^2}
    \asymp
    \frac{\beta}{\alpha}
    \frac{\ent(\alpha)}{\hel(\alpha,\beta)}.
\end{align*}
Thus the post-SDPI method recovers the sharp lower bound for $\icht_\beta$, which is the smaller of the two information complexities in this regime. It can miss the sharp lower bound for $\icht_\alpha$ by a factor of order $\alpha/\beta$.

\paragraph{Regime 2 ($\beta < 1/2 < \alpha$).}
In this case $\alpha+\beta$, $\alb+\betb$, and the logarithmic term in \Cref{lem:psdpi_ber} are all bounded above and below by universal positive constants. Hence $\eta\asymp(\alpha-\beta)^2$. If $\alpha+\beta\leq1$, equivalently $\beta\leq\alb$, then $L=\alpha/\beta\asymp1/\beta$ and
\begin{align*}
    \frac{1}{L\eta}
    \asymp
    \frac{\beta}{(\alpha-\beta)^2}.
\end{align*}
This matches the sharp bound for $\icht_\beta$. If $\alpha+\beta\geq1$, equivalently $\alb\leq\beta$, then $L=\betb/\alb\asymp1/\alb$ and
\begin{align*}
    \frac{1}{L\eta}
    \asymp
    \frac{\alb}{(\alpha-\beta)^2},
\end{align*}
which matches the sharp bound for $\icht_\alpha$. Thus the post-SDPI method again recovers precisely the smaller of the two information complexities.

\paragraph{Regime 3 ($1/2\leq\beta<\alpha$).}
Here $\alb+\betb\asymp\betb$, $\alpha+\beta\asymp1$, and $L=\betb/\alb$. It follows that
\begin{align*}
    \frac{1}{L\eta}
    \asymp
    \frac{\alb\betb\log(1/\betb)}{(\alpha-\beta)^2}
    \asymp
    \frac{\alb}{\betb}
    \frac{\ent(\beta)}{\hel(\alpha,\beta)}.
\end{align*}
This is the sharp lower bound for $\icht_\alpha$, the smaller information complexity in this regime, but it can miss the sharp $\icht_\beta$ bound by a factor of order $\betb/\alb$.

In summary, whenever $L$ is finite, the post-SDPI method recovers the smaller of $\icht_\alpha$ and $\icht_\beta$ up to universal constant factors in the constant-advantage regime, but it does not generally recover the larger information cost. At the boundary cases $\beta=0$ or $\alpha=1$, the likelihood-ratio constant is infinite and the method gives no non-trivial bound. By contrast, the mixed Hellinger--Jensen--Shannon inequality applies directly under either input distribution, yields the sharp information cost under both hypotheses, and remains meaningful at these boundary cases.
\section{Information complexity of $\cP$ versus $\cQ$}\label{sec:general-information}

In this section, we study the information complexity of testing two arbitrary distinct distributions $\cP$ and $\cQ$ on $[k]$. We first give a constant-factor characterisation in terms of an optimisation over channels and show that binary output channels suffice for this purpose. We then use the coin-problem characterisation to prove lower bounds for distributions with bounded likelihood ratio and identify a regime in which these bounds are sharp. Finally, we construct a noisy binary likelihood-ratio channel that gives general upper bounds in terms of the $\chi^2$ divergences.

\subsection{Information complexity via optimisation over channels}\label{subsec:general-channel-characterisation}

Let $\cP$ and $\cQ$ be distinct distributions on $[k]$. We begin by identifying the channel optimisation that characterises their information complexity. For $Z\sim\cP$, define
\begin{align}\label{eq:def-phi-pq}
    \phi_{\cP}(\cP,\cQ)
    \coloneqq
    \inf_{\substack{C\in\cC([k],\nats)\\
    \hel(C\cP,C\cQ)>0}}
    \frac{\mi_{\cP}(Z;C(Z))}{\hel(C\cP,C\cQ)}.
\end{align}
Note that $\phi_{\cP}(\cP, \cQ)$ need not equal $\phi_\cQ(\cP, \cQ)$ in general.

\begin{theorem}[Information complexity characterisation]\label{thm:general-channel-characterisation}
    Let $\cP$ and $\cQ$ be distinct distributions on $[k]$. For advantage $1/3$,
    \begin{align*}
        \icht_{\cP}(\cP,\cQ,1/3)
        \asymp
        \phi_{\cP}(\cP,\cQ),
    \end{align*}
    where the implicit constants are universal.
\end{theorem}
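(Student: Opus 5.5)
We prove the two directions separately. For the lower bound we follow the framework of \Cref{subsection:lower-bound-framework} with the best possible $\eta$. For the upper bound we use the non-interactive protocol in which every party applies a near-optimal channel.

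\emph{Lower bound.} Fix a public-coin blackboard protocol $\Pi$ on $n$ samples with $\dtv(\Pi_\cP,\Pi_\cQ)\ge 1/3$, and fix a coordinate $i$. Conditioned on the public coin, the map $x\mapsto \Pi_\cP(X\subt{i}x)$ (all other inputs i.i.d.\ $\cP$, agent $i$ holding $x$) is a channel from $[k]$ to the finite set of transcripts. Including the public coin $W$ in the output, it is a channel $C_i\in\cC([k],\nats)$. By construction $C_i\cP=\Pi_\cP(X)$ and $C_i\cQ=\Pi_\cP(X\subt{i}\cQ)$, and $\mi_\cP(X_i;\Pi)=\mi_\cP(Z;C_i(Z))$ with $Z\sim\cP$. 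By the definition of $\phi_\cP$ in \cref{eq:def-phi-pq},
\begin{align*}
\mi_\cP(X_i;\Pi)\ge \phi_\cP(\cP,\cQ)\,\hel\lp\Pi_\cP(X),\Pi_\cP(X\subt{i}\cQ)\rp.
\end{align*}
This holds trivially when the Hellinger term vanishes. Summing over $i$ and using \cref{ineq:mi-sup-add} and \cref{ineq:sum-hel-lb} gives $\mi_\cP(X;\Pi)\ge \frac{\kappa}{18}\phi_\cP(\cP,\cQ)$. So $\icht_\cP(\cP,\cQ,1/3)\gtrsim\phi_\cP$.

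\emph{Upper bound.} Fix any $C\in\cC([k],\nats)$ with $h:=\hel(C\cP,C\cQ)>0$. The protocol has $n=n^*(C\cP,C\cQ,1/3)$ agents each write $C(X_i)$, encoded in binary on the blackboard, and then runs the likelihood-ratio test. Its advantage is at least $1/3$. Since the messages are independent given $X$ and the $X_i$ are independent, its information cost is $n\,\mi_\cP(Z;C(Z))$. By \Cref{fact:hyp_sample_comp_fact}, $n\lesssim 1/h$ provided $h\le 1/2$.

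Two technical points need care. First, if $h>1/2$ we would like $n=O(1)$ with cost $O(\mi)$. One option is to check directly that a constant number of samples suffices when $h$ is bounded below. The cleaner option is to reduce to $h\le 1/2$ by mixing $C$ with a constant channel: $C'=\lambda C+(1-\lambda)\delta_{\star}$. Then $\hel(C'\cP,C'\cQ)=\lambda h$ is exact, since the common atom contributes nothing. Also $\mi(Z;C'(Z))\le\lambda\mi(Z;C(Z))$, because $C'$ is $C$ composed with an erasure and mutual information of an erasure channel scales by $\lambda$. Hence the ratio $\mi/\hel$ does not increase, and we may choose $\lambda=1/(2h)$.

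Second, $n^*$ must be an integer at least $1$, so $n\le c/h+1$. This is still $\lesssim 1/h$ once $h\le 1/2$.

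Combining, the protocol's information cost is $\lesssim \mi_\cP(Z;C(Z))/h$. Taking the infimum over $C$ gives $\icht_\cP\lesssim\phi_\cP$.

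\emph{Main obstacle.} The main obstacle is the lower bound's identification of the per-coordinate transcript as a legitimate channel in $\cC([k],\nats)$. The public coin and the finite transcript alphabet must be handled so that mutual information and Hellinger distance are computed on the same object. Including $W$ in the output works, because $\mi(X_i;W,M)$ and the Hellinger distance between the joint laws of $(W,M)$ are exactly the quantities appearing in \Cref{lem:cut-paste}.
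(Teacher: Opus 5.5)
Your proof is correct and follows the paper's argument. It uses the same per-coordinate channel $C_i$ together with \Cref{ineq:mi-sup-add,ineq:sum-hel-lb} for the lower bound, and the same non-interactive protocol with $n^*(C\cP,C\cQ,1/3)$ agents for the upper bound. The only differences are cosmetic: you handle $\hel(C\cP,C\cQ)>1/2$ by mixing with an erasure symbol, whereas the paper simply uses one sample (valid since $\dtv\geq\hel>1/2$); and you take the infimum over all $C$ directly, which absorbs the paper's separate $\phi_{\cP}(\cP,\cQ)=0$ case.
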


\begin{proof}
    We first prove the lower bound by applying the framework from \Cref{subsection:lower-bound-framework}. Let $\Pi$ be an $n$-party public-coin blackboard protocol that distinguishes $\cP^{\otimes n}$ from $\cQ^{\otimes n}$ with advantage at least $1/3$. For each $i\in[n]$, let $C_i\in\cC([k],\nats)$ be the channel that maps a fixed value of $X_i$ to the distribution of the transcript, where the remaining coordinates are sampled independently from $\cP$. Thus, $C_i\cP=\Pi_{\cP}(X)$ and $C_i\cQ=\Pi_{\cP}(X\subt{i}\cQ)$. Additionally, $\mi_{\cP}(X_i;\Pi) = \mi_{\cP}(Z;C_i(Z))$. By the definition of $\phi_{\cP}(\cP,\cQ)$,
    \begin{align*}
        \mi_{\cP}(X_i;\Pi)
        \geq
        \phi_{\cP}(\cP,\cQ)
        \hel\lp
            \Pi_{\cP}(X),
            \Pi_{\cP}(X\subt{i}\cQ)
        \rp.
    \end{align*}
Summing over $i$, and then applying \Cref{ineq:mi-sup-add,ineq:sum-hel-lb}, gives
    \begin{align*}
        \mi_{\cP}(X;\Pi)
        &\geq
        \sum_{i\in[n]}\mi_{\cP}(X_i;\Pi)
        \\
        &\geq
        \phi_{\cP}(\cP,\cQ)
        \sum_{i\in[n]}
        \hel\lp
            \Pi_{\cP}(X),
            \Pi_{\cP}(X\subt{i}\cQ)
        \rp
        \\
        &\gtrsim
        \phi_{\cP}(\cP,\cQ).
    \end{align*}
    Taking the infimum over $\Pi$ proves the lower bound.

    For the upper bound, first suppose that $\phi_{\cP}(\cP,\cQ)>0$, and choose a channel $C$ such that

  \begin{align*}
        \frac{\mi_{\cP}(Z;C(Z))}{\hel(C\cP,C\cQ)}
        \leq 2
        \phi_{\cP}(\cP,\cQ).
    \end{align*}
Let $n=n^*(C\cP,C\cQ,1/3)$. Each of the $n$ agents independently applies $C$ to their sample and writes the output on the shared blackboard. By \Cref{fact:hyp_sample_comp_fact}, $n\lesssim1/\hel(C\cP,C\cQ)$ when $\hel(C\cP,C\cQ)\leq1/2$; when $\hel(C\cP,C\cQ)>1/2$, one sample suffices. Thus, in either case, the protocol has advantage at least $1/3$ and its information cost under $\cP$ is
    \begin{align*}
        \mi_{\cP}(X;C(X_1),\dots,C(X_n))
        &=
        n\mi_{\cP}(Z;C(Z))
        \lesssim
        \frac{\mi_{\cP}(Z;C(Z))}{\hel(C\cP,C\cQ)}
        \lesssim
        2\phi_{\cP}(\cP,\cQ).
    \end{align*}
    If $\phi_{\cP}(\cP,\cQ)=0$, apply the same construction to a sequence of channels whose ratios tend to zero. The resulting information costs tend to zero, so $\icht_{\cP}(\cP,\cQ,1/3)=0$.
\end{proof}

\begin{remark}\label{rem:interactivity}
    The upper-bound protocol in \Cref{thm:general-channel-characterisation} is non-interactive as every agent independently applies the same channel and broadcasts the result. Consequently, interaction cannot improve the information complexity by more than a universal constant. The corresponding conclusion for sample complexity or communication complexity of blackboard protocols does not follow from this result.
\end{remark}

\subsection{Reduction to binary output channels}\label{subsec:binary-output-suffices}

We next show that the optimisation in \Cref{eq:def-phi-pq} may be restricted to binary output channels. Define
\begin{align*}
    \phi^{(2)}_{\cP}(\cP,\cQ)
    \coloneqq
    \inf_{\substack{B\in\cC([k],[2])\\
    \hel(B\cP,B\cQ)>0}}
    \frac{\mi_{\cP}(Z;B(Z))}{\hel(B\cP,B\cQ)}.
\end{align*}

\begin{theorem}[Binary output channels are sufficient]\label{thm:binary-output-suffices}
    Let $\cP$ and $\cQ$ be distinct distributions on $[k]$. Then
    \begin{align*}
        \phi_{\cP}(\cP,\cQ)
        =
        \phi^{(2)}_{\cP}(\cP,\cQ).
    \end{align*}
    Consequently,
    \begin{align*}
        \icht_{\cP}(\cP,\cQ,1/3)
        \asymp
        \inf_{\substack{B\in\cC([k],[2])\\
        \hel(B\cP,B\cQ)>0}}
        \frac{\mi_{\cP}(Z;B(Z))}{\hel(B\cP,B\cQ)}.
    \end{align*}
\end{theorem}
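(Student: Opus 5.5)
The inequality $\phi_{\cP}(\cP,\cQ)\le\phi^{(2)}_{\cP}(\cP,\cQ)$ is immediate, because $\cC([k],[2])\subseteq\cC([k],\nats)$. The plan for the reverse inequality is to show that any channel $C\in\cC([k],[m])$ with $\hel(C\cP,C\cQ)>0$ can be approximated, in ratio, by binary channels. We do this in two steps: first decompose the ratio column by column, then isolate one good column inside a binary channel. The second statement of the theorem then follows by combining this with \Cref{thm:general-channel-characterisation}.

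\emph{Step 1 (decomposition into columns).} Write $c_y=(C(y\mid z))_{z\in[k]}\in[0,1]^k$ for the $y$-th column of $C$. For a vector $c\in[0,1]^k$ define
\begin{align*}
F(c)=\sum_{z}\cP(z)c(z)\log\frac{c(z)}{\sum_{z'}\cP(z')c(z')},\qquad
G(c)=\frac{\langle \cP,c\rangle+\langle \cQ,c\rangle}{2}-\sqrt{\langle \cP,c\rangle\langle \cQ,c\rangle}.
\end{align*}
With these definitions,
\begin{align*}
\mi_{\cP}(Z;C(Z))=\sum_y F(c_y),\qquad \hel(C\cP,C\cQ)=\sum_y G(c_y),
\end{align*}
where the second identity uses $\sum_y c_y=\mathbf 1$. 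Both $F$ and $G$ have the following properties:
\begin{itemize}
\item they are positively homogeneous of degree one;
\item they are nonnegative ($F$ by the log-sum inequality, $G$ by AM--GM);
\item they vanish on constant vectors.
\end{itemize}
Since all terms are nonnegative, a mediant argument gives
\begin{align*}
\frac{\sum_y F(c_y)}{\sum_y G(c_y)}\ \ge\ \min_{y:\,G(c_y)>0}\frac{F(c_y)}{G(c_y)}.
\end{align*}
Fix a column $c=c_{y^*}$ attaining this minimum; note that $G(c)>0$.

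\emph{Step 2 (isolating one column).} For small $t>0$, define the binary channel $B_t$ with columns $tc$ and $\mathbf 1-tc$. By homogeneity,
\begin{align*}
F(tc)=tF(c),\qquad G(tc)=tG(c).
\end{align*}
The key claim is that
\begin{align*}
F(\mathbf 1-tc)=O(t^2),\qquad G(\mathbf 1-tc)=O(t^2).
\end{align*}
We argue this as follows. Near $\mathbf 1$ all coordinates are bounded away from $0$, so $F$ and $G$ are smooth in a neighbourhood of $\mathbf 1$. Each is nonnegative and attains its minimum value $0$ at $\mathbf 1$, so its gradient there vanishes. A second-order Taylor expansion then gives the $O(t^2)$ bound, with constants depending on $\cP,\cQ,c$ but not on $t$. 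Consequently
\begin{align*}
\frac{\mi_{\cP}(Z;B_t(Z))}{\hel(B_t\cP,B_t\cQ)}=\frac{tF(c)+O(t^2)}{tG(c)+O(t^2)}\ \longrightarrow\ \frac{F(c)}{G(c)}\quad\text{as } t\to0.
\end{align*}
For small $t$ the denominator $tG(c)+O(t^2)$ is also positive, so $B_t$ is admissible in the definition of $\phi^{(2)}_{\cP}$.

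\emph{Conclusion.} By Step 2, $\phi^{(2)}_{\cP}(\cP,\cQ)\le F(c)/G(c)$. By Step 1, $F(c)/G(c)\le \mi_{\cP}(Z;C(Z))/\hel(C\cP,C\cQ)$. Taking the infimum over $C$ yields $\phi^{(2)}_{\cP}\le\phi_{\cP}$, and hence equality.

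The main obstacle is the $O(t^2)$ claim in Step 2, in particular handling columns $c$ with some zero entries. This turns out not to be a problem: only the behaviour of $F$ and $G$ near $\mathbf 1$ matters, where every entry of $\mathbf 1-tc$ is at least $1-t>0$. The columns $tc$ themselves are handled exactly by homogeneity, with no smoothness needed.
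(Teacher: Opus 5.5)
Your proof is correct and follows the paper's argument essentially step for step: you decompose the mutual information and the Hellinger divergence over output symbols, use the mediant inequality to select a symbol with $G(c_y)>0$ and minimal ratio, and pass to the binary channel $B_t$ with columns $tc_y$ and $\mathbf 1-tc_y$, whose first output contributes exactly $t$ times that symbol's terms and whose second output contributes $O(t^2)$. Your vanishing-gradient justification of the $O(t^2)$ bound makes explicit what the paper only asserts; it is valid because $F$ and $G$ remain nonnegative on the open orthant, so $\mathbf 1$ is an interior minimiser. It also covers the case $F(c)=0$ (and $\phi_{\cP}=0$) directly, without the separate treatment the paper gives.
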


\begin{proof}
From $\cC([k],[2])\subseteq\cC([k],\nats)$, we have $\phi_{\cP}(\cP,\cQ) \leq \phi^{(2)}_{\cP}(\cP,\cQ)$. We now prove the reverse inequality. First suppose that $\phi_{\cP}(\cP,\cQ)>0$, and choose any channel $C\in\cC([k],[m])$. Consider the ratio in \Cref{eq:def-phi-pq} for $C$. We show that there exists a sequence of binary output channels such that, in the limit, the ratio in \Cref{eq:def-phi-pq} corresponding to these channels approaches the value of the ratio for $C$.  

    For each $y\in[m]$, write
    \begin{align*}
        a_y&=(C\cP)(y),
        &b_y&=(C\cQ)(y),
        \\
        \iota_y
        &\coloneqq
        \sum_{x\in[k]}\cP(x)C(y\mid x)
        \log\frac{C(y\mid x)}{a_y},
        &
        h_y
        &\coloneqq
        \frac{1}{2}\lp\sqrt{a_y}-\sqrt{b_y}\rp^2,
    \end{align*}
    with the usual convention for zero terms. These quantities decompose the numerator and denominator as
    \begin{align*}
        \mi_{\cP}(Z;C(Z))
        =
        \sum_{y\in[m]}\iota_y,
        \qquad
        \hel(C\cP,C\cQ)
        =
        \sum_{y\in[m]}h_y.
    \end{align*}
    Therefore, there exists $y\in[m]$ with $h_y>0$ such that
    \begin{align}\label{eq:good-output-symbol}
        \frac{\iota_y}{h_y}
        \leq
        \frac{\mi_{\cP}(Z;C(Z))}{\hel(C\cP,C\cQ)}.
    \end{align}

    Fix such a $y$. For $t\in(0,1]$, define a binary channel $B_t:[k]\to\{0,1\}$ by
    \begin{align*}
        B_t(1\mid x)=tC(y\mid x),
        \qquad
        B_t(0\mid x)=1-tC(y\mid x).
    \end{align*}
   As $t\to0$,
    \begin{align*}
        \mi_{\cP}(Z;B_t(Z))
        &=t\iota_y+O(t^2),
        \\
        \hel(B_t\cP,B_t\cQ)
        &=t h_y+O(t^2).
    \end{align*}
    Indeed, the contribution of the output $1$ to the mutual information is exactly $t\iota_y$, while the contribution of the output $0$ is $O(t^2)$. Similarly, the contribution of the output $1$ to the Hellinger divergence is exactly $t h_y$, while that of the output $0$ is $O(t^2)$. If $\iota_y>0$, the ratio is $\iota_y/h_y + O(t)$, and so when $t \to 0$, this approaches $\iota_y/h_y$ which satisfies the claimed inequality via \Cref{eq:good-output-symbol}. If $\iota_y=0$, then $C(y\mid x)$ is constant for $x$ in the support of $\cP$, and hence $\mi_{\cP}(Z;B_t(Z))=0$ for every $t$. Finally, if $\phi_{\cP}(\cP,\cQ)=0$, apply the same argument to channels whose ratios approach zero. This shows that $\phi^{(2)}_{\cP}(\cP,\cQ)=0$ and completes the proof.
\end{proof}


\begin{remark}
The binary reduction in \Cref{thm:binary-output-suffices} is reminiscent of the reduction used to prove the mixed Hellinger--Jensen--Shannon inequality in \Cref{sec:coin-information}, but the reason it holds is different. There, both the Jensen--Shannon divergence and the mixed Hellinger divergence are $f$-divergences between two distributions, so the joint-range theorem of Harremoës and Vajda~\cite{harremoes2011pairs} reduces the problem to distributions on a binary alphabet. Here the input alphabet has size $k$, and $\mi_{\cP}(Z;C(Z))$ depends jointly on the $k$ conditional output distributions of the channel; it is not an $f$-divergence between a single pair of distributions. The Harremoës--Vajda reduction therefore does not apply directly. 

This viewpoint also describes the \emph{a posteriori} distributions induced by the matching channels for the three regimes of the coin problem. Let $X\sim\ber(\alpha)$ and let $Y$ be the channel output. When $\beta<\alpha\leq1/2$, the identity channel is optimal up to constants, and observing the informative output $Y=1$ gives the point-mass posterior $\Pr(X=1\mid Y=1)=1$. When $\beta<1/2<\alpha$, the matching binary symmetric channels have crossover probability $1/2-\ve$; as $\ve\to0$, their posteriors converge to the prior $\ber(\alpha)$, so the useful posterior is an infinitesimal perturbation of the prior. Finally, when $1/2\leq\beta<\alpha$, the matching channel is the asymmetric channel $\cZ_\beta$ from \Cref{subsec:coin-upper-bounds}. For its output $Y=0$, the posterior is
\begin{align*}
    \Pr(X=1\mid Y=0)
    =
    \frac{\alpha(1-\beta)}{1-\alpha\beta}.
\end{align*}
Thus the three regimes correspond respectively to a boundary posterior, a small perturbation of the prior, and a non-trivial asymmetric posterior.
\end{remark}

\subsection{Lower bounds for bounded likelihood ratios}\label{subsec:bounded-likelihood-ratios}

We next use the coin-problem characterisation to derive lower bounds for testing general distributions. The following lemma gives the required reduction to the coin problem.

\begin{lemma}[Bernoulli embedding]\label{lem:general-bernoulli-embedding}
    Let $\cP$ and $\cQ$ be distributions on $[k]$. Suppose that there is a channel $K\in\cC(\{0,1\},[k])$ and parameters $t_{\cP},t_{\cQ}\in[0,1]$ such that
    \begin{align*}
        K\ber(t_{\cP})=\cP,
        \qquad
        K\ber(t_{\cQ})=\cQ.
    \end{align*}
    Then, for each $\cR\in\{\cP,\cQ\}$,
    \begin{align*}
        \phi_{\cR}(\cP,\cQ)
        \geq
        \phi_{\ber(t_{\cR})}
        \lp\ber(t_{\cP}),\ber(t_{\cQ})\rp.
    \end{align*}
    Consequently,
    \begin{align*}
        \icht_{\cR}(\cP,\cQ,1/3)
        \gtrsim
        \icht_{t_{\cR}}(t_{\cP},t_{\cQ},1/3).
    \end{align*}
\end{lemma}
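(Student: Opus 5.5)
The plan is to compose channels. Take any channel $C\in\cC([k],\nats)$ that is feasible for $\phi_{\cR}(\cP,\cQ)$, meaning $\hel(C\cP,C\cQ)>0$. Form the composite channel $C\circ K\in\cC(\{0,1\},\nats)$, which first maps a bit to a symbol in $[k]$ via $K$ and then applies $C$. By the hypothesis $K\ber(t_\cP)=\cP$ and $K\ber(t_\cQ)=\cQ$ we have
\begin{align*}
    (C\circ K)\ber(t_{\cP})=C\cP,
    \qquad
    (C\circ K)\ber(t_{\cQ})=C\cQ,
\end{align*}
so the Hellinger denominators coincide. In particular, $C\circ K$ is feasible for the Bernoulli optimisation defining $\phi_{\ber(t_\cR)}(\ber(t_\cP),\ber(t_\cQ))$.

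Next, compare the numerators. Let $U\sim\ber(t_\cR)$, let $Z=K(U)\sim\cR$, and let $Y=C(Z)$. Then $U\to Z\to Y$ is a Markov chain, so the data-processing inequality gives
\begin{align*}
    \mi_{\cR}(Z;C(Z))=\mi(Z;Y)\geq \mi(U;Y)=\mi_{\ber(t_\cR)}(U;(C\circ K)(U)).
\end{align*}
Here $\mi_\cR(Z;C(Z))$ depends only on the law of $Z$, which is $\cR$, and on $C$. The ratio for $C$ is therefore at least the ratio for $C\circ K$, which is at least $\phi_{\ber(t_\cR)}(\ber(t_\cP),\ber(t_\cQ))$. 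Taking the infimum over $C$ proves the first inequality.

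For the consequence, apply \Cref{thm:general-channel-characterisation} twice. The first application, to $\cP,\cQ$ on $[k]$, gives $\icht_\cR(\cP,\cQ,1/3)\asymp\phi_\cR(\cP,\cQ)$. The second, to the Bernoulli pair on $[2]$, gives $\phi_{\ber(t_\cR)}(\ber(t_\cP),\ber(t_\cQ))\asymp\icht_{t_\cR}(t_\cP,t_\cQ,1/3)$. The theorem is stated with $\phi_\cP$ and $\icht_\cP$, but its proof is symmetric in the roles of $\cP$ and $\cQ$, so it applies equally with $\cR=\cQ$ as the reference distribution.

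A degenerate case must be handled: if $t_\cP=t_\cQ$, then $\cP=\cQ$. If the distributions are distinct, $t_\cP\neq t_\cQ$, and both quantities are well defined.

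I expect no real obstacle. The only care needed is to check that the infimum in $\phi_\cR$ is taken with $\cR$ as the input law to the mutual information, so that the Markov chain is run under $\ber(t_\cR)$, and that feasibility ($\hel>0$) transfers through the composition.
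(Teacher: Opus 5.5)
Your proposal is correct and follows the paper's proof essentially verbatim. You compose $C$ with $K$ so that the Hellinger denominators coincide, use data processing along the Markov chain $U\to Z\to Y$ to compare the numerators, take the infimum over $C$, and invoke \Cref{thm:general-channel-characterisation} once for $(\cP,\cQ)$ and once for the Bernoulli pair. Your two extra remarks are correct details the paper leaves implicit: that theorem can be applied with $\cQ$ as the reference distribution, and $t_{\cP}\neq t_{\cQ}$ whenever $\cP\neq\cQ$.
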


\begin{proof}
    Fix a channel $C\in\cC([k],\nats)$ with $\hel(C\cP,C\cQ)>0$, and let $U\sim\ber(t_{\cR})$, $Z\sim K(U)$, and $Y\sim C(Z)$. Then $Z\sim\cR$ and $U-Z-Y$ is a Markov chain. Hence, by data processing,
    \begin{align*}
        \mi_{\cR}(Z;C(Z))
        \geq
        \mi_{t_{\cR}}(U;(C\circ K)(U)).
    \end{align*}
    Moreover,
    \begin{align*}
        (C\circ K)\ber(t_{\cP})=C\cP,
        \qquad
        (C\circ K)\ber(t_{\cQ})=C\cQ.
    \end{align*}
    Dividing the data-processing inequality by $\hel(C\cP,C\cQ)$ and then taking the infimum over $C$ proves the first claim. The second follows from \Cref{thm:general-channel-characterisation}, applied once to $\cP,\cQ$ and once to the two Bernoulli distributions.
\end{proof}

We now specialise the embedding to distributions with bounded likelihood ratio.

\begin{theorem}[Bounded likelihood ratios]\label{thm:bounded-likelihood-ratio-lower-bound}
    Let $\cP$ and $\cQ$ be distinct distributions on $[k]$, and suppose that, for some $\varepsilon\in(0,1/2]$,
    \begin{align*}
        (1-\varepsilon)\cQ(x)
        \leq
        \cP(x)
        \leq
        (1+\varepsilon)\cQ(x)
        \qquad
        \text{for every $x\in[k]$}.
    \end{align*}
    Then, for each $\cR\in\{\cP,\cQ\}$,
    \begin{align*}
        \icht_{\cR}(\cP,\cQ,1/3)
        \gtrsim
        \frac{1}{\varepsilon^2}.
    \end{align*}
    If, in addition, $\dtv(\cP,\cQ)\asymp\varepsilon$, then this lower bound is tight up to universal constant factors under both hypotheses.
\end{theorem}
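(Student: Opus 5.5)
The plan is to use the Bernoulli embedding of \Cref{lem:general-bernoulli-embedding} for the lower bound. We need a channel $K\in\cC(\{0,1\},[k])$ with $K\ber(t_\cP)=\cP$ and $K\ber(t_\cQ)=\cQ$, and $t_\cP,t_\cQ$ should be close to $1/2$ and differ by order $\varepsilon$. Set $t_\cQ=1/2$, $t_\cP=1/2+\varepsilon/2$, and take
\begin{align*}
    K(\cdot\mid 1)=\cQ+\tfrac{1}{\varepsilon}(\cP-\cQ),
    \qquad
    K(\cdot\mid 0)=\cQ-\tfrac{1}{\varepsilon}(\cP-\cQ).
\end{align*}
Both rows have total mass one. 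They are nonnegative because $|\cP(x)-\cQ(x)|\le\varepsilon\cQ(x)$. The mixture at $t_\cQ=1/2$ is $\cQ$. The mixture at $t_\cP$ is $\cQ+\varepsilon\cdot\frac{1}{\varepsilon}(\cP-\cQ)=\cP$, since $2t_\cP-1=\varepsilon$.

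The lemma then gives $\icht_\cR(\cP,\cQ,1/3)\gtrsim\icht_{t_\cR}(t_\cP,t_\cQ,1/3)$. Here $t_\cQ=1/2<t_\cP\le 3/4$, so we are at the boundary of regime~3 of \Cref{thm:info_comp_ber}, with $\beta=1/2$ and $\alpha=1/2+\varepsilon/2$.

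\emph{Under $\ber(\alpha)$.} The bound is $\frac{\alb}{\betb}\frac{\ent(\beta)}{\hel(\alpha,\beta)}$. Here $\alb\asymp1$, $\betb=1/2$, $\ent(1/2)=\log 2$, and $\hel\asymp\varepsilon^2$ by \Cref{fact:hel_ber}. This gives order $1/\varepsilon^2$.

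\emph{Under $\ber(\beta)$.} The bound is $\ent(\beta)/\hel(\alb,\betb)\asymp1/\varepsilon^2$.

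One could instead read off these cases from regime~2 by perturbing $\beta$ slightly below $1/2$, but regime~3 includes $\beta=1/2$, so this is unnecessary.

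For tightness I would use the clean-bits idea with a deterministic binary channel $B(x)=\mathbbm 1\{\cP(x)>\cQ(x)\}$. Then $B\cP=\ber(a)$ and $B\cQ=\ber(b)$ with $a-b=\dtv(\cP,\cQ)\asymp\varepsilon$. The mutual information of a binary-output channel is at most $\log 2$, so
\begin{align*}
    \phi_\cR\le \frac{\log 2}{\hel(\ber(a),\ber(b))}.
\end{align*}
It then suffices to show $\hel(\ber(a),\ber(b))\gtrsim\varepsilon^2$, which holds because $\hel\ge(a-b)^2/2$ by \Cref{fact:tv-hel-ineq}. Combining with \Cref{thm:general-channel-characterisation} gives the upper bound $O(1/\varepsilon^2)$ under both $\cP$ and $\cQ$.

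The only delicate step is choosing the embedding so that its two rows are genuine probability distributions; the scaling $1/\varepsilon$ is exactly what the likelihood-ratio hypothesis allows. Everything else is substitution into earlier results.
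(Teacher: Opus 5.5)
Your proposal is correct and follows the paper's proof essentially step for step. You use the same embedding channel $K$ with $t_{\cQ}=1/2$ and $t_{\cP}=(1+\varepsilon)/2$, and the Bernoulli embedding lemma combined with the coin-problem theorem; you invoke regime~3 explicitly with $\beta=1/2$, where the paper just notes the parameters are bounded away from $0$ and $1$. For tightness you use the same deterministic indicator channel together with the $\dtv$--Hellinger inequality and the one-bit information bound.
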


\begin{proof}
    Define two distributions on $[k]$ by
    \begin{align*}
        K(1)(x)
        &\coloneqq
        \cQ(x)+\frac{\cP(x)-\cQ(x)}{\varepsilon},
        \\
        K(0)(x)
        &\coloneqq
        \cQ(x)-\frac{\cP(x)-\cQ(x)}{\varepsilon}.
    \end{align*}
    The likelihood-ratio assumption implies that both expressions are non-negative, and each sums to one. They therefore define a channel $K\in\cC(\{0,1\},[k])$. Setting
    \begin{align*}
        t_{\cP}=\frac{1+\varepsilon}{2},
        \qquad
        t_{\cQ}=\frac{1}{2},
    \end{align*}
    a direct calculation gives
    \begin{align*}
        K\ber(t_{\cP})=\cP,
        \qquad
        K\ber(t_{\cQ})=\cQ.
    \end{align*}
    Since both Bernoulli parameters remain bounded away from $0$ and $1$, \Cref{thm:coin-info-section3} gives
    \begin{align*}
        \icht_{t_{\cP}}(t_{\cP},t_{\cQ},1/3)
        \asymp
        \icht_{t_{\cQ}}(t_{\cP},t_{\cQ},1/3)
        \asymp
        \frac{1}{\varepsilon^2}.
    \end{align*}
    The lower bounds now follow from \Cref{lem:general-bernoulli-embedding}.

    Notice that the likelihood-ratio assumption already implies
    \begin{align*}
        \dtv(\cP,\cQ)
        =
        \frac{1}{2}\sum_{x\in[k]}|\cP(x)-\cQ(x)|
        \leq
        \frac{\varepsilon}{2}.
    \end{align*}
    Hence the additional assumption $\dtv(\cP,\cQ)\asymp\varepsilon$ amounts to requiring the total variation distance to be as large as the likelihood-ratio constraint permits, up to constant factors. For the matching upper bounds, let
    \begin{align*}
        A=\{x\in[k]:\cP(x)\geq\cQ(x)\}.
    \end{align*}
    The deterministic binary channel $B(x)=\mathbbm{1}_{A}(x)$ preserves total variation:
    \begin{align*}
        \dtv(B\cP,B\cQ)
        =
        \dtv(\cP,\cQ).
    \end{align*}
    Therefore, by \Cref{fact:tv-hel-ineq},
    \begin{align*}
        \hel(B\cP,B\cQ)
        \gtrsim
        \varepsilon^2.
    \end{align*}
    Under either input distribution, one use of $B$ reveals at most one bit of information. Applying \Cref{thm:general-channel-characterisation} with this channel gives the matching $O(1/\varepsilon^2)$ upper bounds.
\end{proof}

\begin{remark}\label{rem:bounded-likelihood-rare-event}
    The likelihood-ratio condition alone does not determine the information complexity. To see this, let $\rho\in(0,1/4]$ be small and consider
    \begin{align*}
        \cP=\ber((1+\varepsilon)\rho),
        \qquad
        \cQ=\ber(\rho).
    \end{align*}
    The likelihood ratio equals $1+\varepsilon$ at $1$ and equals $1-\varepsilon\rho/(1-\rho)$ at $0$, so it lies in $[1-\varepsilon,1+\varepsilon]$. However, \Cref{thm:coin-info-section3} gives, under either hypothesis,
    \begin{align*}
        \icht_{\cR}(\cP,\cQ,1/3)
        \asymp
        \frac{\log(1/\rho)}{\varepsilon^2}.
    \end{align*}
    Thus the lower bound in \Cref{thm:bounded-likelihood-ratio-lower-bound} can be loose. In this example $\dtv(\cP,\cQ)=\varepsilon\rho\ll\varepsilon$, so it lies outside the sharpness regime of the theorem.
\end{remark}

\begin{remark}
    Bounded-likelihood lower bounds of order $1/\varepsilon^2$ can also be recovered through strong data-processing inequalities~\cite{Duchi,braverman2016communication}. These works do not, however, give a matching information-complexity characterisation for arbitrary $\cP$ and $\cQ$. In particular, although Braverman et al.~\cite{braverman2016communication} show that their distributed strong data-processing inequality is tight for some protocols, these protocols may not solve the testing problem (i.e., induce a constant separation between the transcripts under each hypothesis). The Bernoulli embedding above gives a direct proof from the coin-problem characterisation, while the deterministic binary channel in the matching upper bound identifies $\dtv(\cP,\cQ)\asymp\varepsilon$ as a sufficient condition for sharpness.
\end{remark}

\subsection{Upper bounds via $\chi^2$ divergence}\label{subsec:chisq-upper-bounds}

We finish the section with a general upper bound obtained from a noisy binary channel. 

\begin{theorem}[$\chi^2$ upper bounds]\label{thm:chisq-upper-bounds}
    Let $\cP$ and $\cQ$ be distinct distributions on $[k]$. For every $\delta\in[1/3,1)$,
    \begin{align*}
        \icht_{\cP}(\cP,\cQ,\delta)
        &\lesssim
        \frac{\log(1/(1-\delta))}{\chisq(\cQ,\cP)},
        \\
        \icht_{\cQ}(\cP,\cQ,\delta)
        &\lesssim
        \frac{\log(1/(1-\delta))}{\chisq(\cP,\cQ)},
    \end{align*}
    with the convention that $1/\infty=0$. In particular, at advantage $1/3$ the logarithmic factor is a universal constant.
\end{theorem}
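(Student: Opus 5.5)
The plan is to prove only the first inequality; the second follows by swapping the roles of $\cP$ and $\cQ$. I will use the non-interactive protocol from the proof of \Cref{thm:general-channel-characterisation}, with a very noisy binary channel driven by a bounded test function. Fix $g:[k]\to[-1,1]$ and $t\in(0,1/4]$, and let $B_t\in\cC([k],[2])$ output $1$ with probability $1/2+t\,g(x)$ on input $x$. Then $B_t\cP=\ber(1/2+t\,\E_\cP g)$ and $B_t\cQ=\ber(1/2+t\,\E_\cQ g)$. Using $n=n^*(B_t\cP,B_t\cQ,\delta)$ agents, each broadcasting $B_t(X_i)$, and then the optimal centralised test, \Cref{fact:hyp_sample_comp_fact} gives advantage $\delta$ at cost
\begin{align*}
    \mi_\cP(X;\Pi)=n\,\mi_\cP(Z;B_t(Z))\lesssim \log\frac{1}{1-\delta}\cdot\frac{\mi_\cP(Z;B_t(Z))}{\hel(B_t\cP,B_t\cQ)}.
\end{align*}
The Hellinger divergence here is tiny, so the hypothesis $\hel\le 1/2$ of that fact holds.

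Next I will do two second-order expansions as $t\to0$. For the numerator,
\begin{align*}
    \mi_\cP(Z;B_t(Z))=\E_{Z\sim\cP}\,\dkl\lp\ber(1/2+t g(Z))\,\|\,\ber(1/2+t\E_\cP g)\rp.
\end{align*}
Since both Bernoulli parameters lie in $[1/4,3/4]$, each KL term is at most a constant times $t^2(g(Z)-\E_\cP g)^2$, by comparison with $\chi^2$. Hence the numerator is $\lesssim t^2\var_\cP(g)$. For the denominator, near $1/2$ the Hellinger divergence between Bernoullis satisfies $\hel(\ber(1/2+a),\ber(1/2+b))\gtrsim (a-b)^2$ uniformly for $|a|,|b|\le 1/4$, so $\hel(B_t\cP,B_t\cQ)\gtrsim t^2(\E_\cQ g-\E_\cP g)^2$. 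The factors of $t^2$ cancel, and therefore
\begin{align*}
    \icht_\cP(\cP,\cQ,\delta)\lesssim \log\frac{1}{1-\delta}\cdot\frac{\var_\cP(g)}{(\E_\cQ g-\E_\cP g)^2}
\end{align*}
for every admissible $g$. I will then take the infimum over $g:[k]\to[-1,1]$ with $\var_\cP(g)>0$. By the variational formula \Cref{fact:chisq_var}, applied with the pair $(\cQ,\cP)$ (requiring $\cQ\ll\cP$), this infimum equals $1/\chisq(\cQ,\cP)$. This gives the claim.

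The case $\cQ\not\ll\cP$, where $\chisq(\cQ,\cP)=\infty$, I will treat separately. Take $g=\mathbbm{1}_S$ with $S=\{x:\cP(x)=0<\cQ(x)\}$. Then $B_t(Z)$ is independent of $Z$ under $\cP$, so the numerator is exactly $0$, while $\E_\cQ g-\E_\cP g=\cQ(S)>0$ keeps the denominator positive. Hence $\icht_\cP=0$, matching the convention $1/\infty=0$.

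The main obstacle is making the two expansions uniform, with universal constants and no hidden dependence on $\cP$ or $\cQ$. This is why I restrict to $|t\,g|\le1/4$: it keeps all Bernoulli parameters in $[1/4,3/4]$, where $\dkl\lesssim\chisq$ and the Hellinger divergence is comparable to the squared difference of means. With that restriction, any fixed small $t$ works, and no limit in $t$ is needed.
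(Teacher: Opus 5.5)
Your proposal is correct and follows essentially the paper's proof. Both use a noisy binary channel whose bias is affine in a bounded test function (the paper simply fixes $t=1/4$). Both bound the information by $\chi^2$ and compare the Hellinger divergence with the squared mean difference for parameters in $[1/4,3/4]$, then apply the variational formula for $\chisq(\cQ,\cP)$ and a zero-cost indicator channel when $\cQ\not\ll\cP$.
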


\begin{proof}
    We prove the bound under $\cP$; the other bound follows by interchanging $\cP$ and $\cQ$. First suppose that $\cQ\not\ll\cP$. There is then a set $A\subseteq[k]$ such that $\cP(A)=0$ and $\cQ(A)>0$. The deterministic channel $B(x)=\mathbbm{1}_A(x)$ has zero information cost under $\cP$, while finitely many independent uses of $B$ distinguish the two hypotheses with any advantage smaller than one. Hence $\icht_{\cP}(\cP,\cQ,\delta)=0$, in agreement with $\chisq(\cQ,\cP)=\infty$.

    Now suppose that $\cQ\ll\cP$. Fix a function $f:[k]\to[-1,1]$ with $\var_{\cP}(f(Z))>0$ and $\E_{\cP}f(Z)\neq\E_{\cQ}f(Z)$, and define a binary channel $B_f$ by
    \begin{align*}
        B_f(1\mid x)
        =
        \frac{1}{2}+\frac{1}{4}f(x).
    \end{align*}
    Write
    \begin{align*}
        \mu_{\cR}=\E_{\cR}f(Z),
        \qquad
        p_{\cR}=\frac{1}{2}+\frac{1}{4}\mu_{\cR},
        \qquad
        \cR\in\{\cP,\cQ\}.
    \end{align*}
    Thus $B_f\cR=\ber(p_{\cR})$, and all the Bernoulli parameters involved lie in $[1/4,3/4]$. The standard quadratic bounds for binary relative entropy and Hellinger divergence therefore give
    \begin{align*}
        \mi_{\cP}(Z;B_f(Z))
        &=
        \E_{\cP}\!\left[
            \dkl\left(
                \ber\left(\frac{1}{2}+\frac{1}{4}f(Z)\right)
                \,\middle\|\,
                \ber(p_{\cP})
            \right)
        \right]
        \lesssim
        \var_{\cP}(f(Z)),
        \\
        \hel(B_f\cP,B_f\cQ)
        &\asymp
        \left(\mu_{\cP}-\mu_{\cQ}\right)^2.
    \end{align*}
    Let $n$ be the sample complexity for testing $B_f\cP$ against $B_f\cQ$ with advantage $\delta$. By \Cref{fact:hyp_sample_comp_fact}, the non-interactive protocol in which every agent independently applies $B_f$ and broadcasts its output may take
    \begin{align*}
        n
        \lesssim
        \frac{\log(1/(1-\delta))}
        {\hel(B_f\cP,B_f\cQ)}.
    \end{align*}
    Independence and the chain rule then give
    \begin{align*}
        \icht_{\cP}(\cP,\cQ,\delta)
        &\leq
        n\mi_{\cP}(Z;B_f(Z))
        \\
        &\lesssim
        \frac{\var_{\cP}(f(Z))}
        {\left(\E_{\cP}f(Z)-\E_{\cQ}f(Z)\right)^2}
        \log\frac{1}{1-\delta}.
    \end{align*}
    Taking the infimum over $f$ and applying the variational characterisation in \Cref{fact:chisq_var}, with the roles of $\cP$ and $\cQ$ interchanged, proves
    \begin{align*}
        \icht_{\cP}(\cP,\cQ,\delta)
        \lesssim
        \frac{\log(1/(1-\delta))}{\chisq(\cQ,\cP)}.
    \end{align*}
\end{proof}

\begin{remark}
    When $\cQ\ll\cP$, the variational optimiser is, up to an affine rescaling, the likelihood ratio $d\cQ/d\cP$. Thus the protocol sends a noisy bit whose bias is an affine function of the likelihood ratio. The reversal in \Cref{thm:chisq-upper-bounds} comes from measuring the variance of this statistic.
\end{remark}

\begin{remark}
    The $\chi^2$ upper bound is not always sharp. For the example in \Cref{rem:bounded-likelihood-rare-event},
    \begin{align*}
        \cP=\ber((1+\varepsilon)\rho),
        \qquad
        \cQ=\ber(\rho),
    \end{align*}
    both directed $\chi^2$ divergences are of order $\varepsilon^2\rho$. Hence \Cref{thm:chisq-upper-bounds} gives an upper bound of order $1/(\varepsilon^2\rho)$ at constant advantage, whereas the information complexity under either hypothesis is of order $\log(1/\rho)/\varepsilon^2$.

    On the other hand, under the bounded-likelihood assumption of \Cref{thm:bounded-likelihood-ratio-lower-bound},
    \begin{align*}
        4\dtv^2(\cP,\cQ)
        &\leq
        \chisq(\cP,\cQ)
        \leq
        \varepsilon^2,
        \\
        4\dtv^2(\cP,\cQ)
        &\leq
        \chisq(\cQ,\cP)
        \leq
        \frac{\varepsilon^2}{1-\varepsilon}.
    \end{align*}
    Consequently, the additional condition $\dtv(\cP,\cQ)\asymp\varepsilon$ implies
    \begin{align*}
        \chisq(\cP,\cQ)
        \asymp
        \chisq(\cQ,\cP)
        \asymp
        \varepsilon^2.
    \end{align*}
    In this regime, \Cref{thm:chisq-upper-bounds} matches the lower bound in \Cref{thm:bounded-likelihood-ratio-lower-bound} under both hypotheses.
\end{remark}

\section{Applications}\label{sec:applications}

We give two applications of our information-complexity bounds for the coin problem. First, we use the bound for testing highly biased Bernoulli distributions, together with an information-complexity direct-sum argument, to recover the broadcast-model communication lower bound for set disjointness due to \cite{braverman2015information}. Second, we use the broadcast-to-streaming framework of \cite{streaming2025} to strengthen multi-pass streaming lower bounds. Besides improving the primitive Bernoulli-testing bound in the low-bias regime, we obtain an additional $\log(1/\nu)$ factor for the sparse-background analogue of the core task of \cite{brown2022strong,streaming2025}.

\subsection{Set disjointness in the broadcast model}\label{subsec:braverman_setdisj}

In the $k$-party set-disjointness problem, player $i$ receives a set $X_i\subseteq[n]$, and the players must determine whether their sets have a common element. Identifying each set with its indicator vector, the function to be computed is
\begin{align*}
    \bigvee_{j\in[n]}\bigwedge_{i\in[k]}X_i(j).
\end{align*}
Thus, set disjointness consists of $n$ copies of the $k$-party \textsc{And} function. Braverman and Oshman \cite{braverman2015information} proved an $\Omega(n\log k)$ communication lower bound in the broadcast model by combining a direct-sum argument with a conditional information-complexity lower bound for a single copy of \textsc{And}. We show that our coin-problem theorem gives a simpler route to the latter bound.

The direct-sum argument uses a distribution supported on the $0$-outputs of \textsc{And}. It also requires the players' inputs to be independent after conditioning on an auxiliary random variable. To construct the hard distribution for \textsc{And} on $X_1,\dots,X_k$, \cite{braverman2015information} chooses $Z$ uniformly from $[k]$, sets $X_Z=0$, and draws the remaining coordinates independently from $\ber(1-1/k)$. Their main technical step shows that every protocol computing \textsc{And} with constant advantage satisfies
\begin{align*}
    \mi(X;\Pi\mid Z)\gtrsim\log k.
\end{align*}
The difficulty is not obtaining a coin-problem lower bound at bias $1-1/k$, but transferring such a bound to this conditional distribution. Indeed, an i.i.d.\ vector with coordinates distributed as $\ber(1-1/k)$ has only one zero in expectation. Forcing the coordinate $X_Z$ to be zero therefore changes the input distribution substantially, and there is no high-probability collection of naturally occurring zeros over which one can average. Braverman and Oshman overcome this obstacle through a delicate analysis of the posterior distribution of $Z$ given the transcript.

We use a slightly different distribution. We again choose $Z$ uniformly from $[k]$ and set $X_Z=0$, but now draw the remaining coordinates independently from $\ber(1-1/\sqrt{k})$. An i.i.d.\ vector from this background distribution has $\Theta(\sqrt{k})$ zeros with high probability. The forced zero can therefore be averaged over a large collection of natural zeros, which is what makes the transfer from the coin problem elementary. This choice allows us to invoke directly the information-complexity lower bound for testing $\ber(1)$ against $\ber(1-1/\sqrt{k})$, or equivalently $\ber(0)$ against $\ber(1/\sqrt{k})$ after complementing the bits. Under the non-degenerate hypothesis, this coin problem has information complexity $\Theta(\log k)$ by \Cref{thm:coin-info-section3}. The following lemma transfers that lower bound to the required conditional information cost.

\begin{restatable}{lemma}{bravermandisj}\label{lem:brav_disj}
    Let $Z$ be uniformly distributed on $[k]$. Conditional on $Z$, set $X_Z=0$ and draw $X_i\sim\ber(1-1/\sqrt{k})$ independently for $i\neq Z$. Let $\Pi$ be a broadcast protocol that computes $\bigwedge_{i\in[k]}b_i$ with advantage $\delta\in(0,1)$ for every $b\in\{0,1\}^k$. Then, for all sufficiently large $k$,
    \begin{align*}
        \mi(X;\Pi(X)\mid Z)
        \gtrsim
        \delta^2\log k.
    \end{align*}
\end{restatable}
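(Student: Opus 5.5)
Let $\nu=1/\sqrt{k}$, and write $\cP=\ber(1-\nu)$ for the background law and $\cQ=\ber(1)$ for the all-ones hypothesis. We run the coin problem of $\cP$ versus $\cQ$ through the cut-and-paste framework of \Cref{subsection:lower-bound-framework}, measuring information under $\cP$. The key quantity is the per-coordinate information $\mi_\cP(X_i;\Pi)$ when $X\sim\cP^{\otimes k}$.

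First, since $\Pi$ computes \textsc{And} with advantage $\delta$ on every input, it distinguishes $\cQ^{\otimes k}$ (the all-ones input, output $1$) from $\cP^{\otimes k}$. The latter has a zero, so its output is $0$, except on the event that every coordinate equals one. That event has probability $(1-\nu)^k\le e^{-\sqrt k}$. Hence $\dtv(\Pi_\cP,\Pi_\cQ)\geq\delta-e^{-\sqrt k}\gtrsim\delta$ for large $k$. After complementing bits, this is testing $\ber(\nu)$ against $\ber(0)$ in regime 1, and the lower-bound argument of \Cref{thm:coin-info-section3} gives
\begin{align*}
\sum_{i\in[k]}\mi_\cP(X_i;\Pi)\gtrsim \delta^2\frac{\ent(\nu)}{\hel(\nu,0)}\asymp \delta^2\log(1/\nu)\asymp\delta^2\log k.
\end{align*}
More precisely, I would use the per-coordinate inequality from \Cref{thm:mixed-hellinger-js} summed via \Cref{lem:cut-paste}, so that the quantity bounded is the sum of per-coordinate informations rather than $\mi_\cP(X;\Pi)$ itself.

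Second, relate the target to these per-coordinate terms. By the chain rule and independence given $Z$,
\begin{align*}
\mi(X;\Pi\mid Z)\geq \frac1k\sum_{z}\sum_{i\neq z}\mi(X_i;\Pi\mid Z=z),
\end{align*}
where each term is computed under the law with $X_z=0$ and the other coordinates drawn from $\cP$. The main obstacle is comparing $\mi(X_i;\Pi\mid Z=z)$ with $\mi_\cP(X_i;\Pi)$, where $X_z$ is also random. The plan is to write the $\cP$-law as a mixture over $X_z$:
\begin{align*}
\mi_\cP(X_i;\Pi)\leq \mi_\cP(X_i;\Pi\mid X_z)+\mi_\cP(X_i;X_z\mid\Pi)\le\nu\,\mi(X_i;\Pi\mid X_z=0)+(1-\nu)\,\mi(X_i;\Pi\mid X_z=1)+\mi_\cP(X_i;X_z\mid \Pi).
\end{align*}
This does not immediately favour the $X_z=0$ branch. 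So the natural move is to average over $z$ and exploit symmetry. Summing over the pair $(i,z)$, the terms $\mi(X_i;\Pi\mid X_z=0)$ average exactly to $\mi(X;\Pi\mid Z)$ up to a factor. The problem is that the $X_z=1$ terms dominate in weight.

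I would instead avoid this by a change of measure. The background $\cP^{\otimes k}$ has about $\sqrt k$ zeros with high probability. Conditioned on the set $S$ of zeros, picking a uniform zero gives a law close to that of $(X,Z)$. Concretely, the law of $X$ under our hard distribution has density ratio $\frac{\#\{\text{zeros}\}}{k\nu}$ relative to $\cP^{\otimes k}$. This ratio concentrates around $1$ and is bounded by $O(1)$ outside an event of tiny probability. Define the conditional entropy-based quantity $\sum_i H(X_i\mid \Pi, X_{<i})$ under both laws. The reverse inequality to superadditivity is not available under $\cP$, so I would work with the cut-and-paste sum directly. The Hellinger terms $\hel(\Pi_\cP,\Pi_\cP(X\subt{i}\cQ))$ are expectations of functionals of the transcript law. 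These expectations transfer between the two laws up to constants via the bounded density ratio, plus a negligible error from the bad event (Hellinger is bounded by $1$, so a loss of $e^{-\Omega(\sqrt k)}$ per term times $k$ terms is negligible).

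Finally, the per-coordinate mixed Hellinger--Jensen--Shannon inequality is applied under the hard distribution conditional on $Z=z$. For each $i\neq z$, coordinate $i$ is still $\ber(1-\nu)$ and independent of the others given $Z=z$. So $\mi(X_i;\Pi\mid Z=z)\gtrsim\log k\cdot\hel$ of the transcript laws with $X_i$ resampled from $\cP$ versus set to one. Summing over $i\neq z$ and averaging over $z$ gives $\log k$ times an averaged cut-and-paste sum. By the density-ratio comparison above, this sum is within constants of $\sum_i\hel(\Pi_\cP,\Pi_\cP(X\subt{i}\cQ))\gtrsim\delta^2$. The step I expect to be delicate is justifying this density-ratio transfer inside the Hellinger terms, which are not linear in the input law. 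I would handle it by expressing the mixture over $z$ as a mixture of the cut-and-paste laws and using joint convexity of $\hel$ on the appropriate side.
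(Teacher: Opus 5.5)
\textbf{There is a gap at the transfer step you flag, and joint convexity cannot fill it.} The rest of your reduction is sound:
\begin{itemize}
\item superadditivity given $Z=z$;
\item the bound $\mi(X_i;\Pi\mid Z=z)\geq\log(1/\nu)\,\hel\lp\Pi_{(z)}(X),\Pi_{(z)}(X\subt{i}1)\rp$ from \Cref{thm:mixed-hellinger-js}, where $\Pi_{(z)}$ is the transcript law with $X_z=0$ and the other coordinates i.i.d.\ $\cP$;
\item the i.i.d.\ cut-and-paste bound $\sum_i\hel\lp\Pi_\cP(X),\Pi_\cP(X\subt{i}\cQ)\rp\gtrsim\delta^2$.
\end{itemize}
Averaging over $z\neq i$, joint convexity does bound the planted terms below by $\hel$ of the transcript laws under $\mu_i\otimes\cP$ versus $\mu_i\otimes\ber(1)$. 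Here $\mu_i$ is the law of $X_{-i}$ with one zero planted uniformly in $[k]\setminus\{i\}$; its density relative to $\cP^{\otimes(k-1)}$ is $S_{-i}/((k-1)\nu)$.

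To reach the i.i.d.\ cut-and-paste sum, you then need a \emph{lower} bound on this $\mu_i$-term in terms of the same term with $\mu_i$ replaced by $\cP^{\otimes(k-1)}$. Joint convexity only bounds the Hellinger divergence of a mixture from above by those of its components, so it cannot supply this. For general channel pairs the transfer is in fact false, even with density ratio in $[1/2,2]$. Take inputs $u,v$ and channels $Au=Bv=\delta_a$, $Av=Bu=\delta_b$. The input law $(1/2,1/2)$ makes the two outputs identical, while $(3/4,1/4)$ does not. Total-variation closeness does not help either: the planted and i.i.d.\ laws are $\Theta(k^{-1/4})$ apart, while each Hellinger term is $O(\nu)=O(k^{-1/2})$. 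Note also that the transfer needs a \emph{lower} bound on the density ratio on a likely event, not the $O(1)$ upper bound you mention.

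\textbf{The missing idea is that these Hellinger terms \emph{are} linear in the law of the other coordinates for blackboard protocols.} You assumed they are not. Condition on the public coin and use the rectangle property $\Pr[\Pi=\tau\mid x]=\prod_j q_j(\tau,x_j)$. Write $M_\mu(\tau)=\E_{\mu}\prod_{j\neq i}q_j(\tau,X_j)$ and $Q_\cR(\tau)=\E_{X_i\sim\cR}\,q_i(\tau,X_i)$. If $X_i$ is independent of $X_{-i}\sim\mu$, the two transcript laws are $M_\mu(\tau)Q_\cP(\tau)$ and $M_\mu(\tau)Q_1(\tau)$. Hence
\begin{align*}
\hel\lp\Pi[\mu\otimes\cP],\Pi[\mu\otimes\ber(1)]\rp=\E_{x_{-i}\sim\mu}\,h_i(x_{-i}),\qquad h_i(x_{-i})=\hel\lp\Pi(x_{-i},X_i\sim\cP),\Pi(x_{-i},X_i=1)\rp\in[0,1],
\end{align*}
so joint convexity holds with equality. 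Since $d\mu_i/d\cP^{\otimes(k-1)}\geq 1/2$ on $\{S_{-i}\geq(k-1)\nu/2\}$, this gives $\E_{\mu_i}h_i\geq\frac12\bigl(\E_{\cP^{\otimes(k-1)}}h_i-e^{-\Omega(\sqrt k)}\bigr)$. Summing over $i$ completes your argument.

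\textbf{Comparison with the paper.} Once repaired, your route genuinely differs from the paper's, which transfers the \emph{full} information instead. The paper:
\begin{itemize}
\item lower-bounds $\mi(Y;\Pi(Y))$ for i.i.d.\ $Y$ using the coin theorem as a black box;
\item writes it as $\E_Y[\dkl(P_{\Pi|Y}\|P_\Pi)]$, which is linear in the input law for a fixed reference measure;
\item replaces the indicator of $\{S\geq\sqrt k/2\}$ by $\frac{2}{\sqrt k}S$;
\item pays for switching the reference from $P_\Pi$ to $P_{\Pi|A_i}$ via $\dkl(P_{\Pi|A_i}\|P_\Pi)\leq\sqrt k\,\mi(\text{indicator of }A_i;\Pi)$, summed with ordinary superadditivity.
\end{itemize}
So no reverse superadditivity is needed in the paper. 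Your version trades that bookkeeping for the protocol's rectangle structure and a direct use of the mixed Hellinger--Jensen--Shannon inequality under the planted law.
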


The proof, given in \Cref{app:set-disjointness-application}, compares this conditional distribution with an i.i.d.\ vector $Y\sim\ber(1-1/\sqrt{k})^{\otimes k}$. Our coin-problem lower bound gives $\mi(Y;\Pi(Y))=\Omega(\delta^2\log k)$. Since $Y$ contains $\Theta(\sqrt{k})$ zero coordinates with high probability, averaging over the possible locations of a forced zero transfers a constant fraction of this information cost to $\mi(X;\Pi(X)\mid Z)$. This yields \Cref{lem:brav_disj}.

The distribution in \Cref{lem:brav_disj} is supported on the $0$-outputs of \textsc{And}, and its coordinates are independent conditional on $Z$. It therefore satisfies the two properties required by the information-complexity direct-sum theorem used in \cite{bar2004information,braverman2015information}. Applying that theorem across the $n$ coordinates gives an $\Omega(n\log k)$ information lower bound for set disjointness. Since information cost is a lower bound on communication, this recovers the corresponding broadcast-model communication lower bound of Braverman and Oshman.
\subsection{Multi-pass streaming information complexity}\label{subsec:streaming-information-application}

We next apply our bound for testing $\ber(0)$ against $\ber(\alpha)$ to the multi-pass streaming information-complexity framework of \cite{streaming2025}. Let $\cA$ be an $L$-pass streaming algorithm operating on a stream $X=(X_1,\ldots,X_N)$, and let $M_{i,r}$ be its memory state after processing $X_i$ during pass $r$. Following \cite{braverman2024newinfocomp,streaming2025}, set $M_{0,r}=M_{N,r-1}$ and write $M_{i,\leq r}=(M_{i,1},\ldots,M_{i,r})$. The multi-pass streaming information complexity of $\cA$ under $X\sim\ber(\theta)^{\otimes N}$ is
\begin{align*}
    \ics_\theta(\cA,X)
    := {}&
    \sum_{r\in[L]}\sum_{i\in[N]}\sum_{k=1}^{i}
    \mi_\theta\lp
        M_{i,r};X_k
        \mid M_{k-1,\leq r},M_{i,\leq r-1}
    \rp
    \\
    &+
    \sum_{r\in[L]}\sum_{i\in[N]}\sum_{k=i+1}^{N}
    \mi_\theta\lp
        M_{i,r};X_k
        \mid M_{k-1,\leq r-1},M_{i,\leq r-1}
    \rp.
\end{align*}
Unlike ordinary transcript information, this quantity can charge information about one input coordinate at several later memory states. It is nevertheless controlled by the memory usage: if $\cA$ uses at most $S$ bits of memory, then \cite[Lemma~10]{streaming2025} gives
\begin{align}\label{eq:streaming-ic-space-upper}
    \ics_\theta(\cA,X)\leq 2LSN.
\end{align}

We first record the form of the lower bound from \cite{streaming2025} that is supported by their proof. 

\begin{theorem}[Variant of Theorem~9 in \cite{streaming2025}]\label{thm:lzw-streaming-strengthened}
    Let $0\leq\beta<\alpha\leq1$, and suppose that an $L$-pass streaming algorithm $\cA$ distinguishes $\ber(\beta)^{\otimes N}$ from $\ber(\alpha)^{\otimes N}$ with advantage $\delta\in(0,1)$. Then
    \begin{align*}
        \ics_\alpha(\cA,X)
        &\gtrsim
        \min\{1/2,\alb\}
        \frac{\alpha^2}{(\alpha-\beta)^2}\delta^3,
        &&X\sim\ber(\alpha)^{\otimes N},
        \\
        \ics_\beta(\cA,X)
        &\gtrsim
        \min\{1/2,\beta\}
        \frac{\betb^2}{(\alpha-\beta)^2}\delta^3,
        &&X\sim\ber(\beta)^{\otimes N}.
    \end{align*}
\end{theorem}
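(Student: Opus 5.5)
I plan to follow the reduction of \cite{streaming2025}, from multi-pass streaming information complexity to a broadcast-type quantity, and only redo the final per-coordinate step. Their Theorem~9 takes an $L$-pass streaming algorithm $\cA$ and builds a blackboard-style object. Its transcript consists of the memory states at the boundaries of each pass, together with the intermediate states revealed in the simulation. Their analysis gives two facts. First, $\ics_\theta(\cA,X)$ dominates a sum over coordinates $i$ of per-coordinate informations $\mi_\theta(X_i;\Pi)$ for a suitable (possibly randomised) transcript $\Pi$. Second, the transcript distinguishes the two product hypotheses with advantage that, after their averaging/embedding step, is of order $\delta$, at the cost of one extra factor of $\delta$. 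This accounts for the $\delta^3$ rather than $\delta^2$. I take this chain as given and state precisely the per-coordinate inequality their proof consumes. It has the form \Cref{eq:eta_h}, namely $\mi_\alpha(X_i;\Pi)\gtrsim \eta\,\hel(\Pi_\alpha(X),\Pi_\alpha(X\subt{i}\beta))$.

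For this step, their Theorem~9 (stated for $\beta=0$) used the two convexity inequalities described in \Cref{subsec:coin-comparison}. To obtain the variant, I redo those two inequalities for general $\beta$ without loss. The first is concavity of $t\mapsto\js_t(\cP_i,\cQ_i)$. For $\alpha\ge1/2$ it gives $\mi_\alpha(X_i;\Pi)\geq \frac{\alb}{1/2}\mi_{1/2}(X_i;\Pi)$, and for $\alpha\le1/2$ it gives $\mi_\alpha\ge 2\alpha\,\mi_{1/2}$. Here $\mi_{1/2}\gtrsim\hel(\cP_i,\cQ_i)$, by the standard comparison of Jensen--Shannon and Hellinger. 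This is exactly where the factor $\min\{1/2,\alb\}$ appears, in place of $\min\{\alpha,\alb\}$: for $\alpha\le 1/2$ I keep the $\alpha$ and combine it with the next step. The second is joint convexity of $\hel$. Writing $\Pi_\alpha(X)=\alpha\cP_i+\alb\cQ_i$ and $\Pi_\alpha(X\subt{i}\beta)=\beta\cP_i+\betb\cQ_i$, their difference is $(\alpha-\beta)(\cP_i-\cQ_i)$. Joint convexity (or the direct pointwise bound $(\sqrt{a}-\sqrt{b})^2\le (a-b)^2/\max(a,b)$) then gives
\begin{align*}
\hel\lp\Pi_\alpha(X),\Pi_\alpha(X\subt{i}\beta)\rp\lesssim \frac{(\alpha-\beta)^2}{\alpha^2}\hel(\cP_i,\cQ_i).
\end{align*}
Here I use that $\alpha\cP_i+\alb\cQ_i\ge \alpha\cP_i$ and compare $\sqrt{\cdot}$ differences coordinatewise.

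Combining the two steps gives
\begin{align*}
\mi_\alpha(X_i;\Pi)\gtrsim \min\{1/2,\alb\}\frac{\alpha^2}{(\alpha-\beta)^2}\hel\lp\Pi_\alpha(X),\Pi_\alpha(X\subt{i}\beta)\rp.
\end{align*}
For $\alpha\le1/2$, the $2\alpha$ from concavity together with the Hellinger comparison is at least of this order, since $\alpha\gtrsim\alpha^2$. I then feed this $\eta$ into the remainder of their argument, namely the cut-and-paste \Cref{lem:cut-paste}, \Cref{fact:tv-hel-ineq}, and their streaming-to-broadcast transfer, which yields the first inequality with $\delta^3$. The second inequality follows by the bit-flip relabelling used in \Cref{subsec:coin-lower-bounds}, which sends $(\alpha,\beta)\mapsto(\betb,\alb)$ and turns $\alb$ into $\beta$ and $\alpha^2$ into $\betb^2$.

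The main obstacle is verifying that the \cite{streaming2025} streaming-to-broadcast reduction really only uses the per-coordinate inequality in the form \Cref{eq:eta_h}, and that it uses it with $\cQ=\ber(\beta)$ substituted at coordinate $i$, rather than something specific to $\beta=0$. That reduction involves conditioning on memory states across passes. If it instead needs the bound for conditional transcripts given $M_{i,\le r-1}$, I would apply the same two inequalities conditionally, pointwise in the conditioning, and average. Both steps are pointwise in the channel, so this should go through.
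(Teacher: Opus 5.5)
There is a genuine gap. The Hellinger comparison your argument rests on,
\[
\hel(\alpha\cP_i+\alb\cQ_i,\ \beta\cP_i+\betb\cQ_i)\lesssim\frac{(\alpha-\beta)^2}{\alpha^2}\,\hel(\cP_i,\cQ_i),
\]
is false when $\beta$ is close to $1$. Write $a=\alpha p+\alb q$ and $b=\beta p+\betb q$. Your pointwise step uses $a\ge\alpha p$, which only controls points where $p\gtrsim q$. Where $q\gg p$, the best available bound is $\max\{a,b\}\ge\betb q$, and the correct estimate is $\lesssim\frac{(\alpha-\beta)^2}{\min\{\alpha,\betb\}}\hel(\cP_i,\cQ_i)$.

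For a concrete counterexample, let party $i$ simply write its bit, so $\cP_i,\cQ_i$ have disjoint supports and $\hel(\cP_i,\cQ_i)=1$. Take $\alpha=1-\epsilon$ and $\beta=1-2\epsilon$. The left side is $\hel(\ber(\alpha),\ber(\beta))\asymp(\alpha-\beta)^2/\betb\asymp\epsilon$, while the right side is $\asymp\epsilon^2$. Your combined per-coordinate inequality fails on the same example. Its right-hand side is $\min\{1/2,\alb\}\frac{\alpha^2}{(\alpha-\beta)^2}\hel(\ber(\alpha),\ber(\beta))\asymp 1$, but $\mi_\alpha(X_i;\Pi)=\ent(\alpha)\asymp\epsilon\log(1/\epsilon)\to0$.

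This cannot be repaired by a sharper per-coordinate inequality. Any argument that lower-bounds $\ics_\alpha$ through the broadcast information cost of a protocol distinguishing $\ber(\alpha)$ from $\ber(\beta)$ is capped by $\icht_\alpha(\alpha,\beta,\delta)$. For $1/2\le\beta<\alpha$, the $Z$-channel protocol in \Cref{thm:coin-info-section3} shows $\icht_\alpha\lesssim\alb\betb\log(1/\betb)/(\alpha-\beta)^2$. The statement instead claims $\ics_\alpha\gtrsim\alb/(\alpha-\beta)^2$, which is larger by a factor $1/(\betb\log(1/\betb))$; the remark after \Cref{thm:streaming-regime-one-improvement} records exactly this. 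Symmetrically, the second bound exceeds $\icht_\beta$ as $\alpha\to0$. So the factor $\alpha^2/(\alpha-\beta)^2$ must come from the streaming structure itself, not from a broadcast Hellinger comparison.

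That is what the reduction in \cite{streaming2025} actually does, and it differs from your description of it.
\begin{enumerate}
\item A $\ber(\beta)^{\otimes N}$ input is realised by taking $\ber(\alpha)^{\otimes N}$ and forcing to zero a random set $S$. Each coordinate joins $S$ independently with probability $\lambda=(\alpha-\beta)/\alpha$.
\item For each fixed $S$, the block simulation embeds a broadcast instance of $\ber(0)$ versus $\ber(\alpha)$, not of $\ber(\alpha)$ versus $\ber(\beta)$.
\item The only broadcast ingredient is the $\beta=0$ bound $\min\{1/2,\alb\}\delta^2$ from \Cref{thm:colt25}.
\item A local information term of $\ics_\alpha$ appears only when two relevant locations both lie in $S$, which has probability of order $\lambda^2$. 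This lifts the broadcast bound by $1/\lambda^2=\alpha^2/(\alpha-\beta)^2$, and the reduction costs the extra factor of $\delta$.
\end{enumerate}
The second inequality then follows by complementing bits, as you say.

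Your per-coordinate route is valid only where $\betb\gtrsim\alpha^2$, for example $\beta\le1/2$. There, combined with \cite[Claim~11]{streaming2025}, it would even give $\delta^2$. It cannot cover the full range $0\le\beta<\alpha\le1$ in the statement.
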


The proof in \cite{streaming2025} represents a sample from $\ber(\beta)$ by first drawing a random subset $S\subseteq[N]$, including each coordinate independently with probability
\begin{align*}
    \lambda=\frac{\alpha-\beta}{\alpha},
\end{align*}
and then forcing the coordinates in $S$ to zero while drawing the remaining coordinates independently from $\ber(\alpha)$. For each fixed $S$, their block simulation embeds a broadcast instance of testing $\ber(0)$ against $\ber(\alpha)$. A local information term appears only when two relevant locations belong to $S$, which occurs with probability of order $\lambda^2$. Consequently, a broadcast lower bound $B_\alpha(0,\alpha,\delta)$ lifts to a streaming lower bound with an extra factor $1/\lambda^2$. Applying the bound in \Cref{thm:colt25}, and repeating the argument after complementing the input bits, gives \Cref{thm:lzw-streaming-strengthened}.

\begin{remark}
    The statement of Theorem~9 in \cite{streaming2025} differs from \Cref{thm:lzw-streaming-strengthened} in two respects. First, its two prefactors are $\min\{\alpha,\alb\}$ and $\min\{\beta,\betb\}$, respectively. In the low-bias directions, the argument can instead use the strengthened endpoint bound in \Cref{thm:colt25}, giving $\min\{1/2,\alb\}$ and $\min\{1/2,\beta\}$. Second, the theorem is stated with a factor $\delta^2$, but the proof actually yields a factor of $\delta^3$. This does not affect the constant-advantage setting.
\end{remark}

For the low-bias regime, the domination of full-transcript information by multi-pass streaming information complexity allows us to apply \Cref{thm:coin-info-section3} directly. This gives the following logarithmic improvement over \Cref{thm:lzw-streaming-strengthened}.

\begin{theorem}[Low-bias multi-pass streaming lower bound]\label{thm:streaming-regime-one-improvement}
    Let $0\leq\beta<\alpha\leq1/2$, and suppose that an $L$-pass streaming algorithm $\cA$ distinguishes $\ber(\beta)^{\otimes N}$ from $\ber(\alpha)^{\otimes N}$ with advantage $\delta\in(0,1)$. Then, for $X\sim\ber(\alpha)^{\otimes N}$,
    \begin{align*}
        \ics_\alpha(\cA,X)
        \gtrsim
        \frac{\ent(\alpha)}{\hel(\alpha,\beta)}\delta^2
        \asymp
        \frac{\alpha^2\log(1/\alpha)}{(\alpha-\beta)^2}\delta^2.
    \end{align*}
\end{theorem}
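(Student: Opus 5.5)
The plan is to reduce the streaming statement directly to the broadcast lower bound of \Cref{thm:coin-info-section3}, without the subset-forcing block simulation used for \Cref{thm:lzw-streaming-strengthened}. The phrase ``domination of full-transcript information by multi-pass streaming information complexity'' points to the key reduction from \cite{streaming2025}. From any $L$-pass streaming algorithm $\cA$ on $N$ samples we build an $N$-party blackboard protocol $\Pi_\cA$. Party $i$ holds $X_i$, and in pass $r$ it writes the memory state $M_{i,r}$, which it can compute from $M_{i-1,r}$ (already on the board) and its own sample, using shared public randomness for the algorithm's coins. The full transcript is then $\Pi=(W,(M_{i,r})_{i,r})$.

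First, $\Pi_\cA$ distinguishes the two hypotheses with advantage at least $\delta$, because the final memory state, and hence the output of $\cA$, is a function of the transcript. Second, I will invoke the domination statement of \cite{streaming2025}:
\begin{align*}
\ics_\theta(\cA,X)\geq \mi_\theta(X;\Pi_\cA)
\end{align*}
for product inputs $X\sim\ber(\theta)^{\otimes N}$. This is the step the paper says ``allows us to apply \Cref{thm:coin-info-section3} directly''. If I had to reprove it, I would expand $\mi(X;\Pi)$ by the chain rule over the transcript entries in writing order $(r,i)$, and then over the coordinates $X_k$ within each term. I would then check that each resulting conditional term is bounded by the corresponding summand in $\ics_\theta$, using the independence of the $X_k$ and the Markov structure of the stream to drop or add conditioning appropriately. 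This bookkeeping is the main obstacle. The conditioning sets in the definition of $\ics_\theta$ are chosen exactly so that this works, so I would take the result from \cite{streaming2025} rather than redo it.

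With these two facts in hand, the regime-1 case ($\beta<\alpha\le 1/2$) of \Cref{thm:coin-info-section3} gives
\begin{align*}
\mi_\alpha(X;\Pi_\cA)\geq \icht_\alpha(\alpha,\beta,\delta)\gtrsim \delta^2\,\frac{\ent(\alpha)}{\hel(\alpha,\beta)}.
\end{align*}
Chaining this with the domination inequality yields the first bound in the statement.

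The second form follows from $\ent(\alpha)\asymp\alpha\log(1/\alpha)$ for $\alpha\le 1/2$ and from \Cref{fact:hel_ber}, which gives $\hel(\alpha,\beta)\asymp(\alpha-\beta)^2/\alpha$. At the endpoint $\beta=0$ the bound remains valid, since \Cref{thm:mixed-hellinger-js} covers $\beta=0$. Finally, I would remark that, compared with \Cref{thm:lzw-streaming-strengthened}, this gives the $\log(1/\alpha)$ improvement and keeps the $\delta^2$ dependence.
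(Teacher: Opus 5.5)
Your proposal matches the paper's proof, which likewise invokes \cite[Claim~11]{streaming2025} to get $\ics_\alpha(\cA,X)\geq\icht_\alpha(\delta)$, applies the regime-1 lower bound of \Cref{thm:coin-info-section3}, and converts the bound using $\ent(\alpha)\asymp\alpha\log(1/\alpha)$ and \Cref{fact:hel_ber}. One small slip: in your simulation, each party should draw the algorithm's per-step coins as private randomness rather than as a public coin $W$ kept in the transcript, because $\mi(X;W,M)=\mi(X;M\mid W)$ can exceed $\ics_\theta(\cA,X)$ (e.g.\ if each memory state also stores $X_i\oplus R_{i,r}$ for a fresh uniform bit $R_{i,r}$, no term of $\ics_\theta$ changes, but $\mi(X;M\mid W)$ becomes $H(X)$); since the coin-problem lower bound covers private-coin protocols, the rest of your argument is unaffected.
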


\begin{proof}
\cite[Claim~11]{streaming2025} gives that the streaming information complexity is at least as large as the blackboard information complexity, and hence
    \begin{align*}
        \ics_\alpha(\cA,X) &\geq
        \icht_\alpha(\delta) \gtrsim \frac{\ent(\alpha)}{\hel(\alpha,\beta)}\delta^2.
    \end{align*}
Note the extra $\log(1/\alpha)$ factor compared to the bound for this regime in \cite{streaming2025}, and the improvement of $\delta^3$ to $\delta^2$.
\end{proof}

\begin{remark}
    The same blackboard simulation applies in the other two regimes, but it does not give a comparable new parameter dependence. If $\beta<1/2<\alpha$, it gives
    \begin{align*}
        \ics_\alpha(\cA,X)
        \gtrsim
        \frac{\alb}{\hel(\alpha,\beta)}\delta^2
        \asymp
        \frac{\alb}{(\alpha-\beta)^2}\delta^2,
    \end{align*}
    which has the same parameter dependence as \Cref{thm:lzw-streaming-strengthened}; only the dependence on $\delta$ improves. If $1/2\leq\beta<\alpha$, the direct simulation instead gives
    \begin{align*}
        \ics_\alpha(\cA,X)
        \gtrsim
        \frac{\alb}{\betb}
        \frac{\ent(\beta)}{\hel(\alpha,\beta)}\delta^2
        \asymp
        \frac{\alb\betb\log(1/\betb)}{(\alpha-\beta)^2}\delta^2.
    \end{align*}
    For constant advantage, this is weaker than the bound $\ics_\alpha(\cA,X)\gtrsim\alb/(\alpha-\beta)^2$ from \Cref{thm:lzw-streaming-strengthened} by a factor of order $\betb\log(1/\betb)$. 
\end{remark}

For constant advantage, \Cref{thm:streaming-regime-one-improvement} strengthens \Cref{thm:lzw-streaming-strengthened} by a factor of order $\log(1/\alpha)$. Combining the theorem with \Cref{eq:streaming-ic-space-upper} also gives
\begin{align*}
    S
    \gtrsim
    \frac{\alpha^2\log(1/\alpha)}{LN(\alpha-\beta)^2}\delta^2.
\end{align*}

\subsubsection{Core task in \cite{brown2022strong}}\label{subsec:sparse-background-core}

We now give a direct application to the core task of Brown, Bun, and Smith \cite{brown2022strong}, using the multi-pass reductions of Li, Wang, and Zhang \cite{streaming2025}. Their construction uses the uniform background $\ber(1/2)^{\otimes d}$ and fixes each hidden coordinate across the stream to an independent $\ber(1/2)$ value. We consider the biased analogue: the background is $\ber(\nu)^{\otimes d}$, and each hidden value is drawn once from $\ber(\nu)$. Thus, after averaging over the hidden pattern, every individual feature vector still has the background law, but it has only about $\nu d$ nonzero coordinates when $\nu$ is small.

We first note the coin problem of testing $\ber(0)$ vs $\ber(\nu)$ that emerges as the relevant one-dimensional problem from this modification.

\begin{lemma}\label{lem:repeated-biased-bit}
    Let $0<\nu\leq1/2$, and let $\Pi$ be a protocol on $T$ input bits. Under the structured hypothesis, draw $B\sim\ber(\nu)$ and set $X=(B,\ldots,B)$. Under the background hypothesis, draw $X\sim\ber(\nu)^{\otimes T}$. If $\Pi$ distinguishes these hypotheses with advantage at least $a\in(0,1)$, then, under the background hypothesis,
    \begin{align*}
        \mi_\nu(X;\Pi)
        \gtrsim
        a^2\log(1/\nu).
    \end{align*}
\end{lemma}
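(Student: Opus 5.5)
Conditioning on the hidden bit turns the structured hypothesis into a mixture, and I would use this to reduce the statement to the coin problem of testing $\ber(\nu)$ against $\ber(0)$. Under the structured hypothesis, $X=(B,\ldots,B)$. With probability $\bar\nu$ this is the all-zeros vector, i.e.\ a sample from $\ber(0)^{\otimes T}$. With probability $\nu$ it is the all-ones vector. Write $\Pi_{\mathrm{str}}$ for the structured transcript law, $\Pi_0$ for the transcript law under $\ber(0)^{\otimes T}$, $\Pi_1$ for the transcript law under the all-ones input, and $\Pi_\nu$ for the transcript law under the background. Then $\Pi_{\mathrm{str}}=\bar\nu\,\Pi_0+\nu\,\Pi_1$. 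By convexity of total variation,
\begin{align*}
    a \le \dtv(\Pi_\nu,\Pi_{\mathrm{str}}) \le \bar\nu\,\dtv(\Pi_\nu,\Pi_0)+\nu\,\dtv(\Pi_\nu,\Pi_1)\le \dtv(\Pi_\nu,\Pi_0)+\nu .
\end{align*}
Hence, whenever $\nu\le a/2$, the same protocol $\Pi$ solves the coin problem $\ber(\nu)$ versus $\ber(0)$ on $T$ samples with advantage at least $a/2$.

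In the main case $\nu\le a/2$, apply the lower bound of \Cref{thm:coin-info-section3} in the regime $0=\beta<\alpha=\nu\le 1/2$, measured under $\ber(\nu)$. This gives
\begin{align*}
    \mi_\nu(X;\Pi)\gtrsim (a/2)^2\frac{\ent(\nu)}{\hel(\nu,0)} .
\end{align*}
By \Cref{fact:hel_ber}, $\hel(\nu,0)\asymp\nu$, and $\ent(\nu)\asymp\nu\log(1/\nu)$. So the bound is $\gtrsim a^2\log(1/\nu)$, as required. The information cost here is exactly the one in the lemma, since under the background hypothesis $X\sim\ber(\nu)^{\otimes T}$.

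The obstacle is the complementary regime $\nu> a/2$, where the mixture argument above gives no advantage. Here $\log(1/\nu)\le\log(2/a)$, so it suffices to prove $\mi_\nu(X;\Pi)\gtrsim a^2\log(1/\nu)$ with the log bounded by $\log(2/a)$. My first attempt would be to avoid discarding the all-ones component. The structured law is a mixture of $\ber(0)^{\otimes T}$ and $\ber(1)^{\otimes T}$, so the advantage $a$ forces either $\dtv(\Pi_\nu,\Pi_0)\gtrsim a$ or $\nu\,\dtv(\Pi_\nu,\Pi_1)\gtrsim a$.

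In the first case, the previous argument applies with advantage $\asymp a$. In the second case, $\Pi$ distinguishes $\ber(\nu)$ from $\ber(1)$ with advantage $\gtrsim a/\nu\ge a$. Complementing the bits and applying \Cref{thm:coin-info-section3} to $\ber(0)$ versus $\ber(\bar\nu)$, measured under $\ber(\bar\nu)$, gives a bound $\gtrsim a^2\,\ent(\nu)/\hel(\nu,1)$. When $\nu\le 1/2$ this is $\gtrsim a^2\log(1/\nu)$ up to constants, because $\hel(\nu,1)\asymp 1$ and $\ent(\nu)\gtrsim\nu\log(1/\nu)$ with $\nu\gtrsim a$. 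The regimes-2/3 bound under $\ber(\bar\nu)$ has the form $\bar\alpha/\hel$ or similar, which is of constant order and dominates $\nu\log(1/\nu)$ times constants.

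I would also note that this splitting actually makes the first paragraph's case distinction unnecessary, so the final proof is a single dichotomy on which component carries the advantage. The main care needed is in bookkeeping the factor $\nu$ multiplying $\dtv(\Pi_\nu,\Pi_1)$, so that the advantage in the second case stays $\gtrsim a$.
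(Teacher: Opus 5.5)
Your overall structure is the same as the paper's. You write the structured transcript law as $\bar\nu\,\Pi_0+\nu\,\Pi_1$ and use convexity of total variation to get that either $\dtv(\Pi_0,\Pi_\nu)\ge a/2$ or $\nu\,\dtv(\Pi_1,\Pi_\nu)\ge a/2$. Each branch is then fed into \Cref{thm:coin-info-section3}. Your preliminary case $\nu\le a/2$ and the first branch are handled correctly.

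The second branch is wrong as written, and it is the only branch that matters once $\nu>a/2$. For $\ber(0)$ versus $\ber(\bar\nu)$ measured under $\ber(\bar\nu)$ with $\bar\nu>1/2$, the theorem is in regime 2. It gives $\gtrsim\delta^2\,\bar\alpha/\hel(\alpha,\beta)=\delta^2\nu/(1-\sqrt\nu)\asymp\delta^2\nu$. It does not give $\delta^2\ent(\nu)/\hel(\nu,1)$, which is the regime-1 formula and is off by a factor $\log(1/\nu)$. Also, $\bar\alpha=\nu$ is not of constant order. This bound is tight: the noisy-bits protocol shows the constant-advantage cost of this coin problem really is $\Theta(\nu)$. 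So if you keep only an advantage $\gtrsim a$ in this branch, as your closing sentence proposes, you get only $\mi_\nu(X;\Pi)\gtrsim a^2\nu$. That is far below $a^2\log(1/\nu)$ for small $\nu$. Even your incorrect expression $a^2\nu\log(1/\nu)$ would not suffice, and the remark $\nu\gtrsim a$ only turns it into a bound of order $a^3\log(1/\nu)$.

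The missing point is that the factor $\nu$ multiplying $\dtv(\Pi_1,\Pi_\nu)$ must be exploited, not absorbed. In this branch $a_1:=\dtv(\Pi_1,\Pi_\nu)\ge a/(2\nu)$. The correct regime-2 bound then gives $\mi_\nu(X;\Pi)\gtrsim\nu a_1^2\ge a^2/(4\nu)\gtrsim a^2\log(1/\nu)$, using $1/\nu\ge\log(1/\nu)$. This is exactly how the paper closes the argument. You wrote down the amplified advantage $a/\nu$ but then discarded it. With that bookkeeping fixed and the regime-2 formula used, your proof becomes the paper's proof.
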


\begin{proof}
    Let $\Pi_0$, $\Pi_1$, and $\Pi_\nu$ denote the transcript laws when the input is respectively the all-zero stream, the all-one stream, and an i.i.d.\ $\ber(\nu)$ stream. The structured transcript law is
    \begin{align*}
        \Pi_{\mathrm{rep},\nu}
        =
        (1-\nu)\Pi_0+\nu\Pi_1.
    \end{align*}
    Set $a_b=\dtv(\Pi_b,\Pi_\nu)$ for $b\in\{0,1\}$. Convexity of total variation gives
    \begin{align*}
        a
        \leq
        (1-\nu)a_0+\nu a_1.
    \end{align*}
    Hence either $(1-\nu)a_0\geq a/2$ or $\nu a_1\geq a/2$. In the first case, applying \Cref{thm:coin-info-section3} to $\ber(0)$ versus $\ber(\nu)$, with information measured under $\ber(\nu)$, yields
    \begin{align*}
        \mi_\nu(X;\Pi)
        \gtrsim
        a_0^2\log(1/\nu)
        \gtrsim
        a^2\log(1/\nu).
    \end{align*}
    In the second case, the same theorem applied to $\ber(1)$ versus $\ber(\nu)$ gives
    \begin{align*}
        \mi_\nu(X;\Pi)
        \gtrsim
        \nu a_1^2
        \gtrsim
        \frac{a^2}{\nu}
        \gtrsim
        a^2\log(1/\nu),
    \end{align*}
    where the last inequality uses $\nu\leq1/2$.
\end{proof}

We next define the biased core task by replacing $\ber(1/2)$ by $\ber(\nu)$ in the description of the same task in \cite{brown2022strong}. For $0<\nu\leq1/2$ and integers $d$ and $1\leq\rho<d$, draw $R$ uniformly from $\{0,\ldots,\rho\}$, draw a uniform $R$-subset $J\subseteq[d]$, and draw $(B_i)_{i\in J}$ independently from $\ber(\nu)$. Let $P^{(\nu)}_{J,B}$ be the distribution on $\{0,1\}^d$ that fixes $X_i=B_i$ for $i\in J$ and draws the remaining coordinates independently from $\ber(\nu)$. Notice that, conditional on $J$ but after averaging over $B$,
\begin{align*}
    X\sim\ber(\nu)^{\otimes d},
    \qquad
    \E\|X\|_0=\nu d.
\end{align*}

Independently draw $k$ such distributions $P^{(\nu)}_{J_1,B_1},\ldots,P^{(\nu)}_{J_k,B_k}$, and let $P^{(\nu)}_{\mathrm{mix}}$ be their uniform labelled mixture: a sample is obtained by drawing $j$ uniformly from $[k]$, drawing $Y\sim P^{(\nu)}_{J_j,B_j}$, and returning $(j,Y)$. In the biased core task, an algorithm receives $N$ i.i.d.\ samples from $P^{(\nu)}_{\mathrm{mix}}$ and outputs a possibly random function $m:[k]\times\{0,1\}^d\to\{0,1\}$. Its advantage is
\begin{align*}
    \E\left[
        \Pr_{(j,Y)\sim P^{(\nu)}_{\mathrm{mix}}}[m(j,Y)=1]
        -
        \Pr_{\substack{j\sim\operatorname{Unif}([k])\\Y\sim\ber(\nu)^{\otimes d}}}[m(j,Y)=1]
    \right],
\end{align*}
where the expectation is over the hidden parameters, the training stream, and the randomness of the algorithm.

Write $\boldsymbol J=(J_1,\ldots,J_k)$ and $\boldsymbol B=(B_1,\ldots,B_k)$. For an auxiliary random variable $U$, the notation $\ics(\cA,X\mid U)$ denotes the quantity obtained from the definition of $\ics(\cA,X)$ by adding $U$ to the conditioning in every mutual-information term.

\begin{theorem}[Core task with sparse background]\label{thm:sparse-background-core}
    There is a universal constant $C>0$ such that the following holds. Let $0<\nu\leq1/2$ and $1\leq\rho\leq d/2$, and suppose that a randomised multi-pass streaming algorithm $\cA$ solves the biased core task with advantage at least a positive universal constant. If
    \begin{align*}
        d\log(1/\nu)
        \geq
        C(\rho+1)^2,
    \end{align*}
    then, for the structured input stream $X$,
    \begin{align*}
        \ics\lp
            \cA,X
            \mid \boldsymbol J,\boldsymbol B
        \rp
        \gtrsim
        \frac{k^2d\log(1/\nu)}{(\rho+1)^2}.
    \end{align*}
\end{theorem}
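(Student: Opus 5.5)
The plan is to follow the reduction of \cite{brown2022strong} as adapted to multi-pass streaming by \cite{streaming2025}, with \Cref{lem:repeated-biased-bit} replacing their $\ber(1/2)$ repeated-bit primitive. The point of the substitution is that the $\log(1/\nu)$ gain then propagates through an otherwise unchanged direct-sum argument.

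First, we reduce the core task to a collection of one-dimensional testing problems. Fix the hidden parameters and decompose the advantage of $\cA$ over the $k$ labelled components and the $d$ coordinates. As in \cite{brown2022strong}, a hybrid argument over coordinates shows that, on average over $(j,i)$ with $i$ a random coordinate, the algorithm's output distinguishes a stream in which coordinate $i$ of component $j$ is frozen to a hidden bit $B\sim\ber(\nu)$ from a stream in which it is fresh $\ber(\nu)$. The task's advantage is at most roughly $\sqrt{R/d}$-type per-coordinate correlation summed over $J$, and the advantage condition forces an average per-pair advantage $a\gtrsim 1/(\rho+1)$ on a constant fraction of the $kd$ pairs. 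Here the randomisation of $R\in\{0,\ldots,\rho\}$ handles the hybrid step, and the condition $\rho\le d/2$ ensures a constant fraction of coordinates lie outside $J$. I would reuse their accounting essentially verbatim, since only the background law changes and every individual feature vector still has law $\ber(\nu)^{\otimes d}$.

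Second, for each such pair $(j,i)$ we embed a broadcast protocol on the $T\approx N/k$ stream positions labelled $j$, acting on the $i$-th coordinate bits. The remaining coordinates and components are simulated with public randomness, with the hidden parameters conditioned on. This yields a protocol distinguishing the repeated-bit hypothesis from i.i.d.\ $\ber(\nu)$ with advantage $a_{j,i}$. \Cref{lem:repeated-biased-bit} then gives information at least $a_{j,i}^2\log(1/\nu)$, measured under the background law, which coincides with the structured law after averaging over $B$ for $i\notin J_j$. By \cite[Claim~11]{streaming2025} and its conditional version, this lower bounds the corresponding streaming information terms. Superadditivity of $\ics(\cA,X\mid\boldsymbol J,\boldsymbol B)$ over disjoint coordinate blocks, which is the direct-sum step of \cite{streaming2025}, then gives
\[
\ics\gtrsim \sum_{j,i}a_{j,i}^2\log(1/\nu).
\]

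Third, we convert the per-pair advantages into a global bound. The advantages are constrained by $\sum_{j,i}a_{j,i}\gtrsim kd/(\rho+1)$ up to a correction from coordinates in $J$. Cauchy--Schwarz then gives $\sum a_{j,i}^2\gtrsim kd/(\rho+1)^2$.

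The $k^2$ scaling arises as follows. In \cite{brown2022strong} the $k$ components interleave in one stream of length $N$, and each component's advantage is weighted by $1/k$ while information adds. I expect the bookkeeping to mirror theirs, with the $\log(1/\nu)$ factor carried along. The hypothesis $d\log(1/\nu)\ge C(\rho+1)^2$ is used to absorb the error terms: the additive loss from coordinates in $J$, and the alternative easy strategy of memorising $J$. These contribute at most $O(k^2)$ and must be dominated by the main term.

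The main obstacle is the first step. There I need to check that the hybrid/advantage decomposition of \cite{brown2022strong} survives with a biased background. In particular, their Fourier-style argument used the symmetry of $\ber(1/2)$, and with $\ber(\nu)$ the per-coordinate advantage split becomes asymmetric, with the $B=1$ branch weighted by $\nu$. I would try to avoid Fourier analysis by using only convexity of total variation over hybrids, as in the proof of \Cref{lem:repeated-biased-bit}. Either branch already yields the $\log(1/\nu)$ bound, which is what makes that route attractive.
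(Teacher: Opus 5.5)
Your top-level plan is the paper's: keep the reduction chain of \cite{streaming2025} (core task $\to$ Task B$'$ $\to$ Task B $\to$ Task A) and replace only their Task A bound (Lemma~18) by \Cref{lem:repeated-biased-bit}. The step you single out as the main obstacle is not one. The paper notes that those reductions use only the product structure of the background and the fact that a planted coordinate has the background marginal.

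The gap is in how you reconstruct the chain. Your Steps~2--3 embed a broadcast protocol for each pair $(j,i)$, apply \cite[Claim~11]{streaming2025}, and sum over pairs. Carried out as written, this gives only
\[
    \ics(\cA,X\mid\boldsymbol J,\boldsymbol B)\gtrsim\sum_{j,i}a_{j,i}^2\log(1/\nu)\gtrsim\frac{kd\log(1/\nu)}{(\rho+1)^2},
\]
which is a factor $k$ short. The sentence meant to supply that factor does not do so. The $1/k$ weighting of the component advantages is exactly what you already used to obtain $\sum_{j,i}a_{j,i}\gtrsim kd/(\rho+1)$, and it cannot be spent a second time.

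The missing factor is a genuinely streaming effect. Per-pair blackboard embeddings through Claim~11 do not see it, because a blackboard simulation of component $j$ is indifferent to how its samples are interleaved with those of the other components. In \cite{streaming2025} the factor enters through the Task B $\to$ Task B$'$ lift and the core-task reduction (their Theorems~28 and~30). The samples of a fixed component occupy a random subset of density about $1/k$ of the stream. As in the $1/\lambda^2$ lift the paper describes for \Cref{thm:lzw-streaming-strengthened}, a local information term is picked up only when two relevant positions both lie in that sparse subset. This is what turns a per-component bound of order $d\log(1/\nu)/(\rho+1)^2$ into the stated $k^2d\log(1/\nu)/(\rho+1)^2$.

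To repair the proposal, keep that lifting step unchanged and alter only the Task A input. Do not replace the lift with a direct sum of Claim~11 embeddings.
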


\begin{remark}
    Consider, for example, $\nu=\log(d)/d$. Each individual feature vector then has marginal distribution $\ber(\nu)^{\otimes d}$, and hence
    \begin{align*}
        \E\|X\|_0
        =
        \log d.
    \end{align*}
    Standard concentration bounds imply that $\|X\|_0=\Theta(\log d)$ with high probability. Moreover, $\log(1/\nu) = \Theta(\log d).$
    Thus, provided $\rho\lesssim\sqrt{d\log d}$, \Cref{thm:sparse-background-core} gives
    \begin{align*}
        \ics\lp
            \cA,X
            \mid \boldsymbol J,\boldsymbol B
        \rp
        \gtrsim
        \frac{k^2d\log d}{(\rho+1)^2}.
    \end{align*}
    This is an additional factor of order $\log d$ over the corresponding uniform-background bound, while each individual feature vector has only $\Theta(\log d)$ nonzero coordinates with high probability.
\end{remark}

\begin{proof}
    We follow the proof of \cite[Theorem~30]{streaming2025}. Its reduction from the core task to Task B$'$ uses \cite[Theorem~28]{streaming2025}; the latter in turn uses the Task B bound in \cite[Theorem~27]{streaming2025}, whose proof proceeds through \cite[Claims~25 and~26]{streaming2025}. These reductions use only the product structure of the background and the fact that the planted coordinate has the same marginal as the background. They therefore remain valid when the $\ber(1/2)$ is replaced by $\ber(\nu)$ in the structured and background constructions.

    The only part that changes is the Task A information bound in \cite[Lemma~18]{streaming2025}. Replacing that lemma by the bound in \Cref{lem:repeated-biased-bit} leads to the extra $\log(1/\nu)$ factor throughout the proof. We skip additional details here to avoid repetition.
\end{proof}
\section{Conclusion}\label{sec:conclusion}

We have characterised, up to universal constant factors, the information complexity of the coin problem in the broadcast model under either hypothesis and for every pair of Bernoulli parameters in the constant-advantage regime. The characterisation exhibits three different regimes where different protocols are optimal. The main technical result used to prove the lower bound is a novel mixed Hellinger--Jensen--Shannon inequality. We also considered simple binary hypothesis testing between arbitrary discrete distributions. We characterised its constant-advantage information complexity by an optimisation over channels and showed that binary-output channels suffice up to a universal constant. We also obtained interpretable lower and upper bounds in some cases. Finally, we applied the coin-problem bounds to recover the broadcast-model set-disjointness lower bound and strengthened some existing multi-pass streaming lower bounds.


An open question is characterising the tight streaming information complexity of the coin problem, either single-pass or multi-pass. Consider the single-pass setting for now. For testing $\ber(0)$ versus $\ber(\alpha)$ for small $\alpha$, \Cref{thm:streaming-regime-one-improvement} implies a lower bound of $\log(1/\alpha)$ for any single-pass streaming algorithm.
However, the most obvious algorithm maintains $M_t=\bigvee_{j=1}^{t}X_j$ for $t\leq\lceil C/\alpha\rceil$ and outputs $M_{\lceil C/\alpha\rceil}$; for a suitable constant $C$, this has constant advantage.
The streaming information complexity of this algorithm is $\asymp 1/\alpha$, which is probably the correct dependence.

It would also be interesting to understand whether the mixed Hellinger--Jensen--Shannon inequality has applications beyond the coin problem. This inequality may be useful for proving information-complexity lower bounds for other distributed hypothesis-testing and communication problems in which standard Hellinger or Jensen--Shannon inequalities are not strong enough.

\paragraph{Statement on AI use} The key technical result in  \Cref{thm:mixed-hellinger-js} was proved by hand around January 2026, at which point the available models were not strong enough to prove it. We believe current models would likely be successful if pointed in the right direction. (It is hard to be certain since our results are probably part of OpenAI's training data at this point). We used OpenAI's models for the following parts of the paper: \Cref{lem:psdpi_ber} (computing the post-SDPI constant), \Cref{lem:brav_disj} (proving the conditional information complexity lower-bound for set disjointness communication lower bound) and \Cref{thm:binary-output-suffices} (reduction to binary output channels). These were proved by AI once given the correct statement. AI was also used to prepare the manuscript.

\paragraph{Acknowledgements} Both authors gratefully acknowledge support from the Leverhulme Trust through the grant RPG-2025-226 titled ``Old Problems, New
Perspectives: A Fresh Look at Classical Hypothesis Testing.'' 

\bibliography{refrences}

@inproceedings{braverman2016communication,
  title={Communication lower bounds for statistical estimation problems via a distributed data processing inequality},
  author={M. Braverman and A. Garg and T. Ma and H. L. Nguyen and D. P. Woodruff},
  booktitle={Proceedings of the Forty-Eighth Annual {ACM} Symposium on Theory of Computing},
  pages={1011--1020},
  year={2016}
}

@article{bar2004information,
  title={An information statistics approach to data stream and communication complexity},
  author={Z. Bar-Yossef and T. S. Jayram and R. Kumar and D. Sivakumar},
  journal={Journal of Computer and System Sciences},
  volume={68},
  number={4},
  pages={702--732},
  year={2004},
  publisher={Elsevier}
}

@inproceedings{jayram2009hellinger,
  title={{Hellinger} strikes back: {A} note on the multi-party information complexity of {AND}},
  author={T. S. Jayram},
  booktitle={International Workshop on Approximation Algorithms for Combinatorial Optimization},
  pages={562--573},
  year={2009},
  organization={Springer}
}

@article{streaming2025,
  title={Multi-Pass Memory Lower Bounds for Learning Problems},
  author={Q. Li and S. Wang and J. Zhang},
  journal={Proceedings of Machine Learning Research},
  volume={291},
  pages={3671--3699},
  year={2025}
}

@article{harremoes2011pairs,
  title={On pairs of $f$-divergences and their joint range},
  author={P. Harremo{\"e}s and I. Vajda},
  journal={IEEE Transactions on Information Theory},
  volume={57},
  number={6},
  pages={3230--3235},
  year={2011},
  publisher={IEEE}
}

@book{powu2025,
  title={Information theory: From coding to learning},
  author={Y. Polyanskiy and Y. Wu},
  year={2025},
  publisher={Cambridge University Press}
}

@inproceedings{hadar2019communication,
  title={Communication complexity of estimating correlations},
  author={U. Hadar and J. Liu and Y. Polyanskiy and O. Shayevitz},
  booktitle={Proceedings of the 51st Annual {ACM} {SIGACT} Symposium on Theory of Computing},
  pages={792--803},
  year={2019}
}

@inproceedings{braverman2015information,
  title={On information complexity in the broadcast model},
  author={M. Braverman and R. Oshman},
  booktitle={Proceedings of the 2015 {ACM} Symposium on Principles of Distributed Computing},
  pages={355--364},
  year={2015}
}

@inproceedings{bravermanICM,
  title={Communication and information complexity},
  author={M. Braverman},
  booktitle={International Congress of Mathematicians},
  pages={284--320},
  year={2023},
  organization={EMS Press}
}

@inproceedings{braverman2013tight,
  title={A tight bound for set disjointness in the message-passing model},
  author={M. Braverman and F. Ellen and R. Oshman and T. Pitassi and V. Vaikuntanathan},
  booktitle={2013 {IEEE} 54th Annual Symposium on Foundations of Computer Science},
  pages={668--677},
  year={2013},
  organization={IEEE}
}

@inproceedings{woodruff2012tight,
  title={Tight bounds for distributed functional monitoring},
  author={D. P. Woodruff and Q. Zhang},
  booktitle={Proceedings of the Forty-Fourth Annual {ACM} Symposium on Theory of Computing},
  pages={941--960},
  year={2012}
}

@inproceedings{brown2022strong,
  title={Strong memory lower bounds for learning natural models},
  author={G. Brown and M. Bun and A. Smith},
  booktitle={Conference on Learning Theory},
  pages={4989--5029},
  year={2022},
  organization={PMLR}
}

@inproceedings{dagan2018detecting,
  title={Detecting correlations with little memory and communication},
  author={Y. Dagan and O. Shamir},
  booktitle={Conference on Learning Theory},
  pages={1145--1198},
  year={2018},
  organization={PMLR}
}

@article{garg2026unified,
  title={A Unified Approach to Memory-Sample Tradeoffs for Detecting Planted Structures},
  author={S. Garg and J. Hastings and C. Pabbaraju and V. Sharan},
  journal={{arXiv} preprint arXiv:2603.00770},
  year={2026}
}

@inproceedings{braverman2020coin,
  title={The coin problem with applications to data streams},
  author={M. Braverman and S. Garg and D. P. Woodruff},
  booktitle={2020 {IEEE} 61st Annual Symposium on Foundations of Computer Science ({FOCS})},
  pages={318--329},
  year={2020},
  organization={IEEE}
}

@inproceedings{braverman2024newinfocomp,
  title={A new information complexity measure for multi-pass streaming with applications},
  author={M. Braverman and S. Garg and Q. Li and S. Wang and D. P. Woodruff and J. Zhang},
  booktitle={Proceedings of the 56th Annual {ACM} Symposium on Theory of Computing},
  pages={1781--1792},
  year={2024}
}

@inproceedings{steinberger2013distinguishability,
  title={The distinguishability of product distributions by read-once branching programs},
  author={J. Steinberger},
  booktitle={2013 {IEEE} Conference on Computational Complexity},
  pages={248--254},
  year={2013},
  organization={IEEE}
}

@inproceedings{brody2010coin,
  title={The coin problem and pseudorandomness for branching programs},
  author={J. Brody and E. Verbin},
  booktitle={2010 {IEEE} 51st Annual Symposium on Foundations of Computer Science},
  pages={30--39},
  year={2010},
  organization={IEEE}
}

@inproceedings{braverman2022tight,
  title={Tight space complexity of the coin problem},
  author={M. Braverman and S. Garg and O. Zamir},
  booktitle={2021 {IEEE} 62nd Annual Symposium on Foundations of Computer Science ({FOCS})},
  pages={1068--1079},
  year={2022},
  organization={IEEE}
}

@inproceedings{Duchi,
 author = {Y. Zhang and J. Duchi and M. Jordan and M. J. Wainwright},
 booktitle = {Advances in Neural Information Processing Systems},
 editor = {C. J. Burges and L. Bottou and M. Welling and Z. Ghahramani and K. Weinberger},
 publisher = {Curran Associates, Inc.},
 title = {Information-theoretic lower bounds for distributed statistical estimation with communication constraints},
 url = {https://proceedings.neurips.cc/paper_files/paper/2013/file/d6ef5f7fa914c19931a55bb262ec879c-Paper.pdf},
 volume = {26},
 year = {2013}
}

@article{kearns2013large,
  title={Large deviation methods for approximate probabilistic inference},
  author={M. Kearns and L. Saul},
  journal={{arXiv} preprint arXiv:1301.7392},
  year={2013}
}

@inproceedings{CohEtal14,
  title={Two sides of the coin problem},
  author={G. Cohen and A. Ganor and R. Raz},
  booktitle={Approximation, Randomization, and Combinatorial Optimization. Algorithms and Techniques ({APPROX}/{RANDOM} 2014)},
  pages={618--629},
  year={2014},
  organization={Schloss Dagstuhl--Leibniz-Zentrum f{\"u}r Informatik}
}

@article{LimEtal21,
  title={A Fixed-Depth Size-Hierarchy Theorem for {\(\mathrm{AC}^0[\oplus]\)} via the Coin Problem},
  author={N. Limaye and K. Sreenivasaiah and S. Srinivasan and U. Tripathi and S. Venkitesh},
  journal={SIAM Journal on Computing},
  volume={50},
  number={4},
  pages={1461--1499},
  year={2021},
  publisher={Society for Industrial and Applied Mathematics}
}

@article{LeeVio18,
  title={The coin problem for product tests},
  author={C. H. Lee and E. Viola},
  journal={{ACM} Transactions on Computation Theory ({TOCT})},
  volume={10},
  number={3},
  pages={1--10},
  year={2018},
  publisher={ACM}
}

@article{Cov69,
  title={Hypothesis testing with finite statistics},
  author={T. M. Cover},
  journal={The Annals of Mathematical Statistics},
  volume={40},
  number={3},
  pages={828--835},
  year={1969},
  publisher={Institute of Mathematical Statistics}
}

@article{Kop75,
  title={Necessary and sufficient memory size for m-hypothesis testing},
  author={J. Koplowitz},
  journal={IEEE Transactions on Information Theory},
  volume={21},
  number={1},
  pages={44--46},
  year={1975},
  publisher={IEEE}
}

@article{PenEtal23,
  title={Communication-constrained hypothesis testing: Optimality, robustness, and reverse data processing inequalities},
  author={A. Pensia and V. Jog and P. Loh},
  journal={IEEE Transactions on Information Theory},
  volume={70},
  number={1},
  pages={389--414},
  year={2024},
  publisher={IEEE}
}

@article{PenEtal24,
  title={Simple binary hypothesis testing under local differential privacy and communication constraints},
  author={A. Pensia and A. R. Asadi and V. Jog and P. Loh},
  journal={IEEE Transactions on Information Theory},
  volume={71},
  number={1},
  pages={592--617},
  year={2025},
  publisher={IEEE}
}

@book{LeCam86,
  author    = {L. {Le Cam}},
  title     = {Asymptotic Methods in Statistical Decision Theory},
  series    = {Springer Series in Statistics},
  publisher = {Springer},
  address   = {New York, NY},
  year      = {1986},
  doi       = {10.1007/978-1-4612-4946-7}
}

@inproceedings{PenEtal24b,
  title={The sample complexity of simple binary hypothesis testing},
  author={A. Pensia and V. Jog and P. Loh},
  booktitle={The Thirty Seventh Annual Conference on Learning Theory},
  pages={4205--4206},
  year={2024},
  organization={PMLR}
}

@inproceedings{KazEtal25,
  title={The Sample Complexity of Distributed Simple Binary Hypothesis Testing under Information Constraints},
  author={H. Kazemi and A. Pensia and V. Jog},
  booktitle={The Thirty Eighth Annual Conference on Learning Theory},
  pages={3213--3214},
  year={2025},
  organization={PMLR}
}

@incollection{Tsi93,
  author    = {J. N. Tsitsiklis},
  title     = {Decentralized Detection},
  booktitle = {Advances in Statistical Signal Processing},
  pages     = {297--344},
  publisher = {JAI Press},
  year      = {1993}
}

@article{KOV16,
  author  = {P. Kairouz and S. Oh and P. Viswanath},
  title   = {Extremal Mechanisms for Local Differential Privacy},
  journal = {Journal of Machine Learning Research},
  volume  = {17},
  number  = {17},
  pages   = {1--51},
  year    = {2016},
  url     = {https://jmlr.org/papers/v17/15-135.html}
}

@misc{GNCI26,
  author        = {E. Ghazi and J. Nasser and F. Calmon and I. Issa},
  title         = {Sort, Partition, Randomize: Optimal Binary Hypothesis Testing under Local Differential Privacy},
  year          = {2026},
  eprint        = {2606.07443},
  archivePrefix = {arXiv},
  primaryClass  = {cs.IT},
  doi           = {10.48550/arXiv.2606.07443},
  url           = {https://arxiv.org/abs/2606.07443}
}

@article{shaltiel2010hardness,
  author  = {R. Shaltiel and E. Viola},
  title   = {Hardness Amplification Proofs Require Majority},
  journal = {SIAM Journal on Computing},
  volume  = {39},
  number  = {7},
  pages   = {3122--3154},
  year    = {2010},
  doi     = {10.1137/080735096}
}

@inproceedings{aaronson2010bqp,
  author    = {S. Aaronson},
  title     = {{BQP} and the Polynomial Hierarchy},
  booktitle = {Proceedings of the Forty-Second {ACM} Symposium on Theory of Computing},
  series    = {STOC '10},
  pages     = {141--150},
  publisher = {Association for Computing Machinery},
  year      = {2010},
  doi       = {10.1145/1806689.1806711}
}

\appendix

\section{Auxiliary proofs for the coin problem}\label{app:coin-auxiliary-proofs}

\subsection{Upper-bound channel calculations}\label{app:coin-upper-bound-channel-calculations}

We record the elementary channel estimates used in \Cref{subsec:coin-upper-bounds}. First let $C_\ve$ be the binary symmetric channel with crossover probability $1/2-\ve$. For $\gamma\in[0,1]$,
\begin{align*}
    C_\ve(\gamma)
    =
    \ber\lp1/2+(2\gamma-1)\ve\rp.
\end{align*}
Therefore, whenever $\beta<1/2<\alpha$,
\begin{align*}
    \hel(C_\ve(\alpha),C_\ve(\beta))
    \asymp
    \ve^2(\alpha-\beta)^2.
\end{align*}
Moreover,
\begin{align*}
    \mi_\gamma(X;C_\ve(X))
    &=
    \ent\lp1/2+(2\gamma-1)\ve\rp
    -
    \ent(1/2-\ve).
\end{align*}
To estimate this difference, recall that, for a universal constant $c_0>0$,
\begin{align*}
    \ent(1/2+t)
    =
    \ent(1/2)-c_0t^2+O(t^4).
\end{align*}
Substituting $t=(2\gamma-1)\ve$ and $t=-\ve$, and using the even Taylor expansion of entropy, gives uniformly over $\gamma\in[0,1]$,
\begin{align*}
    \mi_\gamma(X;C_\ve(X))
    &=
    c_0\lp1-(2\gamma-1)^2\rp\ve^2\lp1+O(\ve^2)\rp
    \\
    &=
    4c_0\ve^2\gamma\bar\gamma\lp1+O(\ve^2)\rp
    \asymp
    \ve^2\gamma\bar\gamma.
\end{align*}
Thus, in the straddling regime,
\begin{align*}
    \mi_\alpha(X;C_\ve(X))
    \asymp
    \ve^2\alb.
\end{align*}

Next suppose $1/2\leq\beta<\alpha$ and consider the asymmetric Z-channel used in the third regime,
\begin{align*}
    \cZ_\beta(0)
    \sim
    \ber(0),
    \qquad
    \cZ_\beta(1)
    \sim
    \ber(\beta).
\end{align*}
For an input distributed as $\ber(\gamma)$, its output law is $\cZ_\beta(\gamma)=\ber(\beta\gamma)$. Hence, by \Cref{fact:hel_ber} and its complementary form,
\begin{align*}
    \hel(\cZ_\beta(\alpha),\cZ_\beta(\beta))
    =
    \hel(\alpha\beta,\beta^2)
    \asymp
    \frac{(\alpha-\beta)^2}{\betb}.
\end{align*}
The information revealed under $\ber(\alpha)$ is
\begin{align*}
    \mi_\alpha(X;\cZ_\beta(X))
    =
    \ent(\alpha\beta)-\alpha\ent(\beta).
\end{align*}
Since $\alpha\beta=\beta-\alb\beta$, concavity of entropy and the tangent line at $\beta$ give
\begin{align*}
    \ent(\alpha\beta)
    &\leq
    \ent(\beta)
    +
    \alb\beta\log\lp\frac{\beta}{\betb}\rp.
\end{align*}
It follows that
\begin{align*}
    \mi_\alpha(X;\cZ_\beta(X))
    &\leq
    \alb\left(
        \ent(\beta)
        +
        \beta\log\lp\frac{\beta}{\betb}\rp
    \right)
    \\
    &=
    \alb\log(1/\betb).
\end{align*}

\subsection{Proof of the mixed Hellinger--Jensen--Shannon inequality}\label{app:mixed-hellinger-js}

This subsection proves \Cref{thm:mixed-hellinger-js}. We first record the auxiliary estimates used in the reduction and then assemble them at the end.

\subsubsection{The mixed Hellinger divergence as an $f$-divergence}

\begin{lemma}\label{lem:mixed-hel-f-divergence}
For any $\alpha,\beta\in[0,1]$ and two arbitrary distributions $\cP$ and $\cQ$ on a discrete space $\cX$, $\hel(\alpha \cP+\alb \cQ,\beta \cP+\betb \cQ)$ is an $f$-divergence between $\cP$ and $\cQ$.
\end{lemma}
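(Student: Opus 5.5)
We will write the Hellinger divergence of the two mixtures pointwise in terms of the likelihood ratio $t=\cP(x)/\cQ(x)$ and read off the generator. Using $\hel(\cR,\cS)=1-\sum_x\sqrt{\cR(x)\cS(x)}$, we get
\begin{align*}
    \hel(\alpha\cP+\alb\cQ,\beta\cP+\betb\cQ)
    =
    1-\sum_{x}\sqrt{\lp\alpha\cP(x)+\alb\cQ(x)\rp\lp\beta\cP(x)+\betb\cQ(x)\rp}.
\end{align*}
For $x$ with $\cQ(x)>0$, the summand equals $\cQ(x)\sqrt{(\alpha t+\alb)(\beta t+\betb)}$. This suggests the candidate
\begin{align*}
    f_{\alpha,\beta}(t)\coloneqq 1-\sqrt{(\alpha t+\alb)(\beta t+\betb)},\qquad t\in[0,\infty).
\end{align*}

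\textbf{Step 1: $f_{\alpha,\beta}$ is admissible.} At $t=1$ we have $f_{\alpha,\beta}(1)=1-\sqrt{1\cdot1}=0$. For convexity, note that $g(t)=\sqrt{(\alpha t+\alb)(\beta t+\betb)}$ is the geometric mean of two nonnegative affine functions of $t$. The map $(u,v)\mapsto\sqrt{uv}$ is concave on $[0,\infty)^2$, being positively homogeneous of degree one and concave. Hence $g$ is concave as a composition with affine maps, and $f_{\alpha,\beta}=1-g$ is convex on $[0,\infty)$. The function is finite-valued, so $D_f$ is well defined.

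\textbf{Step 2: match the boundary term of \Cref{def:f-div}.} We have
\begin{align*}
    f'_{\alpha,\beta}(\infty)=\lim_{t\to\infty}f_{\alpha,\beta}(t)/t=-\sqrt{\alpha\beta}.
\end{align*}
On points with $\cQ(x)=0<\cP(x)$, the Hellinger summand is $\sqrt{\alpha\beta}\,\cP(x)$. So the singular part contributes $-\sqrt{\alpha\beta}\,\cP(\{d\cP/d\cQ=\infty\})$ to the divergence, which agrees with $f'(\infty)\cP(\cdot)$.

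\textbf{Step 3: sum up.} Write $1=\sum_{x:\cQ(x)>0}\cQ(x)$. Combining the regular and singular parts then gives
\begin{align*}
    \hel(\alpha\cP+\alb\cQ,\beta\cP+\betb\cQ)=D_{f_{\alpha,\beta}}(\cP\|\cQ).
\end{align*}
Points where both $\cP(x)$ and $\cQ(x)$ vanish contribute nothing to either side.

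\textbf{Main obstacle.} The only point that needs care is the bookkeeping of Step 2, i.e., checking that the $f'(\infty)$ term in \Cref{def:f-div} exactly reproduces the contribution of points outside the support of $\cQ$. The convexity in Step 1 follows from a standard fact.
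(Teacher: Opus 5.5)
Your proof is correct and follows essentially the paper's route: express the divergence through $t=d\cP/d\cQ$, get convexity from concavity of $(u,v)\mapsto\sqrt{uv}$ composed with the affine maps $t\mapsto\alpha t+\alb$ and $t\mapsto\beta t+\betb$, and check that $f'(\infty)$ reproduces the mass of $\cP$ off the support of $\cQ$. The paper uses the generator $\frac{1}{2}\bigl(\sqrt{\alpha t+\alb}-\sqrt{\beta t+\betb}\bigr)^2$, which differs from yours only by the affine term $\frac{1}{2}(\alpha+\beta)(t-1)$, so the two define the same divergence; your justification of concavity of $\sqrt{uv}$ (``concave, being \ldots\ concave'') is circular as written, and the paper avoids this by checking the Hessian directly.
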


\begin{proof}[Proof of \Cref{lem:mixed-hel-f-divergence}]
    We first reformulate the expression for $\hel(\alpha \cP+\alb \cQ,\beta \cP+\betb \cQ)$ as follows.
    
    \begin{align*}
        &\hel(\alpha \cP+\alb \cQ,\beta \cP+\betb \cQ)
        =
        \frac{1}{2}\sum_{x\in\cX}\lp\sqrt{\alpha \cP(x)+\alb \cQ(x)}-\sqrt{\beta \cP(x)+\betb \cQ(x)}\rp^2
        =\\&
        \frac{1}{2}\sum_{\substack{x\in\cX s.t.\\ \cQ(x)>0}}\cQ(x)\lp\sqrt{\alpha\frac{d\cP}{d\cQ}(x)+\alb}-\sqrt{\beta\frac{d\cP}{d\cQ}(x)+\betb}\rp^2+\frac{1}{2}(\sqrt{\alpha}-\sqrt{\beta})^2 \cP\lp\{x\in\cX~s.t.~\cQ(x)=0\}\rp
    \end{align*}
    This matches the definition of $f$-divergences with $h_m(x)=1/2\left(\sqrt{\alpha x+\alb}-\sqrt{\beta x+\betb}\right)^2$. Also note that $h_m(1)=0$, so we only need to show that this function is convex. To show the convexity, we show that $S(x,y)=\sqrt{xy}$ is concave on $\reals_{+}^2$. This implies the convexity of $h_m(x)$, since
    \begin{align*}
        h_m(x) 
        =
        1/2\left((\alpha+\beta)x+(\alb +\betb)-2\sqrt{(\alpha x+\alb)(\beta x+\betb)}\right)
    \end{align*}
    Thus, it is enough to prove that $\sqrt{(\alpha x+\alb)(\beta x+\betb)}$ is concave. This function is the composition of $S(x,y)$ with an affine transformation, so it is concave. 
    To prove the concavity of $S(x,y)$, we compute its Hessian matrix and show that it is negative semidefinite.
    \begin{align*}
        \mathrm{H}_S 
        =
        \begin{bmatrix}
            \frac{-\sqrt{y}}{4x^{3/2}}& \frac{1}{4\sqrt{xy}}\\\frac{1}{4\sqrt{xy}}&\frac{-\sqrt{x}}{4y^{3/2}}
        \end{bmatrix}
    \end{align*}
    We have $\det(H_S)= 0$, and the sum of the eigenvalues is $-1/4(\frac{\sqrt{y}}{x^{3/2}}+\frac{\sqrt{x}}{y^{3/2}})\leq 0$, so the non-zero eigenvalue is negative. 
    
    It remains to verify that $\lim_{x\rightarrow\infty} h_m'(x)=\frac{1}{2}(\sqrt{\alpha}-\sqrt{\beta})^2$. We have
    
    \begin{align*}
        \lim_{x\rightarrow\infty}h_m'(x)
        =
        \lim_{x\rightarrow\infty}\frac{1}{2}\left(\alpha+\beta-\frac{2\alpha\beta x+\alpha\betb+\beta\alb}{\sqrt{(\alpha x+\alb)(\beta x+\betb)}}\right)=\frac{1}{2}(\alpha+\beta-2\sqrt{\alpha\beta})
    \end{align*}
\end{proof}

\subsubsection{Reduction to Bernoulli pairs}

By \Cref{lem:mixed-hel-f-divergence}, the map $(\cP,\cQ)\mapsto\hel(\alpha \cP+\alb \cQ,\beta \cP+\betb \cQ)$ is an $f$-divergence between $\cP$ and $\cQ$. Since $\js_\alpha(\cP,\cQ)$ is also an $f$-divergence, the joint-range theorem for pairs of $f$-divergences \cite{harremoes2011pairs} reduces \Cref{thm:mixed-hellinger-js} to the case $\cP=\ber(p)$ and $\cQ=\ber(q)$. In that case the remaining inequality is $f(\alpha,\beta,p,q)\geq0$, where
\begin{equation*}
    f(\alpha,\beta,p,q)
    =
    \js_\alpha(p,q)
    -
    \frac{\alpha^2\log(1/\alpha)}{(\alpha-\beta)^2}
    \hel(\alpha p+\alb q,\beta p+\betb q).
\end{equation*}
The next four auxiliary claims prove this binary inequality.

\subsubsection{Concavity in $\beta$}

\begin{lemma}\label{lem:concavity-in-beta}
For any fixed values $\alpha,p,q\in[0,1]$ and $\beta\neq\alpha$, we have $\frac{\partial^2}{\partial\beta^2}f(\alpha,\beta,p,q)\leq 0$.
\end{lemma}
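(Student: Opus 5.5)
The plan is to reduce \Cref{lem:concavity-in-beta} to a one-variable convexity statement. Since $\js_\alpha(p,q)$ does not depend on $\beta$, and the prefactor $\alpha^2\log(1/\alpha)\geq0$ does not depend on $\beta$ either, it suffices to show that
\begin{align*}
    g(\beta)\coloneqq\frac{\hel(\alpha p+\alb q,\beta p+\betb q)}{(\alpha-\beta)^2}
\end{align*}
is convex in $\beta$ on $[0,1]\setminus\{\alpha\}$.

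\textbf{Step 1: reparametrise.} Write $d=p-q$, $a=\alpha p+\alb q$ and $t=\beta-\alpha$. Then $\beta p+\betb q=a+td$ and $1-(\beta p+\betb q)=\bar a-td$. Since $\beta\in[0,1]$, both quantities lie in $[0,1]$. Using $1=a+\bar a$, split the Bernoulli Hellinger divergence as
\begin{align*}
    \hel=\Bigl[a-\sqrt{a(a+td)}+\tfrac{td}{2}\Bigr]+\Bigl[\bar a-\sqrt{\bar a(\bar a-td)}-\tfrac{td}{2}\Bigr].
\end{align*}
The two added linear terms cancel. Their purpose is to remove the first-order part of each bracket, which would otherwise make each bracket divided by $t^2$ singular at $t=0$. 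With this split,
\begin{align*}
    g=d^2\bigl(\tilde k_a(dt)+\tilde k_{\bar a}(-dt)\bigr),
    \qquad
    \tilde k_u(x)\coloneqq\frac{u-\sqrt{u^2+ux}+x/2}{x^2}.
\end{align*}
Since $t$ is affine in $\beta$, convexity of $g$ follows once each $\tilde k_u$ is shown to be convex in $x$ on $x\geq-u$.

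\textbf{Step 2: closed form for $\tilde k_u$.} For $u>0$, substitute $x=u(z^2-1)$ with $z=\sqrt{1+x/u}\geq0$. The numerator becomes $u(1-z)+u(z^2-1)/2=\tfrac u2(z-1)^2$, and the denominator is $u^2(z-1)^2(z+1)^2$. The factor $(z-1)^2$ cancels, giving
\begin{align*}
    \tilde k_u(x)=\frac{1}{2u\bigl(1+\sqrt{1+x/u}\bigr)^2}.
\end{align*}
This is $\tfrac1{2u}F(w)$ with $F(w)=(1+\sqrt w)^{-2}$ and $w=1+x/u$ affine in $x$. Differentiating twice,
\begin{align*}
    F''(w)=\tfrac32(1+\sqrt w)^{-4}w^{-1}+\tfrac12(1+\sqrt w)^{-3}w^{-3/2}>0 .
\end{align*}
So $\tilde k_u$ is convex on $x>-u$. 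At the endpoint $w=0$ the function stays continuous, so convexity extends to $x\geq-u$.

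\textbf{Step 3: degenerate cases.} If $a\in\{0,1\}$ or $d=0$, some bracket degenerates. The case $d=0$ is trivial: then $g\equiv0$. For $a\in\{0,1\}$, I would argue by continuity: perturb $(p,q)$ so that $a\in(0,1)$. The function $g$ depends continuously on $(p,q)$ for each fixed $\beta\neq\alpha$, and pointwise limits of convex functions are convex. If $\alpha\in\{0,1\}$, the prefactor vanishes and $f$ is constant in $\beta$.

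\textbf{Conclusion and main obstacle.} Combining the steps, $\partial_\beta^2 f=-\alpha^2\log(1/\alpha)\,g''(\beta)\leq0$. The crux is Step 1: seeing that $\hel/(\alpha-\beta)^2$ is not convex term by term until the cancelling linear terms $\pm td/2$ are added. Once they are, the $(z-1)^2$ cancellation in Step 2 makes the convexity explicit. I expect the boundary bookkeeping in Step 3 to be the only remaining fiddly point.
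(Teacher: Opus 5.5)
Your proof is correct, and it takes a different route from the paper. The paper differentiates twice and rescales $f''$ by $-(\alpha-\beta)^4/(6\alpha^2\log(1/\alpha))$. It recognises the result as $h(x,y)=\hel(x,y)+\frac{2}{3}(x-y)\partial_y\hel(x,y)+\frac{1}{6}(x-y)^2\partial_y^2\hel(x,y)$ at $x=\alpha p+\alb q$, $y=\beta p+\betb q$. It then checks $h(x,x)=\partial_y h(x,x)=0$ and $\partial_y^2 h=\frac{1}{6}(x-y)^2\partial_y^4\hel\geq 0$.

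That is a Taylor-remainder argument whose only input is $\partial_y^4\hel\geq0$. It would therefore transfer to any smooth divergence with a nonnegative fourth derivative in its second argument.

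You instead obtain a closed form. Your ``cancelling linear terms'' are just the coordinatewise form of the Hellinger divergence, since $a-\sqrt{ab}+\frac{b-a}{2}=\frac{1}{2}(\sqrt a-\sqrt b)^2$. Rationalising each square gives in one line
\[
\frac{\hel(a,b)}{(a-b)^2}=\frac{1}{2}\left[\frac{1}{(\sqrt a+\sqrt b)^2}+\frac{1}{(\sqrt{\bar a}+\sqrt{\bar b})^2}\right],
\]
which is exactly your $\tilde k_u$ formula. Convexity in $b$ is then immediate, because $s\mapsto(c+\sqrt s)^{-2}$ is convex for $c\geq 0$.

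This yields more than the lemma asks. The singularity at $\beta=\alpha$ disappears, so $g$ extends to a convex function on all of $[0,1]$, not just on each side of $\alpha$. The limit $\lim_{\beta\to\alpha}g(\beta)=(p-q)^2/(8a\bar a)$ needed in \Cref{lem:ineq:beta=alpha} can also be read off without L'H\^opital.

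Your Step~3 perturbation argument is never needed. For $\alpha\in(0,1)$, $a\in\{0,1\}$ forces $p=q$, so $d=0$. For $\alpha\in\{0,1\}$ the prefactor vanishes.
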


\begin{proof}[Proof of \Cref{lem:concavity-in-beta}]
    Because $\alpha,p$ and $q$ are fixed, we drop them from the notation of $f$ and write $f(\beta)$ instead. Define $\hel(x,y)=\hel(\ber(x),\ber(y))$, $\ma=\alpha p +\alb q$ and $\mb=\beta p +\betb q$.
    In the non-degenerate case $0<\alpha<1$, set $c_\alpha=\alpha^2\log(1/\alpha)>0$. Then
    \begin{align*}
        f'(\beta)
        &=
        -c_\alpha\left(
        2(\alpha-\beta)^{-3}\hel(\ma,\mb)
        +(\alpha-\beta)^{-2}(p-q)\frac{d}{dy}\hel(\ma,\mb)
        \right),
        \\
        f''(\beta)
        &=
        -c_\alpha\Bigl(
        6(\alpha-\beta)^{-4}\hel(\ma,\mb)
        \notag\\
        &\qquad
        +4(\alpha-\beta)^{-3}(p-q)\frac{d}{dy}\hel(\ma,\mb)
        \notag\\
        &\qquad
        +(\alpha-\beta)^{-2}(p-q)^2\frac{d^2}{dy^2}\hel(\ma,\mb)
        \Bigr).
    \end{align*}

    We prove that $f''(\beta)\leq 0$ for any $\beta\neq\alpha$. Define
    \begin{align*}
        h(\ma,\mb)
        \triangleq
        -\frac{(\alpha-\beta)^4}{6c_\alpha}f''(\beta).
    \end{align*}
         Noting that $(\alpha-\beta)(p-q)=(\ma-\mb)$, we can write $h(\ma,\mb)$ as follows
         \begin{equation*}
             h(\ma,\mb) = \hel(\ma,\mb)+\frac{2}{3}(\ma-\mb)\frac{d}{dy}\hel(\ma,\mb)+\frac{1}{6}(\ma-\mb)^2\frac{d^2}{dy^2}\hel(\ma,\mb)  
         \end{equation*}
         We prove that $h(x,y)$ is convex with respect to $y$, $\frac{d}{dy}h(x,x)=0$ and $h(x,x)=0$, so $h(x,y)\geq 0$ for every $x,y\in[0,1]$. The following calculations verify these claims.
         \begin{align*}
             &\frac{d}{dy}h(x,y)
             = 
             \frac{1}{3}\frac{d}{dy}\hel(x,y)+\frac{1}{3}(x-y)\frac{d^2}{dy^2}\hel(x,y)+\frac{1}{6}(x-y)^2\frac{d^3}{dy^3}\hel(x,y)
             \Rightarrow
             \frac{d}{dy}h(x,x)
             =
             0
             \\
             & \frac{d^2}{dy^2}h(x,y)=\frac{1}{6}(x-y)^2\frac{d^4}{dy^4}\hel(x,y)
             =
             \frac{5}{32}(x-y)^2\lp\sqrt{x}y^{-7/2}+\sqrt{\Bar{x}}\Bar{y}^{-7/2}\rp
             \geq
             0
         \end{align*}
\end{proof}

\subsubsection{The limiting case $\beta=\alpha$}

\begin{lemma}\label{lem:ineq:beta=alpha}
For any $p\neq q\in[0,1]$ and $\alpha\in[0,1/2]$, the limit $\lim_{\beta\rightarrow\alpha}f(\alpha,\beta,p,q)$ exists and is non-negative.
\end{lemma}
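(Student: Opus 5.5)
Write $\ma=\alpha p+\alb q$ and $\mb=\beta p+\betb q$. Since $\ma-\mb=(\alpha-\beta)(p-q)$, the Hellinger term equals
\begin{align*}
\frac{\alpha^2\log(1/\alpha)}{(\alpha-\beta)^2}\hel(\ma,\mb)
=\alpha^2\log(1/\alpha)(p-q)^2\,\frac{\hel(\ma,\mb)}{(\ma-\mb)^2}.
\end{align*}
So the limit as $\beta\to\alpha$ reduces to a second-order expansion of $\hel(\ma,\cdot)$ around $\ma$. We have $\hel(x,x)=0$ and $\frac{d}{dy}\hel(x,y)|_{y=x}=0$. The second derivative is $\frac{d^2}{dy^2}\hel(x,y)|_{y=x}=\frac{1}{4}\lp\frac1x+\frac1{\bar x}\rp=\frac{1}{4x\bar x}$, since $\hel$ is half the squared distance of square roots. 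This gives $\hel(\ma,\mb)/(\ma-\mb)^2\to 1/(8\ma\bar\ma)$ when $0<\ma<1$. Hence
\begin{align*}
\lim_{\beta\to\alpha}f(\alpha,\beta,p,q)=\js_\alpha(p,q)-\frac{\alpha^2\log(1/\alpha)(p-q)^2}{8\ma\bar\ma}.
\end{align*}
The degenerate cases need separate handling. If $\ma\in\{0,1\}$ with $p\ne q$ and $0<\alpha<1$, then $p$ and $q$ are both $0$ or both $1$, which is impossible. If $\alpha=0$, the coefficient $\alpha^2\log(1/\alpha)$ vanishes and the limit is $\js_0=0$. So in all cases the limit exists, and it remains to show nonnegativity.

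For nonnegativity I would lower bound $\js_\alpha(p,q)=\ent(\ma)-\alpha\ent(p)-\alb\ent(q)$, the mutual information between $U\sim\ber(\alpha)$ and the output bit. I would compare it with the $\chi^2$-type quantity $\alpha\alb(p-q)^2/(\ma\bar\ma)$, which is the $\chi^2$ information of the same pair. My first attempt is the representation
\begin{align*}
\js_\alpha(p,q)=\alpha\dkl(\ber(p)\|\ber(\ma))+\alb\dkl(\ber(q)\|\ber(\ma)).
\end{align*}
I would keep only the first term, using $\dkl\ge$ a weighted bound. However, the naive bound $\dkl\ge 2\dtv^2$ loses the $1/(\ma\bar\ma)$ factor. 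The $\chi^2$ bound goes the wrong direction, and that $\ma\bar\ma$ factor is exactly what gives the $\log(1/\alpha)$ gain. So the real content is proving
\begin{align*}
\js_\alpha(p,q)\ge \frac{\alpha^2\log(1/\alpha)(p-q)^2}{8\ma\bar\ma}\qquad(\alpha\le 1/2).
\end{align*}

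Plan for this inequality: by the symmetry $p\mapsto\bar p$, $q\mapsto\bar q$ (which preserves both sides), assume $\ma\le1/2$, so $\ma\bar\ma\asymp\ma$. With $\alpha$ and $\ma$ fixed, $\js_\alpha$ is convex in the pair $(p,q)$ along the line $\alpha p+\alb q=\ma$. The right side is the convex quadratic $(p-q)^2$ times a constant. So I would parametrize by $d=p-q$, giving $p=\ma+\alb d$ and $q=\ma-\alpha d$. Then I would study $g(d)=\js_\alpha-\mathrm{RHS}$ on its feasible interval, which is determined by $p,q\in[0,1]$. We have $g(0)=g'(0)=0$, and $g''(0)=\alpha\alb/(\ma\bar\ma)-\alpha^2\log(1/\alpha)/(4\ma\bar\ma)\ge0$ because $\alb\ge\alpha\log(1/\alpha)/4$ for $\alpha\le1/2$. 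I expect $g''(d)$ to change sign at most once. Then it suffices to check $g\ge0$ at the endpoints of the feasible interval, where $q=0$ or $p=1$ (or $p=0$, $q=1$, etc.).

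At an endpoint such as $q=0$, we have $p=\ma/\alpha$, and $\js_\alpha=\ent(\ma)-\alpha\ent(\ma/\alpha)$. For small $\ma$ this behaves like $\ma\log(1/\alpha)$, while the right side is at most $\alpha^2\log(1/\alpha)(\ma/\alpha)^2/(8\ma\bar\ma)\approx\ma\log(1/\alpha)/8$. So the constant $1/8$ leaves room. The main obstacle is making the endpoint verification uniform over all $(\alpha,\ma)$, together with the unimodality claim for $g''$. If the second-derivative sign analysis proves messy, I would instead prove the endpoint and interior bounds directly via $\dkl(\ber(p)\|\ber(\ma))\ge p\log(p/\ma)-(p-\ma)$ with $p/\ma\ge1/\alpha$-type ratios, which is where the $\log(1/\alpha)$ appears.
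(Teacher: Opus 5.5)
Your computation of the limit (second-order expansion of $\hel(\ma,\cdot)$ together with the degenerate cases) is correct and matches the paper's L'H\^opital step. The gap is in the nonnegativity of $\js_\alpha(p,q)-\frac{\alpha^2\log(1/\alpha)(p-q)^2}{8\ma\mab}$. That inequality is the real content of the lemma, and the proposal does not establish it.

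Your one-dimensional reduction assumes that $g''$ changes sign at most once, so that only the endpoints of the feasible interval need checking. That picture is wrong. Along the line $\alpha p+\alb q=\ma$,
\begin{align*}
g''(d)=\frac{\alpha\alb^2}{p\bar p}+\frac{\alb\alpha^2}{q\bar q}-\frac{\alpha^2\log(1/\alpha)}{4\ma\mab},
\qquad p=\ma+\alb d,\quad q=\ma-\alpha d .
\end{align*}
This function is convex in $d$. It also tends to $+\infty$ at both ends of the feasible interval, because one of $p,q$ reaches $\{0,1\}$ there. So $g''$ changes sign an even number of times, and ``at most once'' would force $g''\ge0$ throughout. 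That fails for small $\alpha$: at $\ma=\alpha$, $d=\sqrt2-1$ one gets $g''\approx\alpha\bigl(3+2\sqrt2-\tfrac14\log(1/\alpha)\bigr)$, which is negative once $\log(1/\alpha)>4(3+2\sqrt2)\approx 23.3$.

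By convexity the negativity set of $g''$ is a single interval on one side of $0$, so on that side the sign pattern is $+,-,+$. Hence $g$ can have an interior local minimum in the final convex stretch, and checking $q=0$ or $p=1$ says nothing about it. Moreover, your endpoint check was only heuristic (small $\ma$), and the fallback via $\dkl(\ber(p)\|\ber(\ma))\ge p\log(p/\ma)-(p-\ma)$ is not carried out.

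The missing idea is to measure divergence from the \emph{prior} $\ber(\alpha)$ rather than from the output mixture $\ber(\ma)$ used in your forward decomposition; that is where the $\log(1/\alpha)$ comes from. With $P=\ber(p)$, $Q=\ber(q)$ and $M=\alpha P+\alb Q$, write
\begin{align*}
\js_\alpha(p,q)=\mi(U;Y)=\E_Y\dkl(\ber(t_Y)\|\ber(\alpha)),
\end{align*}
where $t_y=\alpha P(y)/M(y)$ is the posterior. Since $t_y-\alpha=\alpha\alb(P(y)-Q(y))/M(y)$,
\begin{align*}
\E_Y(t_Y-\alpha)^2=\frac{\alpha^2\alb^2(p-q)^2}{\ma\mab}.
\end{align*}
So the target follows from the scalar bound $\dkl(t\|\alpha)\ge\frac{\log(1/\alpha)}{8\alb^2}(t-\alpha)^2$ for all $t\in[0,1]$, applied pointwise in $y$.

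This is the paper's argument. There it is phrased as a pointwise comparison of the generators $f_{JS}$ and $f_K(x)=(x-1)^2/(\alpha x+\alb)$, using $f_{JS}(x)=(\alpha x+\alb)\dkl(t\|\alpha)$ with $t=\alpha x/(\alpha x+\alb)$. The scalar bound comes from the Kearns--Saul sub-Gaussian estimate $\dkl(t\|\alpha)\ge\frac{\log(\alb/\alpha)}{1-2\alpha}(t-\alpha)^2$, together with $\frac{\log(\alb/\alpha)}{1-2\alpha}\ge\frac{\log(1/\alpha)}{2}\ge\frac{\log(1/\alpha)}{8\alb^2}$ for $\alpha\le 1/2$. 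No analysis of the shape of $g$ in $(p,q)$ is needed.
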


\begin{proof}[Proof of \Cref{lem:ineq:beta=alpha}]
    First, we calculate the limit using L'Hôpital's rule. It suffices to calculate the terms depending on $\beta$. We use $\frac{d}{dy}\hel(\ma,\mb)$ to denote $\frac{d}{dy}\hel(x,y)$ evaluated at $x=\ma$ and $y=\mb$, and similarly for the second derivative.
    \begin{align*}
        \lim_{\beta\rightarrow\alpha}\frac{\hel(\ma,\mb)}{(\alpha-\beta)^2}
        =
        \lim_{\beta\rightarrow\alpha}\frac{(p-q)\frac{d}{dy}\hel(\ma,\mb)}{-2(\alpha-\beta)}
        =
        \lim_{\beta\rightarrow\alpha}\frac{(p-q)^2}{2}\frac{d^2}{dy^2}\hel(\ma,\mb)
    \end{align*}
    Therefore, we have 
    \begin{align*}
        \lim_{\beta\rightarrow\alpha}f(\alpha,\beta,p,q)
        =
        \jsa(p,q)-\frac{\alpha^2\log(1/\alpha)}{8}\frac{(p-q)^2}{\ma\mab}
    \end{align*}
It remains to prove that the expression above is non-negative. Recall that $\jsa(p,q)$ is an $f$-divergence between two Bernoulli random variables. First, we prove that the term $\frac{(p-q)^2}{\ma\mab}$ is also an $f$-divergence between $\ber(p)$ and $\ber(q)$, and then we use this fact to prove the result. We have
\begin{align*}
    \frac{(p-q)^2}{\ma\mab} = q\frac{(\frac{p}{q}-1)^2}{\alpha \frac{p}{q} +\alb} + \Bar{q}\frac{(\frac{\Bar{p}}{\Bar{q}}-1)^2}{\alpha \frac{\Bar{p}}{\Bar{q}} +\alb}
\end{align*}
Denote this $f$-divergence by $\mathrm{D}_K(p,q)$ and its defining function by $f_K(x)=\frac{(x-1)^2}{\alpha x+\alb}$. Let $f_{JS}(x)=\alpha x\log(\frac{x}{\alpha x+\alb})+\alb\log(\frac{1}{\alpha x+\alb})$ be the defining function of the Jensen--Shannon divergence. To prove the inequality, it is enough to show that
$f_{JS}(x)\geq(1/8)\alpha^2\log(1/\alpha)f_K(x)$ for every $x\in\real_+\cup\{\infty\}$. After algebraic rearrangement, this is equivalent to
\begin{align*}
    \dkl\lp\frac{\alpha x}{\alpha x+\alb}\|\alpha\rp
    \geq
    \frac{\log(1/\alpha)}{8\alb^2}\lp\frac{\alpha x}{\alpha x+\alb}-\alpha\rp^2
\end{align*}
It is therefore enough to prove that for any $t\in[0,1]$ and $\alpha\in[0,1/2]$ we have 
\begin{equation*}
    \dkl(t\|\alpha)
    \geq
    \frac{\log(1/\alpha)}{2}(t-\alpha)^2
\end{equation*}

We use the following fact, proved in Lemma~1 of \cite{kearns2013large}.
\begin{fact}[\cite{kearns2013large}]
    If $X\sim\ber(\alpha)$ we have
    \begin{equation*}
        \log\lp\E e^{\lambda(X-\alpha)}\rp
        \leq
        \frac{\lambda^2}{4}\frac{1-2\alpha}{\log(\frac{\alb}{\alpha})}
    \end{equation*}
\end{fact}

The variational form of $\dkl$ gives

\begin{align*}
    \dkl(t\|\alpha)=\sup_{\lambda\in\reals}\{\lambda(t-\alpha)-\log(\E_{X\sim\ber(\alpha)}e^{\lambda(X-\alpha)}\}
    \geq
    \frac{\log(\frac{\alb}{\alpha})}{1-2\alpha}(t-\alpha)^2
\end{align*}

It remains to verify that $\frac{\log(\frac{\alb}{\alpha})}{1-2\alpha}\geq \frac{\log(1/\alpha)}{2}$.

\begin{align*}
    \frac{\log(\frac{\alb}{\alpha})}{1-2\alpha}\geq \frac{\log(1/\alpha)}{2}
    \Leftrightarrow
    (1/2+\alpha)\log(1/\alpha)\geq\log(1/\alb)
    \Leftarrow
    (1/2+\alpha)\log(1/\alpha)\geq\log(2)
\end{align*}

The left-hand side is decreasing in $\alpha$ for $\alpha\in[0,1/2]$, and equality holds at $\alpha=1/2$.
\end{proof}

\subsubsection{The boundary case $\beta=0$}

\begin{lemma}\label{lem:ineq:beta=0}
For any $\alpha\in[0,1]$ and arbitrary $p,q\in[0,1]$, we have
\begin{equation*}
    \jsa(p,q)\geq\log(1/\alpha)\hel(\alpha p+\alb q,q).
\end{equation*}
\end{lemma}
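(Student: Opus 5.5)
The plan is to prove the inequality pointwise at the level of the generators of the two $f$-divergences, exactly as in the proof of \Cref{lem:ineq:beta=alpha}. The left-hand side is $\jsa(p,q)=\mathrm{D}_{f_{JS}}(\ber(p)\|\ber(q))$ with
\begin{align*}
    f_{JS}(x)=\alpha x\log\frac{x}{\alpha x+\alb}+\alb\log\frac{1}{\alpha x+\alb}.
\end{align*}
By \Cref{lem:mixed-hel-f-divergence} with $\beta=0$, the right-hand side is $\log(1/\alpha)\,\mathrm{D}_{h_0}$ with $h_0(x)=\tfrac12\bigl(\sqrt{\alpha x+\alb}-1\bigr)^2$.

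\textbf{Reduction.} It suffices to show $f_{JS}(x)\ge \log(1/\alpha)\,h_0(x)$ for all $x\in[0,\infty)$, together with the same comparison of the slopes at infinity. Those slopes are $\alpha\log(1/\alpha)$ and $\tfrac12\log(1/\alpha)\,\alpha$, so the point at infinity is immediate. The degenerate cases $\alpha\in\{0,1\}$ are trivial, since then $\log(1/\alpha)\hel=0$ or the convention $0\cdot\infty$ applies.

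\textbf{Change of variables.} Set $s=\alpha x+\alb\in[\alb,\infty)$ and $t=\alpha x/s=1-\alb/s\in[0,1)$, the posterior probability of the input $1$. A direct rearrangement gives $f_{JS}(x)=s\,\dkl(t\|\alpha)$ and $h_0(x)=\tfrac12(\sqrt s-1)^2$. The target therefore becomes
\begin{align*}
    s\,\dkl(t\|\alpha)\ \ge\ \tfrac12\log(1/\alpha)\,(\sqrt s-1)^2,\qquad t=1-\alb/s.
\end{align*}
Note that $t-\alpha=\alb(1-1/s)=\alb(s-1)/s$.

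\textbf{Bounded region.} Here I would use the Kearns--Saul quadratic lower bound already invoked in \Cref{lem:ineq:beta=alpha}: $\dkl(t\|\alpha)\ge c_\alpha(t-\alpha)^2$ with $c_\alpha=\log(\alb/\alpha)/(1-2\alpha)$. This gives
\begin{align*}
    s\,\dkl(t\|\alpha)\ge c_\alpha\alb^2(s-1)^2/s .
\end{align*}
Comparing with $\tfrac12\log(1/\alpha)(\sqrt s-1)^2$ reduces to
\begin{align*}
    c_\alpha\alb^2(\sqrt s+1)^2\ \ge\ \tfrac12\log(1/\alpha)\,s .
\end{align*}
For $\alpha\le1/2$, we have $c_\alpha\ge\tfrac12\log(1/\alpha)$ by the computation at the end of that proof, so this holds whenever $\sqrt s\le\alb/\alpha$. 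For $\alpha>1/2$, I would instead use $c_\alpha\ge 2$ (Pinsker) and $\log(1/\alpha)\le 2\alb$. Since $s\in[\alb,\infty)$, the quadratic bound should then cover a range of the form $\sqrt s\lesssim \alb^{-1/2}$ or similar, and I would check which range precisely.

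\textbf{Large-$s$ region (main obstacle).} As $s\to\infty$ the right-hand side grows like $\tfrac12\log(1/\alpha)\,s$, while the quadratic bound saturates, so a different estimate is needed. Here $u=1-t=\alb/s$ is small, and I would use the exact form
\begin{align*}
    s\,\dkl(t\|\alpha)=s\,t\log(t/\alpha)+\alb\log(u/\alb)=s\log(1/\alpha)+s\,t\log t-\alb\log s-\alb\log(1/\alpha)\cdot(\ldots),
\end{align*}
bounding $s\,t\log t\ge -\alb$ and keeping the leading term $s\log(1/\alpha)$. This leading term is twice the right-hand side's leading term. The slack of $\tfrac12 s\log(1/\alpha)$ must absorb the $O(\alb\log s+\alb)$ corrections and the cross term $\log(1/\alpha)\sqrt s$.

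This step is where I expect the real work: checking that the two regions overlap uniformly in $\alpha$. This is delicate near $\alpha=1/2$, where the threshold $\alb/\alpha$ approaches $1$. I would first try to close the gap with the cleaner bound $\dkl(t\|\alpha)\ge t\log(1/\alpha)-\log 2$ for $\alpha$ small. If that fails near $\alpha\approx 1/2$, I would fall back on a direct one-variable calculus check of $g(s)=s\,\dkl(1-\alb/s\|\alpha)-\tfrac12\log(1/\alpha)(\sqrt s-1)^2$. That check would show $g(1)=g'(1)=0$ and that $g$ is increasing for $s\ge1$ and decreasing on $[\alb,1]$.
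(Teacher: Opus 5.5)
\textbf{Verdict: the proposal has a genuine gap.} The two-region estimate fails near $s=1$ when $\alpha$ is close to $1$, and your fallback is asserted but not proved. The fallback is true, though, and it closes the proof in a few lines.

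\textbf{What is correct.} Your reduction to a pointwise comparison of generators is sound. It loses nothing here, because $f_{JS}-\log(1/\alpha)h_0$ vanishes together with its derivative at $x=1$. The substitution $s=\alpha x+\bar\alpha$, $t=1-\bar\alpha/s$, giving $f_{JS}=s\,\dkl(t\|\alpha)$ and $h_0=\frac12(\sqrt s-1)^2$, is also correct.

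\textbf{Why the two-region estimate cannot close.} The point $s=1$ is a tangency point: $g(1)=g'(1)=0$. Near it, any lossy lower bound must lose less than the second-order slack. That slack is the ratio $\frac{4\bar\alpha}{\alpha\log(1/\alpha)}\ (\ge 4)$ of the quadratic coefficients of the two sides at $s=1$.
\begin{itemize}
\item The Kearns--Saul constant $c_\alpha$ replaces the true coefficient $\frac{1}{2\alpha\bar\alpha}$ of $\dkl(\cdot\|\alpha)$ at $t=\alpha$. As $\alpha\to1$, $c_\alpha\approx\log(1/\bar\alpha)$, far below $\frac{1}{2\alpha\bar\alpha}\approx\frac{1}{2\bar\alpha}$.
\item Letting $s\to1$ in your bounded-region condition gives the requirement $8c_\alpha\bar\alpha^2\ge\log(1/\alpha)$. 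This fails for $\alpha$ close to $1$: at $\alpha=0.99$ the two sides are about $0.0038$ and $0.0101$.
\item With your simplification $c_\alpha\ge2$, $\log(1/\alpha)\le2\bar\alpha$, the condition reads $2\bar\alpha(\sqrt s+1)^2\ge s$. This already fails at $s=1$ when $\bar\alpha<1/8$. It covers only $s\lesssim 2\bar\alpha$, not a range like $\sqrt s\lesssim\bar\alpha^{-1/2}$.
\item Your large-$s$ bound cannot reach $s=1$ either. With $s\,t\log t\ge-\bar\alpha$, it equals $\alpha\log(1/\alpha)-\bar\alpha<0$ there.
\end{itemize}
So for $\alpha$ near $1$, a neighbourhood of $s=1$ is covered by neither region. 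The trouble you foresee at $\alpha\approx 1/2$ is an artefact of weakening $c_\alpha$ to $\frac12\log(1/\alpha)$: at $\alpha=1/2$, $c_\alpha$ itself gives the condition for every $s$. The real obstruction is $\alpha\to1$.

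\textbf{How to close it.} One computes $\frac{d}{ds}\bigl[s\,\dkl(1-\bar\alpha/s\,\|\,\alpha)\bigr]=\log\frac{s-\bar\alpha}{\alpha s}$. With $u=1/s$,
\[
g'(s)=\phi(u)\coloneqq\frac{\log(1/\alpha)}{2}\bigl(1+\sqrt u\bigr)+\log(1-\bar\alpha u),\qquad u\in(0,1/\bar\alpha).
\]
Then:
\begin{itemize}
\item $\phi$ is concave;
\item $\phi(0)=\frac12\log(1/\alpha)\ge0$ and $\phi(1)=0$;
\item $\phi'(1)=\frac14\log(1/\alpha)-\bar\alpha/\alpha<0$, because $\log(1/\alpha)\le 1/\alpha-1$.
\end{itemize}
Concavity gives $\phi\ge0$ on $[0,1]$ and $\phi\le0$ on $[1,1/\bar\alpha)$. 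Hence $g$ decreases on $(\bar\alpha,1]$ and increases on $[1,\infty)$, so $g\ge g(1)=0$ uniformly in $\alpha\in(0,1)$. Neither Kearns--Saul nor the asymptotic expansion is needed. Together with your slope comparison at infinity and the degenerate cases, this proves the lemma.

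\textbf{Comparison with the paper.} Completed this way, your argument is a genuinely different proof. The paper fixes $p\neq q$ and shows that $\alpha\mapsto\js_\alpha(p,q)-\log(1/\alpha)\hel(\alpha p+\bar\alpha q,q)$ is concave on $(0,1)$ with limit $0$ at both ends. That requires a four-term second-derivative computation, in which the one positive term is absorbed by the curvature of $\js_\alpha$, plus endpoint limits. Your route fixes $\alpha$ and works at the level of generators, reducing everything to a one-variable monotonicity check. As a by-product it gives the inequality for arbitrary $\cP,\cQ$ directly, without passing through Bernoulli pairs.
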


\begin{proof}[Proof of \Cref{lem:ineq:beta=0}]
    Note that for $p=q$, the inequality is trivial. For fixed $p\neq q\in[0,1]$, we want to prove the following inequality for any $\alpha\in(0,1)$:
\begin{equation*}
    f(\alpha) = \jsa(p,q)-\log(1/\alpha)\hel(\alpha p+\alb q,q)
    \geq
    0
\end{equation*}
To do so, we show the following two facts:
\begin{enumerate}
    \item $f(1)=\lim_{\alpha\downarrow0}f(\alpha)=0$
    \item $\frac{d^2}{d\alpha^2}f(\alpha)\leq 0$
\end{enumerate}
These facts imply $f(\alpha)\geq0$: extending $f$ continuously to $\alpha=0$ gives a concave function that vanishes at both endpoints and is therefore non-negative on $\alpha\in(0,1)$.

Observe that $f(1)=0$. To calculate $\lim_{\alpha\downarrow0}f(\alpha)$, we use the derivative of $\hel(\alpha p+\alb q,q)$:
\begin{align*}
    \frac{d}{d\alpha}\hel(\alpha p+\alb q,q)
    =
    \frac{(p-q)}{2}\left(\frac{\sqrt{\qb}}{\sqrt{1-\alpha p-\alb q}}-\frac{\sqrt{q}}{\sqrt{\alpha p+\alb q}}\right)
\end{align*}
We next calculate
    $\lim_{\alpha\downarrow0}f(\alpha) = \lim_{\alpha\downarrow0}\log(1/\alpha)\hel(\alpha p+\alb q,q)$. We use the squeeze theorem and L'Hôpital's rule:
\begin{align*}
    0
    \leq
    \log(1/\alpha)\hel(\alpha p+\alb q,q)
    \leq
    (\frac{1}{\alpha}-1)\hel(\alpha p+\alb q,q)
\end{align*}
We know that $\lim_{\alpha\downarrow0}\hel(\alpha p+\alb q,q)=0$ and calculate $\lim_{\alpha\downarrow0}\frac{\hel(\alpha p+\alb q,q)}{\alpha}$ by L'Hôpital's rule:
\begin{align*}
    \lim_{\alpha\downarrow0}\frac{\hel(\alpha p+\alb q,q)}{\alpha}
    =
    \lim_{\alpha\downarrow 0}\frac{(p-q)}{2}\left(\frac{\sqrt{\qb}}{\sqrt{1-\alpha p-\alb q}}-\frac{\sqrt{q}}{\sqrt{\alpha p+\alb q}}\right)
    =
    0
\end{align*}

For the second fact, differentiation gives the second derivative of $f(\alpha)$:

\begin{align}\label{eq:d2f}
    \nonumber
    \frac{d^2}{d\alpha^2}f(\alpha) 
    =&
    \underbrace{\frac{(p-q)}{\alpha}\left(\frac{\sqrt{\qb}}{\sqrt{1-\alpha p-\alb q}}-\frac{\sqrt{q}}{\sqrt{\alpha p+\alb q}}\right)}_{A_1}\underbrace{-\frac{(p-q)^2\log(1/\alpha)}{4}\left(\frac{\sqrt{\qb}}{(1-\alpha p-\alb q)^{3/2}}+\frac{\sqrt{q}}{(\alpha p+\alb q)^{3/2}}\right)}_{A_2}
    \\&
    \underbrace{-(p-q)^2\left(\frac{1}{\alpha p +\alb q}+\frac{1}{1-\alpha p-\alb q}\right)}_{A_3}\underbrace{-\frac{1}{\alpha^2}\hel(\alpha p+\alb q,q)}_{A_4}
\end{align}

 From \cref{eq:d2f}, the terms $A_2$, $A_3$, and $A_4$ are non-positive, whereas $A_1$ is non-negative. We prove that $A_1+A_3\leq0$ and conclude the result. 
\begin{align}\label{ineq:last}
\nonumber
    &A_1 \leq -A_3 \Leftrightarrow
    \frac{(p-q)}{\alpha}\left(\frac{\sqrt{\qb}}{\sqrt{1-\alpha p-\alb q}}-\frac{\sqrt{q}}{\sqrt{\alpha p+\alb q}}\right)
    \leq
    (p-q)^2\left(\frac{1}{\alpha p +\alb q}+\frac{1}{1-\alpha p-\alb q}\right)  
    \Leftrightarrow
    \\&\nonumber
    \left(\frac{p-q}{\alpha}\right)\sqrt{(\alpha p+\alb q)(1-\alpha p-\alb q)}\left(\sqrt{\qb(\alpha p +\alb q)}-\sqrt{q(1-\alpha p-\alb q)}\right)
    \leq
    (p-q)^2
    \Leftrightarrow\\&\nonumber
    (p-q)^2\frac{\sqrt{(\alpha p+\alb q)(1-\alpha p-\alb q)}}{\sqrt{\qb(\alpha p +\alb q)}+\sqrt{q(1-\alpha p-\alb q)}}\leq (p-q)^2 \Leftrightarrow\\&
    \frac{\sqrt{(\alpha p+\alb q)(1-\alpha p-\alb q)}}{\sqrt{\qb(\alpha p +\alb q)}+\sqrt{q(1-\alpha p-\alb q)}}\leq 1
\end{align}
For $\alpha=0$, the inequality reduces to $1/2\leq 1$. For $\alpha=1$, it remains to prove the following inequality for all $p,q\in[0,1]$:
\begin{equation*}
    \frac{\sqrt{p\pb}}{\sqrt{\qb p}+\sqrt{q\pb}}\leq 1
\end{equation*}
This endpoint inequality is equivalent to \cref{ineq:last} for arbitrary $\alpha,p,q$.
Fix $p$. The cases $p=0$ and $p=1$ are immediate, so let $p\in(0,1)$ and define $g(q) = \sqrt{\qb p}+\sqrt{q\pb}$. Then
\begin{align*}
    &g(0) = \sqrt{p}\geq\sqrt{p\pb}\\&
    g(1) = \sqrt{\pb}\geq\sqrt{p\pb}
\end{align*}
The function $g$ is concave, so its endpoint bounds imply $g(q)\geq\sqrt{p\pb}$ for all $q\in[0,1]$.
\end{proof}

\subsubsection{The boundary case $\beta=1/2$}

\begin{lemma}\label{lem:ineq:beta=1/2}
For any $\alpha,p,q\in[0,1]$ with $\alpha\neq1/2$, we have $f(\alpha,1/2,p,q)\geq 0$.
\end{lemma}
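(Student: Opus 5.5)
Since $\beta=1/2$ is fixed, this is a one-parameter family of binary inequalities. After the Harremoës--Vajda reduction I work directly with $\cP=\ber(p)$, $\cQ=\ber(q)$, and write $\ma=\alpha p+\alb q$ and $m_h=(p+q)/2$. The target is
\begin{align*}
    \jsa(p,q)\;\geq\;\frac{4\alpha^2\log(1/\alpha)}{(1-2\alpha)^2}\,\hel(\ma,m_h).
\end{align*}
The plan is to upper bound the Hellinger term by a $\chi^2$-type quadratic, and then reuse the lower bound on $\jsa$ already established in the proof of \Cref{lem:ineq:beta=alpha}. The key identity is $\ma-m_h=(\alpha-\tfrac12)(p-q)$, which cancels the $(1-2\alpha)^2$ in the prefactor exactly.

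\textbf{Step 1 (Hellinger by a quadratic).} Write
\begin{align*}
    \hel(x,y)=\tfrac12(x-y)^2\Bigl[(\sqrt x+\sqrt y)^{-2}+(\sqrt{\bar x}+\sqrt{\bar y})^{-2}\Bigr].
\end{align*}
Bounding $(\sqrt x+\sqrt y)^2\geq y$ and $(\sqrt{\bar x}+\sqrt{\bar y})^2\geq \bar y$ with $y=m_h$ gives
\begin{align*}
    \hel(\ma,m_h)\;\leq\;\frac{(\alpha-\tfrac12)^2(p-q)^2}{2\,m_h\bar m_h}.
\end{align*}
It is then enough to prove
\begin{align*}
    \jsa(p,q)\;\geq\;\frac{\alpha^2\log(1/\alpha)}{2}\,\frac{(p-q)^2}{m_h\bar m_h}.
\end{align*}
If the constant turns out too lossy, I would instead use $(\sqrt x+\sqrt y)^2\geq x+y$. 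This replaces $m_h\bar m_h$ by a symmetric expression in $\ma$ and $m_h$, and buys back a factor close to $2$ when $\ma\approx m_h$.

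\textbf{Step 2 (case $\alpha\le 1/2$).} The proof of \Cref{lem:ineq:beta=alpha} actually shows, pointwise in the likelihood ratio,
\begin{align*}
    f_{JS}(x)\geq \frac{\alb^2}{\alpha}\cdot\frac{\log(\alb/\alpha)}{1-2\alpha}\Bigl(\frac{\alpha x}{\alpha x+\alb}-\alpha\Bigr)^2\frac{\alpha x+\alb}{\alpha}.
\end{align*}
Equivalently,
\begin{align*}
    \jsa(p,q)\geq \frac{\alpha^2\log(\alb/\alpha)}{1-2\alpha}\cdot\frac{(p-q)^2}{\ma\,\mab}.
\end{align*}
This is the sharper, un-weakened constant, and I would use it. Then I compare $\ma\mab$ with $m_h\bar m_h$. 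Since $\ma,m_h$ both lie between $\min(p,q)$ and $\max(p,q)$, with weights $\alpha\le 1/2$ versus $1/2$, we have $\ma\le 2m_h$ and $\mab\le 2\bar m_h$, hence $\ma\mab\le 4m_h\bar m_h$. With this loss the remaining scalar inequality is
\begin{align*}
    \frac{\log(\alb/\alpha)}{1-2\alpha}\geq 2\log(1/\alpha),
\end{align*}
and this is false near $\alpha=1/2$. So the factor $4$ must be avoided there. Near $\alpha=1/2$ one has $\ma\approx m_h$, so I would instead use the refined bound
\begin{align*}
    \ma\mab\le m_h\bar m_h\bigl(1+O(|1-2\alpha|)\bigr),
\end{align*}
together with the $(x+y)$ version of Step 1. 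The plan is therefore to split at some $\alpha_0$ (say $1/4$). For $\alpha\le\alpha_0$ the large factor $\log(\alb/\alpha)/(1-2\alpha)\gtrsim\log(1/\alpha)$ absorbs the constant $4$. For $\alpha\in[\alpha_0,1/2)$ the two means are close, and the comparison is done with exact constants.

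\textbf{Step 3 (case $\alpha>1/2$).} Here $\log(1/\alpha)\le\alb/\alpha$, so the right-hand side is at most $4\alpha\alb\,\hel(\ma,m_h)/(1-2\alpha)^2$. I would use symmetry $\jsa(\cP,\cQ)=\js_{\alb}(\cQ,\cP)$, which turns this into an instance with skew $\alb<1/2$. The needed bound is then the same quadratic comparison as in Step 2, now with the milder factor $\alpha\alb$ in place of $\alpha^2\log(1/\alpha)$, so Step 2's argument should go through with room to spare.

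\textbf{Main obstacle.} All the difficulty is in exact constants uniformly in $\alpha$, since the lemma asserts $f\ge0$ with no slack. The delicate region is $\alpha\to1/2$. There the prefactor $\log(\alb/\alpha)/(1-2\alpha)\to 2$ and $\alpha^2\log(1/\alpha)\to\tfrac14\log 2$, so every inequality in Steps 1--2 must be nearly tight. If the chain above leaves a gap there, my fallback is to treat $g(\alpha)=f(\alpha,1/2,p,q)$ directly, in the style of the proof of \Cref{lem:ineq:beta=0}. That means checking $g\ge0$ at $\alpha\in\{0,1\}$ and at the removable point $\alpha=1/2$ (via a Taylor expansion of both sides to second order in $\alpha-1/2$), and bounding $g''$ term by term.
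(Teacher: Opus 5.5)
\paragraph{Main gap: Step 3.} The failure is in Step 3 ($\alpha>1/2$), and it is not a matter of constants.

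Every lower bound on $\jsa$ in your plan comes from the Kearns--Saul sub-Gaussian estimate. Stated correctly, it gives
\[
\jsa(p,q)\;\ge\;\alpha^2\alb^2K_\alpha\,\frac{(p-q)^2}{\ma\mab},\qquad K_\alpha=\frac{\log(\alb/\alpha)}{1-2\alpha}.
\]
The flip $\jsa(\cP,\cQ)=\js_{\alb}(\cQ,\cP)$ returns exactly the same bound, because $\alpha^2\alb^2K_\alpha$ is symmetric under $\alpha\leftrightarrow\alb$. As $\alpha\to1$ this coefficient is of order $\alb^2\log(1/\alb)$, since the sub-Gaussian proxy of a skewed Bernoulli is much larger than its variance $\alpha\alb$. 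The target coefficient $\alpha^2\log(1/\alpha)$, by contrast, is of order $\alb$. The factor $\alpha\alb$ you pass to is larger still, not milder.

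No choice in Step 1 rescues this. Let $p\to q$. Then $\ma\mab\to m_h\bar m_h$ and $\hel(\ma,m_h)=(1+o(1))\frac{(\alpha-1/2)^2(p-q)^2}{8m_h\bar m_h}$. Your chain therefore needs $\alb^2K_\alpha\ge\frac18\log(1/\alpha)$, which already fails at $\alpha=0.99$: the left side is about $4.7\cdot10^{-4}$ and the right side about $1.3\cdot10^{-3}$. The lemma itself is fine there, since locally $\jsa(p,q)\approx\frac{\alpha\alb(p-q)^2}{2m_h\bar m_h}$.

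What is missing is a lower bound on $\jsa$ that keeps the exact factor $\alpha\alb$. The paper gets it by comparing generators:
\[
f_{JS}''(x)=\frac{\alpha\alb}{x(\alpha x+\alb)}\quad\text{versus}\quad f_H''(x)=\frac{\sqrt2}{8}\,\frac{(2\alpha-1)^2}{(x+1)^{3/2}(\alpha x+\alb)^{3/2}}.
\]
Both generators vanish at $x=1$, and adding $c(x-1)$ does not change an $f$-divergence, so domination of second derivatives gives the inequality. Combined with $\alb\ge\alpha\log(1/\alpha)$, which you already use, this reduces everything to $(x+1)^3(\alpha x+\alb)\ge x^2$ for $x\ge0$. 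That argument is uniform in $\alpha$, leaves a factor $\sqrt2$ to spare, and needs no case split.

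\paragraph{Step 2: errors that can be repaired.}

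\emph{Dropped factor.} Your displayed consequence of the $\beta=\alpha$ argument omits the factor $\alb^2$. Indeed, with $t=\frac{\alpha x}{\alpha x+\alb}$ one has $(\alpha x+\alb)(t-\alpha)^2=\alpha^2\alb^2\frac{(x-1)^2}{\alpha x+\alb}$. As written the bound is false: at $\alpha=0.3$, $p=1$, $q=0$ it asserts $\ent(0.3)\approx0.61\ge0.91$.

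\emph{Inverted regimes.} The inequality $K_\alpha\ge2\log(1/\alpha)$ holds near $\alpha=1/2$, where the sides tend to $2$ and $2\log2$. It fails as $\alpha\to0$, since $K_\alpha/\log(1/\alpha)\to1$; such a factor cannot absorb a constant larger than $1$.

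\emph{Crude Step 1 fails throughout.} With the correct coefficient and even the sharp comparison $\ma\mab\le2m_h\bar m_h$, the crude Step 1 needs $\alb^2K_\alpha\ge\log(1/\alpha)$. This is equivalent to $\alpha^2\log(1/\alpha)\ge\alb^2\log(1/\alb)$, which fails on all of $(0,1/2)$.

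\emph{How to rescue $\alpha<1/2$.} Use your $(\sqrt x+\sqrt y)^2\ge x+y$ variant and compare the two terms separately via $\ma\le2m_h$ and $\mab\le2\bar m_h$. This reduces the claim to $3\alb^2K_\alpha\ge\log(1/\alpha)$, which does hold there. The same inequality is what $\alpha$ near $1$ would need, and it fails there for the reason above.

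\emph{Fallback.} Bounding $g''$ for $g(\alpha)=f(\alpha,1/2,p,q)$ is not yet an argument. Nothing in the proposal indicates that $g$ is concave.
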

\begin{proof}[Proof of \Cref{lem:ineq:beta=1/2}]
    As in the proof of \cref{lem:ineq:beta=alpha}, it is enough to prove the following inequality:
    \begin{equation*}
        \jsa(p,q)\geq\frac{\alpha^2\log(1/\alpha)}{(\alpha-1/2)^2}\hel(\ma,1/2(p+q))
    \end{equation*}
    where $\ma=\alpha p+\alb q$. We prove this by comparing the second derivatives of the functions corresponding to these $f$-divergences. Denoting the function corresponding to $\hel(\ma,1/2(p+q))$ by $f_H(x)$, we have
    \begin{align*}
        f''_H(x) = \frac{\sqrt{2}}{8}\frac{(\alpha-\alb)^2}{(x+1)^{3/2}(\alpha x+\alb)^{3/2}}
    \end{align*}
    Therefore, we get 
    \begin{align*}
        \frac{f''_{JS}(x)}{f''_H(x)}
        =
        \frac{\sqrt{2}\alpha\alb}{(\alpha-1/2)^2}\frac{(x+1)^{3/2}}{x}\sqrt{\alpha x+\alb}
        \geq
        \frac{\sqrt{2}\alpha^2\log(1/\alpha)}{(\alpha-1/2)^2}\frac{(x+1)^{3/2}}{x}\sqrt{\alpha x+\alb}
    \end{align*}
    It is enough to prove that $\frac{(x+1)^{3/2}}{x}\sqrt{\alpha x+\alb}\geq 1$ for all $x\geq0$. The expression diverges as $x\downarrow0$, so it remains to consider $x>0$.
    \begin{align*}
        \frac{(x+1)^{3/2}}{x}\sqrt{\alpha x+\alb}\geq 1
        \Leftrightarrow
        (x+1)^3(\alpha x+\alb)-x^2 \geq 0
    \end{align*}
    Define
    \begin{align*}
        F_\alpha(x)=(x+1)^3(\alpha x+\alb)-x^2.
    \end{align*}
    Direct differentiation gives
    \begin{align*}
        F_\alpha'(x)
        &=
        3(x+1)^2(\alpha x+\alb)
        +\alpha(x+1)^3-2x,\\
        F_\alpha''(x)
        &=
        6(x+1)(2\alpha x+1)-2
        \geq 4.
    \end{align*}
    Moreover, $F_\alpha'(0)=3-2\alpha\geq1$ and $F_\alpha(0)=\alb\geq0$. Hence $F_\alpha'(x)\geq0$ and $F_\alpha(x)\geq0$ for every $x\geq0$, as required.
\end{proof}

\subsubsection{Conclusion of the mixed Hellinger--Jensen--Shannon inequality}

\begin{proof}[Proof of \Cref{thm:mixed-hellinger-js}]
By the reduction above, it remains to prove $f(\alpha,\beta,p,q)\geq0$ for every $p,q\in[0,1]$ and every $\beta\in[0,1/2]$. For fixed $\alpha,p,q$, \Cref{lem:concavity-in-beta} shows that this function is concave in $\beta$ away from the removable point $\beta=\alpha$. The limiting value at that point is non-negative by \Cref{lem:ineq:beta=alpha}. The two endpoint estimates are \Cref{lem:ineq:beta=0} and \Cref{lem:ineq:beta=1/2}. Concavity then gives $f(\alpha,\beta,p,q)\geq0$ throughout the interval $\beta\in[0,1/2]$, and the binary reduction proves the theorem for arbitrary discrete $\cP$ and $\cQ$.
\end{proof}

\subsection{Post-SDPI constant for Bernoulli testing}\label{app:bernoulli-post-sdpi}

\begin{proof}[Proof of \Cref{lem:psdpi_ber}]
    We prove the result for the case where $\frac{\alpha+\beta}{2}\leq1/2$, for the other case the proof is exactly similar, and this is why there is a $\min\{\alpha+\beta,\alb+\betb\}$ in the final result. Based on the \cref{def:post_sdpi} we have
    \begin{align*}
        \eta(P_{Y|X},P_X)
        &=
        \sup_{\substack{X-Y-Z\\ s.t.\ P_{XY}=P_XP_{Y|X}}}\frac{\mi(X;Z)}{\mi(Y;Z)}
        =
        \sup_{\substack{X-Y-Z\\ s.t.\ P_{XY}=P_XP_{Y|X}}}
        \frac{\E_Z\dkl(P_{X|Z}\|\ber(\frac{1}{2}))}{\E_Z\dkl(P_{Y|Z}\|\ber(\frac{\alpha+\beta}{2}))}
    \end{align*}
    Now let $t(z)\triangleq P(Y=1|Z=z)$. Then we get
    \begin{align}
        \sup_{\substack{X-Y-Z\\ s.t.\ P_{XY}=P_XP_{Y|X}}}
        \frac{\E_Z\dkl(P_{X|Z}\|\ber(\frac{1}{2}))}{\E_Z\dkl(P_{Y|Z}\|\ber(\frac{\alpha+\beta}{2}))}
        &\overset{(i)}{=}
        \sup_{\substack{X-Y-Z\\ s.t.\ P_{XY}=P_XP_{Y|X}}}
        \frac{\E_Z\dkl\lp\frac{\alpha}{\alpha+\beta}t(z)+\frac{\alb}{\alb+\betb}\tb(z)\|\frac{1}{2}\rp}{\E_Z\dkl\lp t(z)\|\frac{\alpha+\beta}{2}\rp}
        \nonumber\\&\label{eq:sup_sdpi}
        \overset{(ii)}{=}
        \sup_{t\in[0,1]}\frac{\dkl\lp\frac{\alpha}{\alpha+\beta}t+\frac{\alb}{\alb+\betb}\tb\|\frac{1}{2}\rp}{\dkl\lp t\|\frac{\alpha+\beta}{2}\rp}
    \end{align}
    Where (i) is just rewriting the terms $P(X=1|Z)$ and $P(Y=1|Z)$ in terms of our new notation, and (ii) follows from the fact that $\frac{\sum_i a_i}{\sum_i b_i}\leq\sup_i\frac{a_i}{b_i}$ for $a_i,b_i\geq0$. But here, because we can choose $t(z)$ for every $z$, we get the equality. To compute \cref{eq:sup_sdpi}, we use \cref{fact:dkl_half} to simplify it.
    \begin{align*}
        \sup_{t\in[0,1]}\frac{\dkl\lp\frac{\alpha}{\alpha+\beta}t+\frac{\alb}{\alb+\betb}\tb\|\frac{1}{2}\rp}{\dkl\lp t\|\frac{\alpha+\beta}{2}\rp}
        =
        \sup_{t\in[0,1]}\frac{\dkl\lp\frac{1}{2}+\frac{\alpha-\beta}{(\alpha+\beta)(\alb+\betb)}(t-\frac{\alpha+\beta}{2})\|\frac{1}{2}\rp}{\dkl\lp t\|\frac{\alpha+\beta}{2}\rp}
        \asymp
        s^2\sup_{t\in[-\psi,\psib]}\frac{t^2}{\dkl(\psi+t\|\psi)}
    \end{align*}
    Where we define $s=\frac{\alpha-\beta}{(\alpha+\beta)(\alb+\betb)}$ and $\psi=\frac{\alpha+\beta}{2}$. First, we calculate a lower bound for this term by choosing specific value for $t$. Putting $t=\psib$ we get the lower bound $\frac{\psib^2}{\log(1/\psi)}$. So we have 
    \begin{align*}
        \sup_{t\in[-\psi,\psib]}\frac{t^2}{\dkl(\psi+t\|\psi)}
        \geq
        \frac{1}{4\log(1/\psi)}
    \end{align*}
    
    For the upper bound we consider two cases: 
    \vspace{0.1cm}\\
    \textbf{Case(I): $t\in[-\psi,0]$} From the Taylor expansion we have
    \begin{align}\label{ineq:sdpi_sup_up1}
        \dkl(\psi+t\|\psi)=\sum_{k=2}^\infty\frac{1}{k(k-1)}\lp\frac{(-1)^k}{\psi^{k-1}}+\frac{1}{\psib^{k-1}}\rp t^k
        \geq
        \frac{1}{2\psi\psib}t^2
        \geq
        \frac{1}{2\psi}t^2
        \geq
        \frac{\log(1/\psi)}{4}t^2
    \end{align}
    Where the inequality holds because we assumed $\psi\leq1/2$ and $t\leq0$.
    \vspace{0.1cm}\\
    \textbf{Case(II): $t\in[0,\psib]$}
    We claim that
    \begin{align}\label{ineq:sdpi_sup_up2_sharp}
        \sup_{t\in[0,\psib]}\frac{t^2}{\dkl(\psi+t\|\psi)}
        \leq
        \frac{4}{\log(1/\psi)}.
    \end{align}
    First consider the case $\psi\geq e^{-4}$. For this case we use the Taylor's remainder theorem to get
    \begin{align*}
        \dkl(\psi+t\|\psi)=\frac{1}{2x(1-x)}t^2
        \quad
        \text{for some }
        x\in[\psi,1]
        \Rightarrow
        \dkl(\psi+t\|\psi)
        \geq
        2t^2
        \geq
        \frac{\log(1/\psi)}{4}t^2
    \end{align*}
    So let us assume $\psi\leq e^{-4}$. Again we consider two cases
    \begin{itemize}
        \item $t\in[0,\frac{1}{log(1/\psi)}]$ Here we again use the Taylor's remainder theorem to get
        \begin{align*}
        \dkl(\psi+t\|\psi)=\frac{1}{2x(1-x)}t^2
        \quad
        \text{for some }
        x\in[\psi,\psi+\frac{1}{\log(1/\psi)}]
        \end{align*}
        Now by our choice $\psi+\frac{1}{\log(1/\psi)}\leq1/2$ so the minimum value happens at $x=\psi+\frac{1}{\log(1/\psi)}\leq1/2$ and so we get 
        \begin{align*}
        \dkl(\psi+t\|\psi)
        \geq
        \frac{1}{2(\psi+\frac{1}{\log(1/\psi)})(1-\psi-\frac{1}{\log(1/\psi)})}t^2
        \overset{(i)}{\geq}
        \frac{\log(1/\psi)}{4}t^2
        \end{align*}
        Where (i) follows from the fact that for $\psi\leq e^{-4}$ we have $\psi\log(1/\psi)\leq 1$ and also $1-\psi-\frac{1}{\log(1/\psi)}\leq 1$.
        
        \item $t\in[\frac{1}{\log(1/\psi)},\psib]$ For this case we expand the $\dkl(\psi+t\|\psi)$
        \begin{align}\label{ineq:sdpi_dkl_low}
            \dkl(\psi+t\|\psi)=(\psi+t)\log(\frac{\psi+t}{\psi})-(1-\psi-t)\log(\frac{1-\psi}{1-\psi-t})
            \geq
            t\log(\frac{t}{\psi})-t
        \end{align}
        Where the inequality follows from $1+x\leq e^x$. And as a result we get
        \begin{align*}
            \sup_{t\in[\frac{1}{\log(1/\psi)},\psib]}\frac{t^2}{\dkl(\psi+t\|\psi)}
            \leq
            \sup_{t\in[\frac{1}{\log(1/\psi)},\psib]}\frac{1}{\log(\frac{t}{\psi})-1}
            \leq
            \frac{1}{\log(1/\psi)-\log\log(1/\psi)-1}
            \leq
            \frac{4}{\log(1/\psi)}
        \end{align*}
        Where the first inequality follows from \cref{ineq:sdpi_dkl_low} and the fact that $t\leq1$, and the second inequality follows because $\log\log(1/\psi)+1\leq \frac{3}{4}\log(1/\psi)$ for $\psi\in[0,e^{-4}]$ as we prove here. We need to show that $\log(e\log(x))\leq\frac{3}{4}\log(x)$ for any $x\geq e^4$, which is equal to say $e\log(x)\leq x^{\frac{3}{4}}$. Moving both terms to the right hand side and taking derivative we get $1/x(3/4x^{3/4}-e)$, which is positive for $x\geq e^4$, so we only need to check the inequality for $x=e^4$, which is true as $4e\leq e^3$.
    \end{itemize}
    
    Putting together \cref{ineq:sdpi_sup_up1,ineq:sdpi_sup_up2_sharp} we have the following upper bound
    \begin{align*}
        \sup_{t\in[-\psi,\psib]}\frac{t^2}{\dkl(\psi+t\|\psi)}
        \leq
        \frac{4}{\log(1/\psi)}
    \end{align*}
    And combining the lower and upper bounds that we proved, we get the following
    \begin{align*}
        \frac{1}{2}\frac{(\alpha-\beta)^2}{(\alpha+\beta)^2(\alb+\betb)^2\log(\frac{2}{\alpha+\beta})}
        \leq
        \eta(P_{Y|X},\ber(1/2))
        \leq
        \frac{4\pi^2}{3}
        \frac{(\alpha-\beta)^2}{(\alpha+\beta)^2(\alb+\betb)^2\log(\frac{2}{\alpha+\beta})}
    \end{align*}
\end{proof}
    \begin{fact}\label{fact:dkl_half}
        For any $u\in[-1/2,1/2]$ we have 
        \begin{equation*}
            2u^2\leq\dkl(1/2+u\|1/2)\leq\frac{\pi^2}{3}u^2
        \end{equation*}
    \end{fact}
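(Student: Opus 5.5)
The plan is to write the divergence as an explicit power series in $u$ and compare it term by term. Logarithms are natural, as elsewhere in the paper. Set $x=2u\in[-1,1]$. Then
\begin{align*}
    \dkl(1/2+u\|1/2)
    =
    \frac{1}{2}\Bigl((1+x)\log(1+x)+(1-x)\log(1-x)\Bigr).
\end{align*}

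First, expand $(1+x)\log(1+x)$ using the Taylor series of $\log(1+x)$:
\begin{align*}
    (1+x)\log(1+x)
    =
    x+\sum_{n\geq2}\frac{(-1)^n x^n}{n(n-1)},
    \qquad |x|<1.
\end{align*}
Adding the same expression with $x$ replaced by $-x$ cancels the linear term and all odd powers, and doubles the even powers. Halving the result gives
\begin{align*}
    \dkl(1/2+u\|1/2)
    =
    \sum_{k\geq1}\frac{(2u)^{2k}}{2k(2k-1)}.
\end{align*}
The endpoints $|x|=1$ need a separate check. There the series still converges, since its terms are $O(1/k^2)$. By continuity of the left-hand side (or Abel's theorem), the identity extends, and one checks directly that both sides equal $\log 2$ at $u=\pm1/2$.

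Second, both bounds follow from this series, because every term is non-negative.

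\emph{Lower bound.} Keep only the $k=1$ term, which equals $(2u)^2/2=2u^2$. All other terms are non-negative, so $\dkl(1/2+u\|1/2)\geq 2u^2$.

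\emph{Upper bound.} Since $|2u|\leq1$, we have $(2u)^{2k}\leq(2u)^2$ for every $k\geq1$. Hence
\begin{align*}
    \dkl(1/2+u\|1/2)
    \leq
    4u^2\sum_{k\geq1}\frac{1}{2k(2k-1)}.
\end{align*}
The sum telescopes via $\frac{1}{2k(2k-1)}=\frac{1}{2k-1}-\frac{1}{2k}$, giving the alternating harmonic series, which equals $\log 2$. This yields the bound $4\log 2\cdot u^2\approx 2.77\,u^2$, which is at most $\frac{\pi^2}{3}u^2\approx 3.29\,u^2$.

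There is no serious obstacle; the only point needing care is the endpoint case $|u|=1/2$, handled as above. As an alternative route, one could use the crude estimate $\frac{1}{2k(2k-1)}\leq\frac{1}{2k^2}$. Summing gives $\sum_k\frac{1}{2k^2}=\frac{\pi^2}{12}$, so the coefficient becomes $4\cdot\frac{\pi^2}{12}=\frac{\pi^2}{3}$, which reproduces the stated constant exactly and is presumably how it arose.
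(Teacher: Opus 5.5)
Your proof is correct and follows the paper's argument. The paper uses the same even power series, written as $\sum_{k\geq1}\frac{2^{2k-1}}{k(2k-1)}u^{2k}$, with the $k=1$ term giving the lower bound and $(2u)^{2k}\leq(2u)^2$ giving the upper bound. The only difference is that you sum the coefficients exactly to $\log 2$, which gives the slightly sharper constant $4\log 2<\pi^2/3$. The paper instead bounds $\frac{1}{k(2k-1)}\leq\frac{1}{k^2}$ to reach $\pi^2/3$ directly, exactly as in your alternative route.
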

    \begin{proof}[Proof of \Cref{fact:dkl_half}]
        The proof follows from the Taylor expansion 
        \begin{align*}
            2u^2
            \leq
            \dkl(1/2+u\|1/2)=\sum_{k=1}^\infty\frac{2^{2k-1}}{k(2k-1)}u^{2k}
            \leq
            2u^2\sum_{k=1}^\infty\frac{1}{k^2}=\frac{\pi^2}{3}u^2
        \end{align*}
    \end{proof}

\subsection{Set-disjointness application}\label{app:set-disjointness-application}

    \bravermandisj*

\begin{proof}[Proof of \Cref{lem:brav_disj}]

    Suppose $Y_i\simiid\ber(1-\frac{1}{\sqrt{k}})$ for $i\in[k]$, and note that any protocol $\Pi$ that can compute $\bigwedge_{i\in[k]}b_i$ for any fixed $b=(b_1,\dots,b_k)\in\{0,1\}^k$ with some constant advantage $\delta\in(0,1)$, can also solve hypothesis testing between $\ber(1)$ and $\ber(1-\frac{1}{\sqrt{k}})$ with the probability of error at most $(1-\frac{1}{\sqrt{k}})^k+1/2(1-\delta)$ which is less than $1/2(1-\delta/2)$ for large enough $k$. As a result, we have the following lower bound using \cref{thm:info_comp_ber}

\begin{equation*}
    \mi(Y;\Pi(Y))\geq\frac{\kappa\log(k)}{16}\delta^2
\end{equation*}

Now the goal is to lower bound $\mi(X;\Pi(X)|Z)$ with $\mi(Y;\Pi(Y))$. Note that on average, we would have $\sqrt{k}$ zeros in $X_i$'s, so the intuition is conditioning on one of them shouldn't give that much information. Motivated by this intuition, define the event $A_i\triangleq\{Y_i=0\}$ for $i\in[k]$ and define $S=\sum_{i\in[k]}\mathbbm{1}_{A_i}$ to be the number of zeros in the vector $Y$. Now define the good event to be $G\triangleq\{S\geq\frac{\sqrt{k}}{2}\}$ and note that using Chernoff bound we have $\prob(G^c)\leq \exp(-\frac{\sqrt{k}}{8})$. We can write 

\begin{equation}\label{eq:mi_braverman_proof}
    \mi(Y;\Pi(Y))
    =
    \E_{Y}[\dkl(P_{\Pi|Y}\|P_{\Pi})\mathbbm{1}_{G}]+\E_{Y}[\dkl(P_{\Pi|Y}\|P_{\Pi})\mathbbm{1}_{G^c}]
\end{equation}

We start by showing that the second term is negligible
\begin{align}
    \E_{Y}[\dkl(P_{\Pi|Y}\|P_{\Pi})\mathbbm{1}_{G^c}]&
    =
    \prob(G^c)\E_{Y|G^c}[\dkl(P_{\Pi|Y}\|P_{\Pi})]
    \nonumber\\&=
    \prob(G^c)\mi(Y;\Pi(Y)|G^c)+\prob(G^c)\dkl(P_{\Pi|G^c}\|P_\Pi)
    \nonumber\\&\leq
    \prob(G^c)\mi(Y;\Pi(Y)|G^c)+\mi(\mathbbm{1}_G;\Pi(Y))
    \leq
    k e^{-\frac{\sqrt{k}}{8}}+\log(2)
    \leq
    2\log(2)\label{ineq:mi_braverman_i_upperbound}
\end{align}
Where the last inequality holds for large enough $k$. Now we upper-bound the first term in \cref{eq:mi_braverman_proof}. Note that on the event $G$ we have $1\leq\frac{2}{\sqrt{k}}S$
\begin{align}
    \E_{Y}[\dkl(P_{\Pi|Y}\|P_{\Pi})\mathbbm{1}_{G}]
    &\overset{(i)}{\leq}
    \frac{2}{\sqrt{k}}\sum_{i\in[k]}\E_{Y}[\dkl(P_{\Pi|Y}\|P_{\Pi})\mathbbm{1}_{A_i}]
    =
    \frac{2}{k}\sum_{i\in[k]}\E_{Y|A_i}\dkl(P_{\Pi|Y}\|P_{\Pi})
    \nonumber\\&=
    \underbrace{\frac{2}{k}\sum_{i\in[k]}\mi(Y;\Pi(Y)|A_i)}_{=2\mi(X;\Pi(X)|Z)} + \frac{2}{k}\sum_{i\in[k]}\dkl(P_{\Pi|A_i}\|P_{\Pi})
    \nonumber\\&\overset{(ii)}{\leq}
    2\mi(X;\Pi(X)|Z)+\frac{2}{\sqrt{k}}\sum_{i\in[k]}\mi(\mathbbm{1}_{A_i};\Pi(Y))
    \overset{(iii)}{\leq}
    2\lp\mi(X;\Pi(X)|Z)+\frac{1}{\sqrt{k}}\mi(Y;\Pi(Y))\rp\label{ineq:mi_braverman_ii_upperbound}
\end{align}
Where (i) follows from substituting $\mathbbm{1}_{G}$ with $\frac{2}{\sqrt{k}}S$, (ii) follows from the fact that condition on $A_i$, Y has the same distribution as $X$ condition on $Z=i$, and also the fact that $\prob(A_i)=\frac{1}{\sqrt{k}}$ so we can upper bound the second sum with the sum of mutual information by multiplying and dividing by $\sqrt{k}$, and (iii) follows from the fact that $\mathbbm{1}_{A_i}=1-Y_i$ and the fact that $Y_i$'s are independent so we have $\mi(Y;\Pi(Y))\geq\sum_{i\in[k]}\mi(Y_i;\Pi(Y))$.
So putting \cref{ineq:mi_braverman_i_upperbound,ineq:mi_braverman_ii_upperbound} together, for large enough $k$ we have 
\begin{equation*}
    \mi(Y;\Pi(Y))
    \leq
    2\lp\mi(X;\Pi(X)|Z)+\frac{1}{\sqrt{k}}\mi(Y;\Pi(Y))+\log(2)\rp
    \Rightarrow
    \mi(X;\Pi(X)|Z)
    \geq
    \frac{\kappa\delta^2}{128}\log(k)
\end{equation*}
\end{proof}

\end{document}